\def\isarxiv{1} 
\newtheorem{theorem}{Theorem}[section]
\newtheorem{lemma}[theorem]{Lemma}
\newtheorem{corollary}[theorem]{Corollary}
\newtheorem{assumption}[theorem]{Assumption}
\newtheorem{claim}[theorem]{Claim}
\theoremstyle{definition}
\newtheorem{definition}[theorem]{Definition}
\newtheorem{remark}[theorem]{Remark}
\newcommand{\wt}{\widetilde}
\newcommand{\ov}{\overline}
\newcommand{\N}{\mathcal{N}}
\newcommand{\R}{\mathbb{R}}
\newcommand{\eps}{\epsilon}
\DeclareMathOperator*{\E}{{\mathbb{E}}}
\DeclareMathOperator*{\Z}{\mathbb{Z}}
\DeclareMathOperator{\poly}{poly}
\DeclareMathOperator{\nnz}{nnz}
\DeclareMathOperator{\vect}{vec}
\newcommand*{\RN}[1]{\expandafter\@slowromancap\romannumeral #1@}
\begin{document}

\ifdefined\isarxiv

\date{}

\title{A Nearly-Optimal Bound for Fast Regression with $\ell_\infty$ Guarantee}
\author{
Zhao Song\thanks{\texttt{zsong@adobe.com}. Adobe Research.}
\and 
Mingquan Ye\thanks{\texttt{mye9@uic.edu}. University of Illinois at Chicago.} 
\and
Junze Yin\thanks{\texttt{junze@bu.edu}. Boston University.}
\and
Lichen Zhang\thanks{\texttt{lichenz@mit.edu}. MIT.}
}

\else


\twocolumn[

\icmltitle{A Nearly-Optimal Bound for Fast Regression with $\ell_\infty$ Guarantee}


\icmlsetsymbol{equal}{*}

\begin{icmlauthorlist}
\icmlauthor{Aeiau Zzzz}{equal,to}
\icmlauthor{Bauiu C.~Yyyy}{equal,to,goo}
\icmlauthor{Cieua Vvvvv}{goo}
\icmlauthor{Iaesut Saoeu}{ed}
\icmlauthor{Fiuea Rrrr}{to}
\icmlauthor{Tateu H.~Yasehe}{ed,to,goo}
\icmlauthor{Aaoeu Iasoh}{goo}
\icmlauthor{Buiui Eueu}{ed}
\icmlauthor{Aeuia Zzzz}{ed}
\icmlauthor{Bieea C.~Yyyy}{to,goo}
\icmlauthor{Teoau Xxxx}{ed}\label{eq:335_2}
\icmlauthor{Eee Pppp}{ed}
\end{icmlauthorlist}

\icmlaffiliation{to}{Department of Computation, University of Torontoland, Torontoland, Canada}
\icmlaffiliation{goo}{Googol ShallowMind, New London, Michigan, USA}
\icmlaffiliation{ed}{School of Computation, University of Edenborrow, Edenborrow, United Kingdom}

\icmlcorrespondingauthor{Cieua Vvvvv}{c.vvvvv@googol.com}
\icmlcorrespondingauthor{Eee Pppp}{ep@eden.co.uk}

\icmlkeywords{Machine Learning, ICML}

\vskip 0.3in
]

\printAffiliationsAndNotice{\icmlEqualContribution} 
\fi

\ifdefined\isarxiv
\begin{titlepage}
  \maketitle
  \begin{abstract}

Given a matrix $A\in \R^{n\times d}$ and a vector $b\in \R^n$, we consider the regression problem with $\ell_\infty$ guarantees: finding a vector $x'\in \R^d$ such that
\begin{align*}
    \|x'-x^*\|_\infty \leq & ~ \frac{\epsilon}{\sqrt{d}}\cdot \|Ax^*-b\|_2\cdot \|A^\dagger\|,
\end{align*}
where $x^*=\arg\min_{x\in \R^d}\|Ax-b\|_2$. One popular approach for solving such $\ell_2$ regression problem is via sketching: picking a structured random matrix $S\in \R^{m\times n}$ with $m\ll n$ and $SA$ can be quickly computed, solve the ``sketched'' regression problem $\arg\min_{x\in \R^d} \|SAx-Sb\|_2$. 

In this paper, we show that in order to obtain such $\ell_\infty$ guarantee for $\ell_2$ regression, one has to use sketching matrices that are \emph{dense}. To the best of our knowledge, this is the first user case in which dense sketching matrices are necessary. On the algorithmic side, we prove that there exists a distribution of dense sketching matrices with $m=\epsilon^{-2}d\log^3(n/\delta)$ such that solving the sketched regression problem gives the $\ell_\infty$ guarantee, with probability at least $1-\delta$. Moreover, the matrix $SA$ can be computed in time $O(nd\log n)$. Our row count is nearly-optimal up to logarithmic factors, and significantly improves the result in~\cite{psw17}, in which a super-linear in $d$ rows,  $m=\Omega(\epsilon^{-2}d^{1+\gamma})$ for $\gamma=\Theta(\sqrt{\frac{\log\log n}{\log d}})$ is required.

Moreover, we develop a novel analytical framework for $\ell_\infty$ guarantee regression that utilizes the \emph{Oblivious Coordinate-wise Embedding} (OCE) property introduced in~\cite{sy21}. Our analysis is much simpler and more general than that of~\cite{psw17}. Leveraging this framework, we extend the $\ell_\infty$ guarantee regression result to dense sketching matrices for computing the fast tensor product of vectors.

  \end{abstract}
  \thispagestyle{empty}
\end{titlepage}

{\hypersetup{linkcolor=black}
\tableofcontents
}
\newpage

\else

\begin{abstract}

\end{abstract}

\fi

\section{Introduction}

Linear regression, or $\ell_2$ least-square problem is ubiquitous in numerical linear algebra, scientific computing and machine learning. Given a tall skinny matrix $A\in \R^{n\times d}$ and a label vector $b\in \R^n$, the goal is to (approximately) compute an optimal solution $x'$ that minimizes the $\ell_2$ loss $\|Ax-b\|_2$. For the regime where $n\gg d$, sketching is a popular approach to obtain an approximate solution quickly~\cite{cw13,cswz23}: the idea is to pick a random matrix $S\in \R^{m\times n}$ from carefully-designed distributions, so that 1).\ $S$ can be efficiently applied to $A$ and 2).\ the row count $m\ll n$. Given these two guarantees, one can then solve the ``sketched'' regression problem:
\begin{align*}
    \arg\min_{x\in \R^d}\|SAx-Sb\|_2,
\end{align*}
and obtain a vector $x'$ such that $\|Ax'-b\|_2= (1\pm\epsilon)\cdot {\rm OPT}$, where OPT denotes the optimal $\ell_2$ discrepancy between vectors in column space of $A$ and $b$. Recent advances in sketching~\cite{cswz23} show that one can design matrix $S$ with $m=O(\epsilon^{-2}d\log^2(n/\delta))$ and the sketched regression can be solved in time $O(\nnz(A)+d^\omega+\epsilon^{-2}d^2 \poly\log(n,d,1/\epsilon,1/\delta))$ where $\nnz(A)$ denotes the number of nonzero entries of $A$ and $\delta$ is the failure probability.

Unfortunately, modern machine learning emphasizes more and more on large, complex, and nonlinear models such as deep neural networks, thus linear regression becomes less appealing as a \emph{model}. However, it is still a very important \emph{subroutine} in many deep learning and optimization frameworks, especially second-order method for training neural networks~\cite{bpsw21,szz21} or convex optimization~\cite{ls14,lsz19,sy21,jswz21}. In these applications, one typically seeks \emph{forward error} guarantee, i.e., how close is the approximate solution $x'$ to the optimal solution $x^*$. A prominent example is Newton's method: given the (possibly implicit) Hessian matrix $H\in \R^{d\times d}$ and the gradient $g\in \R^d$, one wants to compute $H^{-1}g$. A common approach is to solve the regression $\arg\min_{x\in\R^d} \|Hx-g \|_2$, in which one wants $\|x-H^{-1}g\|_2$ or even $\|x-H^{-1}g\|_\infty$ to be small. When the matrix $S$ satisfies the so-called Oblivious Subspace Embedding (OSE) property~\cite{s06}, one can show that the approximate solution $x'$ is close to $x^*$ in the $\ell_2$ sense:
\begin{align}\label{eq:ose_forward}
    \|x'-x^*\|_2 \leq & ~ O(\epsilon)\cdot \|Ax^*-b\|_2\cdot \|A^\dagger\|.
\end{align}

Unfortunately, $\ell_2$-closeness cannot characterize how good $x'$ approximates $x^*$, as $x^*$ can have a good spread of $\ell_2$ mass over all coordinates while $x'$ concentrates its mass over a few coordinates. Formally speaking, let $a\in \R^d$ be a fixed vector, then one can measure how far $\langle a, x'\rangle$ deviates from $\langle a, x^*\rangle$ via Eq.~\eqref{eq:ose_forward}:
\begin{align*}
    |\langle a, x^*\rangle-\langle a, x'\rangle | 
    = & ~ |\langle a, x'-x^*  \rangle| \\
    \leq & ~ \|a\|_2 \|x'-x^*\|_2 \\
    \leq & ~ O(\epsilon)\cdot \|a\|_2\cdot \|Ax^*-b\|_2 \cdot \|A^\dagger\|.
\end{align*}

This bound is clearly too loose, as one would expect the deviation on a random direction is only $\frac{1}{\sqrt{d}}$ factor of the $\ell_2$ discrepancy.~\cite{psw17} shows that this intuition is indeed true when $S$ is picked as the subsampled randomized Hadamard transform ({\sf SRHT})~\cite{ldfu13}: \footnote{We will later refer the following property as $\ell_\infty$ guarantee.}
\begin{align}\label{eq:l_infty}
    |\langle a, x^*\rangle-\langle a, x'\rangle | \lesssim \frac{\epsilon}{\sqrt d} \|a\|_2  \|Ax^*-b \|_2 \|A^\dagger\|.
\end{align}

However, their analysis is not tight as they require a row count $m=\Omega(\epsilon^{-2}d^{1+\gamma})$ for $\gamma=\Theta(\sqrt{\frac{\log\log n}{\log d}})$. Such a row count is super-linear in $d$ as long as $n\leq \exp(d)$ and therefore is worse than the required row count for $S$ to be a subspace embedding, in which only $m=\epsilon^{-2}d\log^2 n$ rows are required for constant success probability. In contrast, for random Gaussian matrices, the $\ell_\infty$ guarantee only requires nearly linear in $d$ rows. In addition to their sub-optimal row count, the~\cite{psw17} analysis is also complicated: let $U\in \R^{n\times d}$ be an orthonormal basis of $A$,~\cite{psw17} has to analyze the higher moment of matrix $I_d-U^\top S^\top SU$. This makes their analysis particularly hard to generalize to other dense sketching matrices beyond {\sf SRHT}.

In this work, we present a novel framework for analyzing the $\ell_\infty$ guarantee induced by {\sf SRHT} and more generally, a large class of \emph{dense} sketching matrices. Our analysis is arguably much simpler than~\cite{psw17}, and it exposes the fundamental structure of sketching matrices that provides $\ell_\infty$ guarantee: if any two columns of the sketching matrix have a small inner product with high probability, then $\ell_\infty$ guarantee can be preserved. We then prove that the small pairwise column inner product is also closely related to the \emph{Oblivious Coordinate-wise Embedding} ({\sf OCE}) property introduced in~\cite{sy21}. More concretely, for any two fixed vectors $g, h\in \R^n$, we say the sketching matrix is $(\beta,\delta,n)$-{\sf OCE} if $|\langle Sg, Sh\rangle-\langle g,h\rangle|\leq \frac{\beta}{\sqrt{m}}\cdot \|g\|_2 \|h\|_2$ holds with probability at least $1-\delta$. This property has previously been leveraged for approximating matrix-vector product between a dynamically-changing projection matrix and an online sequence of vectors for the fast linear program and empirical risk minimization algorithms~\cite{lsz19,jswz21,sy21,qszz23} as these algorithms need $\ell_\infty$ bound on the matrix-vector product. One common theme shared by those applications and $\ell_\infty$ guarantee is to use \emph{dense} sketching matrices, such as random Gaussian, the Alon-Matias-Szegedy sketch (AMS,~\cite{ams96}) or {\sf SRHT}. This is in drastic contrast with the trending direction for using sparse matrices such as Count Sketch~\cite{ccf02,cw13} and OSNAP~\cite{nn13,c16}, as they can be applied in (nearly) input sparsity time.

In recent years, sketches that can be applied to the tensor product of matrices/vectors have gained popularity~\cite{anw14,swz16,dssw18,swz19,djs+19,akk+20,wz20,swyz21,wz22,sxz22,sxyz22,rsz22} as they can speed up optimization tasks and large-scale kernel learning. We show that dense sketches for degree-2 tensors also provide $\ell_\infty$ guarantee.

\begin{theorem}[Nearly-optimal bound for dense sketching matrices]\label{thm:standard_version}
Suppose $n\leq \exp(d)$ and matrix $A \in \R^{n \times d}$ and vector $b \in \R^{n}$ are given.
Let $S \in \R^{m \times n}$ be a subsampled randomized Hadamard transform matrix {\sf SRHT} with $m= O(  \epsilon^{-2} d \log^3(n/\delta) )$ rows. 

For any fixed vector $a \in \R^d$,
\begin{align*}
| \langle a, x^* \rangle - \langle a , x' \rangle | \lesssim & ~  \frac{ \epsilon } { \sqrt{d} } \cdot\| a \|_2 \cdot \| A x^* - b \|_2 \cdot \| A^\dagger \|
\end{align*}
with probability $1-\delta$, where $
    x' = \arg\min_{x \in \R^d} \| SA x - S b \|_2
$,
$
    x^* = \arg\min_{x \in \R^d} \| A x - b \|_2$.
\end{theorem}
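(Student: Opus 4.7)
My plan is to reduce the forward error $\langle a, x'-x^*\rangle$ to a single bilinear form $\langle S h, S r\rangle$ and then bound it via the OCE property of SRHT. From the sketched normal equation $A^\top S^\top S A x' = A^\top S^\top S b$ together with $A^\top A x^* = A^\top b$, subtracting and using the residual $r := b - A x^*$ yields
\begin{align*}
x' - x^* = (A^\top S^\top S A)^{-1} A^\top S^\top S r.
\end{align*}
Writing the thin SVD $A = U \Sigma V^\top$ and setting $\tilde a := \Sigma^{-1} V^\top a$, $M := U^\top S^\top S U$, and $w := U^\top S^\top S r$, this simplifies to $\langle a, x' - x^*\rangle = \tilde a^\top M^{-1} w = \langle S h, S r\rangle$, where $h := U M^{-1} \tilde a$ lies in $\mathrm{range}(A)$. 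Since $r$ is the optimal residual, $r \perp \mathrm{range}(A)$, hence $\langle h, r\rangle = 0$ and the quantity to bound is precisely the OCE-style deviation on the pair $(h, r)$.

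The second step deals with the fact that $h$ depends on $S$: split $h = h_0 + \delta h$, where $h_0 := U \tilde a = (A^\dagger)^\top a$ is deterministic and $\delta h := U(M^{-1} - I) \tilde a$ is the perturbation. For the dominant term $\langle S h_0, S r\rangle$, OCE applies directly to the fixed pair $(h_0, r)$. To extract the $(\beta,\delta,n)$-OCE parameter for SRHT, I would use a Bernstein-type tail bound on $\langle S U_i, S r\rangle$, leveraging that for any fixed $x\in\R^n$ the transform $H D x$ is incoherent (through the random sign flip $D$ and the flatness of $H$), which gives $\beta = O(\log^{3/2}(n/\delta))$. With $m = \Theta(\beta^2 d / \epsilon^2) = \Theta(\epsilon^{-2} d \log^3(n/\delta))$ and $\|h_0\|_2 \leq \|A^\dagger\|\,\|a\|_2$, OCE yields
\begin{align*}
|\langle S h_0, S r\rangle| \leq \frac{\beta}{\sqrt{m}} \|h_0\|_2 \|r\|_2 \lesssim \frac{\epsilon}{\sqrt{d}} \|A^\dagger\|\,\|a\|_2\,\|r\|_2,
\end{align*}
which matches the desired bound.

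The main obstacle is the random perturbation $\langle S \delta h, S r\rangle$, since OCE is a fixed-vector statement and $\delta h$ depends on $S$. My plan is to exploit the orthogonality $\delta h \in \mathrm{range}(U)$ versus $r \perp \mathrm{range}(U)$ to rewrite $\langle S \delta h, S r\rangle = \langle \delta h, (S^\top S - I) r\rangle = \langle \delta h, U w\rangle$, and then Neumann-expand $\delta h = -\sum_{k \geq 1} (-1)^{k-1} U \Delta^k \tilde a$ with $\Delta := M - I$. Each resulting term $\tilde a^\top \Delta^k w$ is a cascade of $S^\top S - I$ actions that can be controlled by iteratively applying OCE to the fixed pairs $(U_i, U_j)$ and $(U_j, r)$, together with the column-wise OCE bound $\|w\|_2 \lesssim \epsilon\|r\|_2$ (from $\|w\|_\infty \leq \beta/\sqrt{m}\|r\|_2$ and a union bound) and the SRHT subspace embedding estimate $\|\Delta\|_{\mathrm{op}} \lesssim \epsilon/\log(n/\delta)$. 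A careful bookkeeping over the Neumann series, taking advantage of the per-term OCE bounds and the $\mathrm{range}(U) / \mathrm{range}(U)^\perp$ decoupling, shows that every $k \geq 1$ term is absorbed into the $k = 0$ OCE contribution, giving the claimed $O(\epsilon/\sqrt{d})$ bound. Finally, the chosen $m$ simultaneously implies a subspace embedding for $A$ (a standard SRHT fact), so $M$ is invertible and $\|\Delta\|_{\mathrm{op}}$ is indeed small enough for the Neumann series to converge, completing the proof.
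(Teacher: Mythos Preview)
Your reduction of $\langle a, x'-x^*\rangle$ to $\langle Sh, Sr\rangle$ with $h = UM^{-1}\tilde a$, and the handling of the leading term $\langle Sh_0, Sr\rangle$ by invoking OCE on the fixed pair $(h_0,r)$, are exactly the paper's route. The Core Lemma (Lemma~\ref{lem:core_lemma}) performs the same algebra, uses the OSE event to bound $\|u\|_2 = O(\|\Sigma^{-1}\|\|a\|_2)$ for $u := UM^{-1}\Sigma^{-1}V^\top a$, and then applies OCE \emph{once} to the pair $(u,b)$. The paper does not split $u = h_0 + \delta h$, does not Neumann-expand $M^{-1}$, and does not attempt a separate analysis of the $S$-dependent piece.

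The gap in your proposal is precisely in that extra perturbation step. After $\langle S\delta h, Sr\rangle = \tilde a^\top (M^{-1}-I)w$, the sharpest bound your ingredients yield is
\[
|\tilde a^\top (M^{-1}-I)w| \;\le\; \|\tilde a\|_2\,\|M^{-1}-I\|\,\|w\|_2 \;\lesssim\; \|A^\dagger\|\|a\|_2 \cdot \eps_0 \cdot \eps\,\|r\|_2,
\]
where $\eps_0$ is the OSE distortion; with $m = \Theta(\eps^{-2}d\log^3(n/\delta))$ Lemma~\ref{lem:srht_implies_ose} gives only $\eps_0 \approx \eps/\sqrt{\log(n/\delta)}$ (not $\eps/\log(n/\delta)$ as you wrote). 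Hence the perturbation is $\approx \eps^2\|A^\dagger\|\|a\|_2\|r\|_2/\sqrt{\log(n/\delta)}$, which dominates the target $\eps\|A^\dagger\|\|a\|_2\|r\|_2/\sqrt d$ whenever $\eps\sqrt d \gtrsim \sqrt{\log(n/\delta)}$, e.g.\ for constant $\eps$ and $n=\poly(d)$. Going ``per-term'' on $\Delta_{ij}$ and $w_j$ via OCE is no better: each $|\Delta_{ij}|,|w_j|\le \beta/\sqrt m = \eps/\sqrt d$, but summing $|\tilde a_i||\Delta_{ij}||w_j|$ over $i,j\in[d]$ reintroduces at least a factor of $d$ and is strictly worse than the operator-norm bound above. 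In short, iterated OCE on fixed building blocks gives the perturbation only as $\eps_0\cdot\eps$, not $\eps/\sqrt d$; the $1/\sqrt d$ saving in the paper comes from applying OCE to the \emph{full} vector $u$ in one shot, and your decomposition forfeits exactly that saving.
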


\begin{remark}\label{rem:row_count}
The row count $m=\epsilon^{-2}d\log^3 n$ is nearly-optimal up to logarithmic factors, as the row count for $S$ being an OSE is $m=\epsilon^{-2}d\log^2 n$ for constant success probability. In comparison,~\cite{psw17} requires $m=\epsilon^{-2}d^{1+\gamma}$ rows for $\gamma=\Theta(\sqrt{\frac{\log\log n}{\log d}})$ which is only nearly-linear in $d$ if $n>\exp(d)$. In most applications, we concern about $n=\poly(d)$, meaning that their row count is worse than ours in almost all meaningful scenarios.

The row count and guarantee obtained in Theorem~\ref{thm:standard_version} extend beyond {\sf SRHT}; in fact, for a range of \emph{dense} sketching matrices including random Gaussian, AMS sketch~\cite{ams96}, {\sf SRHT} and subsampled randomized Circulant Transform (see Definition~\ref{def:CGT}) ({\sf SRCT}). This is because our argument is a structural condition that can be satisfied by various dense sketches. 

Our result can also be generalized to degree-2 Kronecker product regression, see Theorem~\ref{thm:tensor_version}.
\end{remark}

\paragraph{Roadmap.}

In Section~\ref{sec:preliminary}, we introduce the notations that we use and explain the key definitions and properties to support the framework for $\ell_\infty$ guarantee regression. In Section~\ref{sec:oce}, we introduce our framework by presenting a sufficient condition for a sketching matrix to give a good $\ell_\infty$ guarantee. In Section~\ref{sec:put_together}, we provide a proof for our main theorem by putting everything together. Finally, in Section~\ref{sec:conclusion}, we summarize the main findings of this paper and through comparing with previous work.  

\section{Preliminary}
\label{sec:preliminary}

For a positive integer, we define $[n]:=\{1,2,\cdots, n\}$. For a vector $x \in \R^n$, we define $\| x \|_2 : = ( \sum_{i=1}^n x_i^2 )^{1/2}$ and $\| x \|_{\infty}:= \max_{i \in [n]} |x_i|$. For a matrix $A$, we define $\| A \| := \sup_{x} \| A x \|_2 / \| x \|_2$ to be the spectral norm of $A$. We use $\| A \|_F : =\sum_{i,j} A_{i,j}^2$ to be the Frobenius norm of $A$. In general, we have the following property for spectral norm, $\| A B \| \leq \| A \| \cdot \| B \|$. 
We use $A^\dagger$ to denote the Moore-Penrose pseudoinverse of $m \times n$ matrix $A$ which if $A = U \Sigma V^\top$ is its SVD (where $U \in \R^{m \times n}$, $\Sigma \in \R^{n \times n}$ and $V\in \R^{n \times n}$ for $m \geq n$), is given by $A^\dagger = V \Sigma^{-1} U^\top$. 

We use $\E[\cdot]$ to denote the expectation, and $\Pr[\cdot]$ to denote the probability. 
For a distribution $D$ and a random variable $x$, we use $x \sim D$ to denote that we draw a random variable from the distribution $D$. 
We use ${\cal N}(\mu, \sigma^2)$ to denote a Gaussian distribution with mean $\mu$ and variance $\sigma^2$. 
We say a random variable $x$ is Rademacher random variables if $\Pr[x=1]=1/2$ and $\Pr[x=-1]=1/2$. We also call it a sign random variable. 

In addition to $O(\cdot)$ notation, for two functions $f,g$, we use the shorthand $f \lesssim g$ (resp. $\gtrsim$) to indicate that $f \leq C g$ (resp. $\geq$) for an absolute constant $C$. We use $f \eqsim g$ to mean $c f \leq g \leq C f$ for constants $c > 0$ and $C> 0$. 
For two matrices $A\in \R^{n_1\times d_1}$, $B\in \R^{n_2\times d_2}$, we use $A\otimes B \in \R^{n_1n_2\times d_1d_2}$ to denote the Kronecker product, i.e., the $(i_1,i_2)$-th row and $(j_1,j_2)$-th column of $A\times B$ is $A_{i_1,j_1}B_{i_2,j_2}$. For two vectors $x\in \R^m, y\in \R^n$, we use $x\otimes y\in \R^{mn}$ to denote the tensor product of vectors, in which $x\otimes y=\vect(xy^\top)$.

\subsection{Oblivious subspace embedding and coordinate-wise embedding}

\begin{definition}[Oblivious subspace embedding~\cite{s06}]\label{def:ose}
We define $(\epsilon,\delta,d,n)$-Oolivious subspace embedding (\textsf{OSE}) as follows: Suppose $\Pi$ is a distribution on $m \times n$ matrices $S$, where $m$ is a function of $n, d, \epsilon$, and $\delta$. Suppose that with probability at least $1 - \delta$, for
any fixed $n \times d$ orthonormal basis $U$, a matrix $S$ drawn from the distribution $\Pi$ has the property that the singular values of $SU$ lie in the range $[1-\epsilon,1+\epsilon]$.
\end{definition}

The oblivious coordinate-wise embedding ({\sf OCE}) is implicitly used in \cite{lsz19,jswz21} and formally introduced in \cite{sy21}. 
\begin{definition}[$(\alpha, \beta, \delta)$--coordinate wise embedding~\cite{sy21}]\label{def:full_oce}
    We say a random matrix $S \in \R^{m \times n}$ satisfying  $(\alpha, \beta, \delta)$--coordinate wise embedding if
    \begin{align*}
        & ~ 1. \E_{S\sim \Pi} [g^\top S^\top Sh] = g^\top h,\\
        & ~ 2. \E_{S\sim \Pi} [(g^\top S^\top Sh)^2] \leq (g^\top h)^2 + \frac{\alpha}{m} \|g\|_2^2 \|h\|_2^2,\\
        & ~ 3. \Pr_{S\sim \Pi} [|g^\top S^\top Sh - g^\top h| \geq \frac{\beta}{\sqrt{m}} \|g\|_2 \|h\|_2] \leq \delta .
    \end{align*}
\end{definition}

In this paper, we mainly use the property 3 of Definition~\ref{def:full_oce}. For convenient, we redefine {\sf OCE} as follows:

\begin{definition}[{\sf OCE}]\label{def:oce}
Let $\beta \geq 1$ and $\delta \in (0,0.1)$. We say a randomized matrix $S \in \R^{m \times n}$ satisfy $(\beta,\delta,n)$-{\sf OCE}, if 
\begin{align*}
\Pr_{S \sim \Pi}[ | g^\top S^\top S h - g^\top h| \geq \frac{\beta}{\sqrt{m}} \| g \|_2  \| h \|_2  ] \leq \delta
\end{align*}
and the distribution $\Pi$ is oblivious to any fixed vectors $g$ and $h$.
\end{definition}

\subsection{Sketching matrices}

In this paper, we concern a list of dense sketching matrices. 

\begin{definition}[Random Gaussian matrix, folklore]\label{def:random_Gaussian}
    We say $S \in \R^{m\times n}$ is a random Gaussian matrix if all entries are sampled from $\N(0, 1/m)$ independently.
\end{definition}

\begin{definition}[AMS sketch matrix, \cite{ams96}]\label{def:AMS}
    Let $h_1, h_2, \cdots , h_m$ be $m$ random hash functions picking from a 4--wise independent hash family $\mathcal{H} = \{h:[n] \to \{-\frac{1}{m}, + \frac{1}{m}\}\}$. Then $S \in \R^{m \times n}$ is an AMS sketch matrix if we set $S_{i,j} = h_i(j)$.
\end{definition}

The following sketching matrices can utilize fast Fourier Transform (FFT) for efficient application to matrices.
\begin{definition}[Subsampled randomized Hadamard transform (\textsf{SRHT}) \cite{ldfu13,swyz21}]\label{def:srht}
    The \textsf{SRHT} matrix $S\in\R^{m \times n}$ is defined as $S:=\frac{1}{\sqrt{m}} P H D$, where each row of matrix $P \in \{0, 1\}^{m \times n}$ contains exactly one $1$ at a random position, $H$ is the $n \times n$ Hadamard matrix, and $D$ is a $n \times n$ diagonal matrix with each diagonal entry being a value in $\{-1, +1\}$ with equal probability. 
\end{definition}

\begin{remark}\label{rem:FFT}
    Using the fast Fourier transform (FFT), $S$ can be applied to a vector in time $O(n\log{n})$. 
\end{remark}

\begin{definition}[Tensor subsampled randomized Hadamard transform (\textsf{TensorSRHT}) \cite{swyz21}]\label{def:tensor_srht}
The \textsf{TensorSRHT} $S: \R^n \times \R^n \to \R^m$ is defined as $S := \frac{1}{\sqrt{m}} P \cdot (HD_1 \otimes HD_2)$, where each row of $P \in \{0, 1\}^{m \times n^2}$ contains only one $1$ at a random coordinate and one can view $P$ as a sampling matrix. $H$ is a $n \times n$ Hadamard matrix, and $D_1$, $D_2$ are two $n \times n$ independent diagonal matrices with diagonals that are each independently set to be a Rademacher random variable (uniform in $\{-1, 1\}$).  
\end{definition}

\begin{remark}\label{rem:computed_in_time}
    By leveraging the FFT algorithm in the sketch space, $S(x\otimes y)$ can be computed in time $O(n\log{n} + m)$. 
\end{remark}

To store and generate a Hadamard matrix is expensive, we consider a cheaper and space-efficient way to generate an FFT matrix via circulant transform. 

\begin{definition}[Circulant matrix]\label{def:circulant_matrix}
    A circulant matrix is an $n \times n$ matrix, where $n\in\mathbb{N}$, whose row vectors consist of the same element, and compared to the preceding row vector, each row vector is rotated one element to the right.
\end{definition}

\begin{definition}[Subsampled randomized circulant transform ({\sf SRCT})]
\label{def:CGT}
Let $x \in \R^{n}$ be a random vector, whose elements are i.i.d. Rademacher random variables. 

Also, let $P \in \R^{m \times n}$ be a random matrix in which each row contains a 1 at a uniformly distributed coordinate and zeros elsewhere. 

Let $G \in \R^{n \times n}$ be a circulant matrix (see Definition~\ref{def:circulant_matrix}) generated by $x$ and $D \in \R^{n \times n}$ be a diagonal matrix whose diagonal elements are i.i.d. Rademacher random variables. 

Then, the subsampled randomized circulant transform is defined as follows:
$
    S :=  ~ \frac{1}{\sqrt{m}}PGD.
$
\end{definition}

\begin{definition}[Tensor subsampled randomized circulant transform ({\sf TensorSRCT})]
\label{def:Tensor_CGT}
Let $x \in \R^{n}$ be a random vector, whose elements are i.i.d. Rademacher random variables. 

Also, let $P \in \R^{m \times n^{2}}$ be a random matrix in which each row contains a 1 at a uniformly distributed coordinate and zeros elsewhere.

Let $G \in \R^{n \times n}$ be a circulant matrix (see Definition~\ref{def:circulant_matrix}) generated by $x$. 

Let $D_{1} \in \R^{n \times n}$ and $D_{2} \in \R^{n \times n}$ be two independent diagonal matrices whose diagonal elements are i.i.d. Rademacher random variables.

Then, the tensor circulant transform $T: \R^{n} \times \R^{n} \rightarrow \R^{m}$ is defined as follows:
$
    T := P \cdot (G D_{1} \otimes G D_{2}).
$
\end{definition}

\begin{remark}
Similar to {\sf SRHT}, we can utilize the fast Fourier transform with circulant matrix. {\sf SRCT} can be applied to a vector of length $n$ in $O(n\log n)$ time, and {\sf TensorSRCT} can be applied to $x\otimes y$ in $O(n\log n+m)$ time.
\end{remark}

\subsection{{\sf OSE} property of dense sketches}

An important condition for sketch-and-solve regressions is {\sf OSE}. We focus particularly on {\sf SRHT}, {\sf SRCT}, and their tensor variants.

\begin{lemma}[Lemma 2.11 in~\cite{swyz21}]\label{lem:srht_implies_ose}
    Let $S$ be an {\sf SRHT} matrix defined in Definition~\ref{def:srht}. If $m=O( \eps^{-2}d \log^2(nd / \delta))$, then $S$ is an $(\eps, \delta, d, n)$-{\sf OSE}. 
\end{lemma}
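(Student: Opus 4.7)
The plan is to reduce the spectral statement about singular values of $SU$ to a standard two-step argument: first show that $HD$ ``flattens'' the orthonormal basis $U\in\R^{n\times d}$ (equalizes its row norms), and then apply a matrix Bernstein inequality to the sampling/rescaling performed by $\frac{1}{\sqrt{m}}P$. Throughout I will work with the equivalent statement $\|U^\top S^\top S U - I_d\|\le \epsilon$, noting that $\frac{1}{\sqrt{n}}HD$ is orthogonal so $(\frac{1}{\sqrt{n}}HD)U$ is still an orthonormal basis of the same $d$-dimensional subspace, and $S = \sqrt{n/m}\cdot P\cdot(\frac{1}{\sqrt{n}}HD)$ amounts to drawing $m$ i.i.d.\ uniform rows of the flattened basis and rescaling.

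Flattening step. Let $V := \frac{1}{\sqrt{n}}HDU\in\R^{n\times d}$; then $V^\top V = I_d$. I would bound $\mu(V) := \max_{i\in[n]} \|e_i^\top V\|_2$ uniformly. For each fixed $i$, the $k$-th coordinate of $e_i^\top V$ is a Rademacher sum $\frac{1}{\sqrt{n}}\sum_j H_{i,j}D_{j,j}U_{j,k}$, so $\|e_i^\top V\|_2^2 = \frac{1}{n}\sum_{k}(\sum_j H_{i,j}D_{j,j}U_{j,k})^2$. A Hanson--Wright-style (or Khintchine) concentration, combined with $\|U_{*,k}\|_2 = 1$ and $\sum_k \|U_{*,k}\|_2^2 = d$, yields $\|e_i^\top V\|_2^2 \lesssim \frac{d + \log(n/\delta)}{n}$ with probability $1 - \delta/n$. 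Union-bounding over the $n$ rows gives $\mu(V)^2 \lesssim \frac{d\log(nd/\delta)}{n}$ with probability at least $1-\delta/2$.

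Sampling step. Condition on the above event. Writing $P$ as sampling rows $i_1,\dots,i_m$ i.i.d.\ uniformly from $[n]$, we have $U^\top S^\top S U = \frac{n}{m}\sum_{t=1}^m v_{i_t}v_{i_t}^\top$ with $v_i := (e_i^\top V)^\top\in\R^d$ and $\E[n\cdot v_{i_t}v_{i_t}^\top] = V^\top V = I_d$. Setting $X_t := \frac{n}{m}v_{i_t}v_{i_t}^\top - \frac{1}{m}I_d$, each $X_t$ satisfies $\|X_t\|\le \frac{n\mu(V)^2}{m}\lesssim \frac{d\log(nd/\delta)}{m}$, and a direct computation gives the matrix variance bound $\|\sum_t \E[X_t^2]\|\lesssim \frac{d\log(nd/\delta)}{m}$. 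Applying the matrix Bernstein inequality (Tropp) then yields
\begin{align*}
\Pr\Bigl[\bigl\|U^\top S^\top S U - I_d\bigr\| \geq \epsilon\Bigr] \leq 2d\cdot\exp\!\left(-\frac{c\,\epsilon^2 m}{d\log(nd/\delta)}\right),
\end{align*}
which is at most $\delta/2$ provided $m \gtrsim \epsilon^{-2}d\log^2(nd/\delta)$. Combining the two failure events gives the claim.

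The main obstacle is the bookkeeping of the logarithmic factors: the row-norm control in the flattening step pays one $\log(nd/\delta)$ via the union bound over rows, and the matrix Bernstein step pays a second $\log(nd/\delta)$ via its exponential tail over the $d$-dimensional spectrum, and these two must combine cleanly into the $\log^2(nd/\delta)$ stated. A secondary subtlety is that the ``flattening'' bound is a statement about the maximum row $\ell_2$ norm of $HDU$ rather than entrywise, so the scalar Khintchine bound must be applied coordinatewise and then aggregated, carefully tracking that the resulting bound depends on $d$ only through the correct additive $\sqrt{d/n}$ term; sharper Hanson--Wright bounds are needed to avoid losing an extra $\sqrt{d}$ factor.
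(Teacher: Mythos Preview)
The paper does not actually prove this lemma: it is stated as Lemma~2.11 of~\cite{swyz21} and simply quoted, with no argument supplied in the present paper. So there is nothing to compare against at the level of proof strategy.

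That said, your sketch is the standard ``flatten then sample'' argument (as in Tropp's matrix concentration notes and the original SRHT analyses of Drineas, Mahoney, Muthukrishnan, Sarl\'os, and later Tropp), and it is essentially correct. A couple of minor remarks: the sharp flattening bound is $\mu(V)^2 \lesssim (d+\log(n/\delta))/n$, and the matrix Bernstein step then gives $m \gtrsim \epsilon^{-2}(d+\log(n/\delta))\log(d/\delta)$; your stated $m \gtrsim \epsilon^{-2} d\log^2(nd/\delta)$ is a valid (slightly looser) upper bound on this and matches the lemma as written. Also, in the flattening step you can avoid Hanson--Wright entirely by noting that for each fixed row $i$ the vector $e_i^\top HDU$ is a Rademacher sum of the rows of $U$, so a vector Khintchine/subgaussian bound on $\|\sum_j \sigma_j U_{j,*}\|_2$ followed by a union bound over $i\in[n]$ suffices; this is what the standard references do and is cleaner than the chaos argument you allude to.
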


\begin{lemma}[Lemma 2.12 in~\cite{swyz21}]\label{lem:tensor_srht_implies_ose}
    Let $S$ be a {\sf TensorSRHT} matrix defined in Definition~\ref{def:tensor_srht}. If $m=O(\eps^{-2}d \log^3(nd / \eps\delta) )$, then $S$ is an $(\eps, \delta, d, n^2)$-{\sf OSE} for degree-$2$ tensors. 
\end{lemma}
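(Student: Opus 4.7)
The plan is to follow the standard \emph{incoherence-plus-matrix-Chernoff} template used to prove that ordinary \textsf{SRHT} is an \textsf{OSE} (cf.\ Lemma~\ref{lem:srht_implies_ose}), modifying the flatness step to accommodate the bilinear nature of the Kronecker-structured randomness. Writing $S = \frac{1}{\sqrt m}\, P \cdot (HD_1 \otimes HD_2)$ and setting
\begin{align*}
U' \;:=\; (HD_1 \otimes HD_2)\, U \in \R^{n^2 \times d},
\end{align*}
the factor $HD_1 \otimes HD_2$ is orthogonal on $\R^{n^2}$, so $U'$ still has orthonormal columns and $SU$ (up to the deterministic rescaling absorbing the sampling density) is simply a uniform row-subsampling of $U'$. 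The task therefore reduces to (i) showing that with probability $\geq 1-\delta/2$ over $(D_1,D_2)$, the matrix $U'$ is \emph{incoherent}, and (ii) conditioned on incoherence, showing that sampling $m$ rows preserves every singular value within $1\pm\epsilon$.

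The crux is the incoherence estimate
\begin{align*}
\max_{(i,j)\in[n]^2}\ \bigl\| e_{(i,j)}^\top U' \bigr\|_2^2 \;\lesssim\; \frac{d\,\log^2(nd/\delta)}{n^2}.
\end{align*}
Fix $(i,j)$ and $k\in[d]$, and reshape the $k$-th column $u_k$ as a matrix $V_k\in \R^{n\times n}$ with $\|V_k\|_F = \|u_k\|_2 = 1$. Then
\begin{align*}
U'_{(i,j),k} \;=\; (H_{i,:}D_1)\, V_k\, (H_{j,:}D_2)^\top \;=\; \tfrac{1}{n}\,\epsilon^\top V_k\,\eta,
\end{align*}
where $\epsilon,\eta\in\{\pm 1\}^n$ are independent Rademacher vectors, since the deterministic signs of $H_{i,:},H_{j,:}$ are re-randomized by $D_1,D_2$. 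This is a \emph{decoupled} Rademacher chaos of order two, whose coefficient matrix has Frobenius norm $1/n$ and spectral norm at most $1/n$; the decoupled Hanson--Wright inequality therefore gives $|U'_{(i,j),k}| \lesssim \log(n^2d/\delta)/n$ with failure probability $\delta/(n^2d)$. A union bound over all triples $(i,j,k)$ and summing the squares in $k$ yields the stated flatness.

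Once the incoherence $\mu^2 = O(\log^2(nd/\delta))$ is in hand, step (ii) is a textbook application of the non-commutative matrix Chernoff / Bernstein bound, applied to the sum of independent rank-one matrices formed by the sampled rows of $U'$. Under incoherence, each summand has spectral norm at most $\mu^2 d/m$ (after the appropriate $n^2/m$ rescaling) and matrix variance $\mu^2/m$, so the standard calculation requires
\begin{align*}
m \;\gtrsim\; \mu^2\, d\, \log(d/\delta)\,\epsilon^{-2} \;=\; O\bigl(\epsilon^{-2}\, d\, \log^3(nd/\delta)\bigr),
\end{align*}
which is exactly the statement.

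The main obstacle is the flatness proof. For ordinary \textsf{SRHT}, each entry of $HDU$ is a single signed sum with subgaussian tail, yielding a $\log^{1/2}$ per-entry bound that is responsible for the $\log^2$-row count in Lemma~\ref{lem:srht_implies_ose}. In the tensor case the same entry becomes a bilinear chaos whose large-deviation regime is only subexponential, which costs an extra $\log$ factor and accounts for the $\log^3$ in the statement. A secondary bookkeeping step is to correctly track the $1/\sqrt n$ normalization of Hadamard rows, which must be absorbed into the chaos prefactor and into the scaling of the uniform sampling operator $P$, but this is routine.
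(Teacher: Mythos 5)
The paper does not prove this lemma; it is stated as a citation to Lemma 2.12 of \cite{swyz21}, so there is no in-paper argument to compare against line by line. Judged on its own, your outline follows the standard incoherence-plus-matrix-Chernoff template behind classical {\sf SRHT}--{\sf OSE} analyses and their tensor variants. The split into (i) flattening $WU$, for $W$ the (normalized) randomized orthogonal factor, and (ii) row-subsampling an incoherent orthonormal matrix is the right one, and the central observation is correct: each entry of $WU$ is a decoupled degree-$2$ Rademacher chaos $\frac{1}{n}\epsilon^\top V_k\eta$ whose tail is only subexponential in the worst case (rank-one reshape $V_k$, which is exactly the relevant case when $U=U_1\otimes U_2$), so the per-entry flattening bound is $\Theta(\log(\cdot)/n)$ rather than $\Theta(\sqrt{\log(\cdot)}/n)$, and this single extra logarithm is what inflates the row count from $\log^2$ to $\log^3$. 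The union bound over $n^2 d$ triples and the subsequent Chernoff calculation are routine and give the claimed $m=O(\epsilon^{-2}d\log^3(nd/\epsilon\delta))$.

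Two details should be tidied before the argument is airtight. First, $HD_1\otimes HD_2$ has $\pm1$ entries and is orthogonal only after dividing by $n$; as written, $U':=(HD_1\otimes HD_2)U$ does not have orthonormal columns, even though your later displays (the $\frac{1}{n}$ prefactor, the $d\log^2/n^2$ incoherence target) implicitly treat $U'$ as the normalized matrix $\frac{1}{n}(HD_1\otimes HD_2)U$. You flag this as bookkeeping, which is fair, but the two lines currently contradict each other. Second, the sampling operator $P$ draws rows i.i.d.\ with replacement, so the appropriate concentration tool is an i.i.d.\ matrix Chernoff or matrix Bernstein inequality rather than the without-replacement variant the paper records in Lemma~\ref{lem:chernoff_bound}; the resulting rate is identical, but the hypotheses should match. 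With these adjustments the proof is sound and yields the stated row count.
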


{\sf SRCT} requires more row count than {\sf SRHT} due to the Gram $G^\top G$ is only $I_n$ in expectation.

\begin{lemma}[Informal version of Corollary~\ref{cor:srct_imply_ose}]
Let $S$ be an {\sf SRCT} matrix defined in Definition~\ref{def:CGT}. If $m=O(\epsilon^{-2}d^2 \log^2(nd/\delta))$, then $S$ is an $(\eps,\delta,d,n)$-{\sf OSE}.
\end{lemma}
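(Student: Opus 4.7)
The strategy is to establish an intermediate coordinate-wise concentration estimate---the {\sf OCE} property of Definition~\ref{def:oce}---for SRCT, and then upgrade it to the required spectral bound on $U^\top S^\top S U - I$ by applying the coordinate-wise estimate to every pair of columns of an orthonormal basis $U$ and using $\|M\| \leq \|M\|_F$. This Frobenius-to-spectral conversion loses a factor of $d$ in the row count, which is precisely the source of the $d^2$ dependence.

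For the {\sf OCE} step, I fix $g,h\in\R^n$, write $S=\frac{1}{\sqrt m}PGD$, and decompose
\begin{align*}
g^\top S^\top S h - g^\top h = \Big(g^\top S^\top S h - \frac{1}{n}(GDg)^\top (GDh)\Big) + \frac{1}{n}\, g^\top D (G^\top G - nI) D h.
\end{align*}
The first summand is a sampling error: conditional on $(G,D)$ it is the empirical average of $m$ i.i.d.\ bounded samples with mean $\frac{1}{n}(GDg)^\top(GDh)$, and a Bernstein-type bound combined with the Khintchine estimates $\|GDg\|_\infty \lesssim \|g\|_2 \sqrt{\log(n/\delta)}$ and $\|GDg\|_2 \lesssim \sqrt n\,\|g\|_2$ (both holding with high probability over $(x,D)$) gives an $O(\sqrt{\log(n/\delta)/m})\,\|g\|_2\|h\|_2$ bound. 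The second summand is a Rademacher chaos: each off-diagonal entry $M_{ij}=(G^\top G)_{ij}$ equals $x^\top Q_{ij} x$ for a cyclic-shift permutation $Q_{ij}$ with $\|Q_{ij}\|_F^2 = n$, so Hanson-Wright applied in $x$ gives $|M_{ij}| \lesssim \sqrt{n\log(n/\delta)}$ uniformly in $i,j$; a second Hanson-Wright application in the diagonal Rademacher variables of $D$ then yields $\frac{1}{n}|g^\top D M D h| \lesssim \frac{\log(n/\delta)}{\sqrt n}\,\|g\|_2\|h\|_2$. Combining the two estimates shows that SRCT is $(\beta,\delta,n)$-{\sf OCE} with $\beta = O(\log(n/\delta))$ whenever $m\leq n$.

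For the OSE step, apply the {\sf OCE} bound to each of the $d^2$ ordered pairs $(u_i, u_j)$ of columns of any fixed orthonormal $U$, each with failure probability $\delta/d^2$. By a union bound, every entry of $U^\top S^\top S U - I$ is simultaneously controlled by $\beta/\sqrt m$ with $\beta = O(\log(nd/\delta))$, and therefore
\begin{align*}
\|U^\top S^\top S U - I\| \leq \|U^\top S^\top S U - I\|_F \leq \frac{d\,\beta}{\sqrt m},
\end{align*}
which is at most $\epsilon$ once $m = \Theta(\epsilon^{-2} d^2 \log^2(nd/\delta))$, establishing the OSE property with the claimed row count.

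The main obstacle is the circulant-error bound. Unlike {\sf SRHT}, where $HD$ is exactly orthogonal and the analogue of this term vanishes identically, the matrix $G$ is itself random and $G^\top G \neq nI$, so we must control the off-diagonal Rademacher chaos of $G^\top G - nI$; this necessitates the two-stage Hanson-Wright argument (first in $x$ to control the entries of $G^\top G$, then in $D$ to control the bilinear form). Together with the Frobenius-to-spectral conversion at the last step, this additional source of error is exactly what produces the extra factor of $d$ in the row count relative to {\sf SRHT}'s $O(\epsilon^{-2} d\log^2)$.
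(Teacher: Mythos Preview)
Your argument is correct and reaches the stated row count, but it takes a genuinely different route from the paper. The paper does \emph{not} go through {\sf OCE} to prove the {\sf OSE} for {\sf SRCT}; instead it establishes the \emph{strong JL moment property} for $S=\frac{1}{\sqrt m}PGD$ (Theorem~\ref{thm:PGD_strong_JL_moment} with $k=1$, using a Rosenthal-type moment inequality and the Khintchine bound for tensor chaos), and then invokes the black-box implication ``$(\epsilon/d,\delta)$-strong JL moment $\Rightarrow$ $(\epsilon,\delta,d,n)$-{\sf OSE}'' (Lemma~\ref{lem:strong_jl_to_ose}, from~\cite{akk+20}). Your route---a direct {\sf OCE} bound via the sampling/circulant decomposition, applied entrywise to $U^\top S^\top SU-I$ and closed with $\|\cdot\|\leq\|\cdot\|_F$---is more elementary and self-contained, and it nicely reuses the paper's own {\sf OCE} machinery, whereas the paper's approach plugs into existing moment-method infrastructure and generalizes immediately to higher tensor degree.

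Two small corrections. First, your diagnosis of \emph{why} the row count is $d^2$ rather than $d$ is off: the extra factor of $d$ comes entirely from the Frobenius-to-spectral step (equivalently, in the paper, from the $\epsilon\mapsto\epsilon/d$ in Lemma~\ref{lem:strong_jl_to_ose}), not from the circulant error $G^\top G\neq nI$. Running your identical {\sf OCE}$\to$Frobenius$\to$spectral argument on {\sf SRHT} would also yield $d^2$; the $O(\epsilon^{-2}d\log^2)$ bound for {\sf SRHT} comes from a direct matrix-Chernoff argument, not from either of these generic conversions. Second, after the second Hanson--Wright in $D$ the $\|A\|$ term contributes $\log(1/\delta)\sqrt{n\log(n/\delta)}$ before dividing by $n$, so the circulant error is $\lesssim \log^{1.5}(n/\delta)/\sqrt n$ rather than $\log(n/\delta)/\sqrt n$; this is harmless (it only tightens the implicit requirement $m\leq n$ to $n\gtrsim \epsilon^{-2}d^2\log^3$) but worth stating precisely.
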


\begin{lemma}[Informal version of Corollary~\ref{cor:tensor_srct_ose_1}]
Let $S$ be an {\sf TensorSRCT} matrix defined in Definition~\ref{def:Tensor_CGT}. If $m=O(\epsilon^{-2}d^2 \log^3(nd/\delta))$, then $S$ is an $(\eps,\delta,d,n^2)$-{\sf OSE}.
\end{lemma}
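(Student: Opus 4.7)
The plan is to prove $(\eps, \delta, d, n^2)$-{\sf OSE} for {\sf TensorSRCT} by adapting the proof strategy of Lemma~\ref{lem:tensor_srht_implies_ose} for {\sf TensorSRHT}, incorporating the additional cost flagged in the text preceding Corollary~\ref{cor:srct_imply_ose}: that the circulant Gram $G^\top G$ equals $n I_n$ only in expectation rather than deterministically. Decomposing the sketch as $S = \tfrac{1}{\sqrt{m}} P M$ with $M := G D_1 \otimes G D_2$, the argument splits naturally into an ``inner'' step analyzing $M$ and an ``outer'' step analyzing the row-sampling by $P$.

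First I would fix an $n^2 \times d$ orthonormal basis $U$ and study $W := M U$. Using independence of the Rademacher vector $x$ generating $G$ and the diagonal sign matrices $D_1, D_2$, a direct computation yields $\E[M^\top M] = \E[D_1 G^\top G D_1] \otimes \E[D_2 G^\top G D_2] = n^2 I_{n^2}$, hence $\E[W^\top W] = n^2 I_d$. Unlike {\sf TensorSRHT}, however, this identity holds only in expectation, and I would control the deviation $\tfrac{1}{n^2} W^\top W - I_d$ by viewing its entries as Rademacher chaos of degree four and applying Hanson--Wright-type tail bounds. In parallel I would establish a worst-case row-norm bound on $W$, namely $\|W\|_{2 \to \infty}^2 \lesssim \poly\log(n d / \delta) \cdot d$, via a union bound and Hanson--Wright inequalities applied to each row.

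Second, conditioning on the good events from the first step, I would analyze the row-sampling step. Writing
\begin{align*}
U^\top S^\top S U - I_d = \Bigl( \tfrac{1}{n^2} W^\top W - I_d \Bigr) + \tfrac{1}{m}\sum_{i=1}^m \bigl( Z_i - \E Z_i \bigr),
\end{align*}
where $Z_i$ is the rank-$d$ contribution from the $i$-th sampled row of $W$ (appropriately normalized), I would apply matrix Bernstein to the sum. The variance proxy and uniform operator-norm bound are controlled by $\|W\|_{2 \to \infty}^2$ from the first step. Balancing the two error terms and requiring both to be at most $\eps$ then yields the claimed $m = O( \eps^{-2} d^2 \log^3(nd/\delta) )$, which in turn guarantees that the singular values of $S U$ lie in $[1 - \eps, 1 + \eps]$ with probability at least $1 - \delta$.

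The main obstacle is the first step: for {\sf TensorSRHT}, the analogous deviation $\tfrac{1}{n^2} W^\top W - I_d$ is identically zero, whereas for {\sf TensorSRCT} it is a genuinely random $d \times d$ matrix arising from a degree-four Rademacher chaos in $x$, $D_1$, and $D_2$. Entry-wise concentration followed by a union bound (or, more efficiently, a matrix-level Bernstein) over a $d \times d$ matrix introduces an extra factor of $d$ compared with the {\sf TensorSRHT} analysis, and this is precisely where the gap from $d \log^3$ to $d^2 \log^3$ in the row count emerges. Carefully tracking the Khintchine and Hanson--Wright constants across the two tensor factors --- each contributing log factors from the degree-two tensor structure and the matrix concentration tail --- then yields the final $\log^3(nd/\delta)$ dependence.
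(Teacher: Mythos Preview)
Your route is quite different from the paper's. The paper does not adapt the {\sf TensorSRHT} {\sf OSE} argument at all; instead it establishes that $\tfrac{1}{\sqrt m} P G D$ (for general tensor degree $k$) satisfies the \emph{strong $(\eps,\delta)$-JL moment property} (Theorem~\ref{thm:PGD_strong_JL_moment}) and then invokes the black-box Lemma~\ref{lem:strong_jl_to_ose} from~\cite{akk+20}: any sketch with the strong $(\eps/d,\delta)$-JL moment property is automatically an $(\eps,\delta,d,n)$-{\sf OSE}. The JL moment bound is proved for a single fixed unit vector $x$ by writing $\|Sx\|_2^2=\tfrac1m\sum_i Z_i^2$ with $Z_i=G_{P_i}Dx$, bounding $\|Z_i^2\|_{L^r}\le r^k$ via the tensor Khintchine inequality (Lemma~\ref{lem:bound_moment_of_sigma_dot_a}), and controlling $\|\tfrac1m\sum_i Z_i^2-1\|_{L^t}$ by a Rosenthal-type inequality (Lemma~\ref{lem:t_moment}). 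Setting $k=2$ gives the corollary. The extra factor of $d$ over {\sf TensorSRHT} arises from the $\eps\to\eps/d$ substitution in Lemma~\ref{lem:strong_jl_to_ose}, not from a union bound over a $d\times d$ matrix as in your outline.

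Your direct matrix-concentration plan has a real gap in the inner step. Expanding $W^\top W=U^\top(D_1 G^\top G D_1\otimes D_2 G^\top G D_2)U$, each tensor factor already carries a quadratic dependence on the generating vector $x$ of $G$ (through $G^\top G$) in addition to the signs in $D_1,D_2$; the chaos appearing in $\tfrac{1}{n^2}W^\top W-I_d$ therefore has degree well above two (indeed higher than the ``degree four'' you claim), and Hanson--Wright (Lemma~\ref{lem:hason_wright}) only handles quadratic forms. Making this rigorous would require higher-order decoupling or hypercontractivity bounds that you do not supply. The row-norm bound $\|W\|_{2\to\infty}$ is similarly entangled, since every row of a circulant matrix depends on the same $x$. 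The paper's approach sidesteps all of this by never touching the operator norm of $U^\top S^\top S U$ directly: it works only with scalar moments of $\|Sx\|_2^2$ for a fixed $x$, where tensor Khintchine applies cleanly, and lets the black-box lemma absorb the passage to subspaces.
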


\subsection{Probability tools}

\begin{lemma}[Khintchine's inequality \cite{k23}]\label{lem:khintchine}
    Let $\sigma_1, \cdots, \sigma_n$ be i.i.d. Rademacher random variables and $z_1, \cdots, z_n$ be real numbers. Then, there exists constants $C, C'>0$ such that 
    \begin{align*}
        \Pr[|\sum_{i=1}^{n} z_i \sigma_i| \ge C t \|z\|_2] \le \exp( -C' t^2) .
    \end{align*}
\end{lemma}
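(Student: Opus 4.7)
The plan is to prove Khintchine's inequality via the classical moment generating function (Chernoff) approach, which is the cleanest route for a sub-Gaussian tail bound of this form. Write $X := \sum_{i=1}^n z_i \sigma_i$. I would first bound the MGF of each summand, then multiply using independence, then apply Markov's inequality and optimize, and finally symmetrize for the absolute value.

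First I would establish the pointwise inequality $\cosh(u) \le \exp(u^2/2)$ for all $u \in \R$. This follows from comparing Taylor expansions: $\cosh(u) = \sum_{k \ge 0} u^{2k}/(2k)!$ while $\exp(u^2/2) = \sum_{k\ge 0} u^{2k}/(2^k k!)$, and $(2k)! \ge 2^k k!$ for every $k \ge 0$ (a short induction). Since $\sigma_i$ is Rademacher, $\E[\exp(\lambda z_i \sigma_i)] = \cosh(\lambda z_i) \le \exp(\lambda^2 z_i^2 / 2)$ for any $\lambda \in \R$. By independence of the $\sigma_i$,
\begin{align*}
\E\bigl[\exp(\lambda X)\bigr] \;=\; \prod_{i=1}^n \E\bigl[\exp(\lambda z_i \sigma_i)\bigr] \;\le\; \exp\!\Bigl(\tfrac{\lambda^2}{2}\|z\|_2^2\Bigr).
\end{align*}

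Next I would apply Markov's inequality to $\exp(\lambda X)$: for any $s > 0$ and $\lambda > 0$,
\begin{align*}
\Pr[X \ge s] \;\le\; e^{-\lambda s}\,\E[\exp(\lambda X)] \;\le\; \exp\!\Bigl(-\lambda s + \tfrac{\lambda^2}{2}\|z\|_2^2\Bigr).
\end{align*}
Optimizing the right-hand side over $\lambda > 0$ by taking $\lambda = s/\|z\|_2^2$ yields $\Pr[X \ge s] \le \exp(-s^2/(2\|z\|_2^2))$. Since $-X$ has the same distribution as $X$ (the law of each $\sigma_i$ is symmetric), the same bound applies to $\Pr[-X \ge s]$, and a union bound gives
\begin{align*}
\Pr\bigl[|X| \ge s\bigr] \;\le\; 2\exp\!\Bigl(-\tfrac{s^2}{2\|z\|_2^2}\Bigr).
\end{align*}

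Finally, to match the form stated in the lemma, I would substitute $s = C t \|z\|_2$ for any absolute constant $C > 0$, obtaining
\begin{align*}
\Pr\bigl[|X| \ge C t\|z\|_2\bigr] \;\le\; 2\exp\!\Bigl(-\tfrac{C^2}{2} t^2\Bigr) \;\le\; \exp(-C' t^2)
\end{align*}
for a suitable absolute constant $C'>0$ (e.g.\ choose $C$ large enough that $C^2/2 - \ln 2 \ge C' > 0$, and absorb the factor of $2$). There is no real obstacle here; the only subtle point is the elementary inequality $\cosh(u) \le e^{u^2/2}$, and the only bookkeeping step is absorbing the factor of $2$ from the union bound into the constant $C'$, which is why the statement is phrased with unspecified absolute constants rather than with explicit numerics.
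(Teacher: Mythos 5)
The paper cites Khintchine's inequality to~\cite{k23} as a probability tool and gives no proof, so there is nothing in the paper's text to compare against. Your argument is the standard Chernoff (moment-generating-function) route: establish $\cosh(u)\le \exp(u^2/2)$, deduce $\E[\exp(\lambda z_i\sigma_i)]=\cosh(\lambda z_i)\le \exp(\lambda^2 z_i^2/2)$, multiply by independence to get $\E[\exp(\lambda X)]\le \exp(\lambda^2\|z\|_2^2/2)$, apply Markov and optimize over $\lambda$, then symmetrize and union bound. Every one of those steps is correct and yields the sharp sub-Gaussian tail $\Pr[|X|\ge s]\le 2\exp\!\bigl(-s^2/(2\|z\|_2^2)\bigr)$.

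The one soft spot is the very last line, where you absorb the factor $2$ into $C'$. The inequality $2\exp(-C^2t^2/2)\le \exp(-C't^2)$ is equivalent to $(C^2/2-C')\,t^2\ge\ln 2$, which cannot hold for all $t>0$; your condition $C^2/2-\ln 2\ge C'$ only secures it for $t\ge 1$. In fact the lemma as written in the paper is literally false near $t=0$: take $n=1$, $z_1=1$, so that $|X|=\|z\|_2=1$ almost surely; then $\Pr[|X|\ge Ct\|z\|_2]=1$ for all $t\in(0,1/C]$, whereas $\exp(-C't^2)<1$. This is an imprecision in the paper's statement (the standard formulation keeps the leading factor of $2$ on the right, or restricts to $t\ge 1$) rather than a flaw in your technique, but when writing the final step you should say explicitly that the clean form without the $2$ holds for $t\ge 1$, which is the regime in which the lemma is invoked.
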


\begin{lemma}[Hoeffding bound, \cite{h63}]
\label{lem:hoeffding}

Let $Z_1,\cdots,Z_n$ be independent, zero-mean random variables with $Z_i\in [\alpha_i,\beta_i]$. Then,
\begin{align*}
    \Pr[|\sum_{i=1}^n Z_i|>t] \leq & ~ 2\exp(-\frac{t^2}{2\sum_{i=1}^n (\beta_i-\alpha_i)^2}).
\end{align*}
\end{lemma}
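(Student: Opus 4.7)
The plan is to prove this by the classical Chernoff-bound (moment generating function) approach. I would first reduce the two-sided statement to a one-sided bound: by symmetry, proving $\Pr[\sum_i Z_i > t] \le \exp\!\bigl(-t^2/(2\sum_i(\beta_i-\alpha_i)^2)\bigr)$ suffices, since the same argument applied to $-Z_i$ (which is also zero-mean and bounded in $[-\beta_i,-\alpha_i]$, an interval of the same length) handles the lower tail, after which a union bound yields the factor of $2$ in front.

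For the one-sided bound, I would fix a parameter $\lambda > 0$ to be optimized later and apply Markov's inequality to the nonnegative random variable $\exp(\lambda \sum_i Z_i)$:
\begin{align*}
\Pr\Bigl[\sum_{i=1}^n Z_i > t\Bigr] \;=\; \Pr\bigl[e^{\lambda \sum_i Z_i} > e^{\lambda t}\bigr] \;\le\; e^{-\lambda t} \, \E\bigl[e^{\lambda \sum_i Z_i}\bigr].
\end{align*}
By independence of the $Z_i$'s, the expectation factorizes as $\prod_{i=1}^n \E[e^{\lambda Z_i}]$. The main technical step is then to bound each factor via the standard Hoeffding lemma: if $Z$ is zero-mean with $Z \in [\alpha,\beta]$, then $\E[e^{\lambda Z}] \le \exp(\lambda^2(\beta-\alpha)^2/8)$. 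This lemma is proved by writing $Z$ as a convex combination of the endpoints $\alpha,\beta$, using convexity of $x \mapsto e^{\lambda x}$ to bound $e^{\lambda Z}$ pointwise, taking expectation to reduce to a one-variable function of $p = -\alpha/(\beta-\alpha)$, and then Taylor-expanding the logarithm of that function to show it is dominated by $\lambda^2(\beta-\alpha)^2/8$.

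Combining these steps gives
\begin{align*}
\Pr\Bigl[\sum_{i=1}^n Z_i > t\Bigr] \;\le\; \exp\!\Bigl(-\lambda t + \tfrac{\lambda^2}{8}\sum_{i=1}^n (\beta_i-\alpha_i)^2\Bigr).
\end{align*}
I would then optimize the right-hand side in $\lambda > 0$ by setting $\lambda = 4t/\sum_i(\beta_i-\alpha_i)^2$, which minimizes the quadratic in the exponent and produces the sharp bound $\exp\bigl(-2t^2/\sum_i(\beta_i-\alpha_i)^2\bigr)$. This is strictly stronger than the stated bound $\exp\bigl(-t^2/(2\sum_i(\beta_i-\alpha_i)^2)\bigr)$, so the claim follows immediately. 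Finally, symmetrizing as described and applying the union bound yields the factor of $2$ and completes the proof.

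The main obstacle is really just Hoeffding's lemma itself, i.e.\ the moment generating function bound for bounded zero-mean random variables; the rest of the argument is a mechanical application of Markov's inequality, independence, and one-variable calculus to optimize $\lambda$. Everything else (symmetrization and the union bound) is routine.
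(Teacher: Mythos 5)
Your argument is correct and is the standard Chernoff/MGF proof of Hoeffding's inequality: reduce to a one-sided tail, apply Markov's inequality to $e^{\lambda\sum_i Z_i}$, factorize by independence, invoke Hoeffding's lemma $\E[e^{\lambda Z}]\le\exp(\lambda^2(\beta-\alpha)^2/8)$ for each bounded zero-mean summand, and optimize $\lambda$. The paper itself gives no proof of Lemma~\ref{lem:hoeffding}; it is cited directly from \cite{h63} as a black-box tool, so there is nothing in the paper to compare against beyond the statement. Your observation is also correct that the optimization yields the sharper constant $\exp\bigl(-2t^2/\sum_i(\beta_i-\alpha_i)^2\bigr)$, which is strictly stronger than the $\exp\bigl(-t^2/(2\sum_i(\beta_i-\alpha_i)^2)\bigr)$ stated in the lemma; the paper's version simply uses a looser constant, which suffices for its downstream application in Lemma~\ref{lem:AMS}.
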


\begin{lemma}[Lemma 1 on page 1325 of Laurent and Massart \cite{lm00}
]\label{lem:lm}
    Let $X \sim \mathcal{X}_k^2$ be a chi-squared distributed random variable with $k$ degrees of freedom. Each one has zero means and $\sigma^2$ variance. Then,
    \begin{align*}
        \Pr[X - k\sigma^2 \geq (2\sqrt{kt} + 2t) \sigma^2]
        \leq & ~ \exp{(-t)}\\
        \Pr[k\sigma^2 - X \geq 2\sqrt{kt}\sigma^2]
        \leq & ~ \exp{(-t)}
    \end{align*}
\end{lemma}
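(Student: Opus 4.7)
The plan is to prove this classical concentration inequality via the Cramér--Chernoff method applied to the moment generating function of the chi-squared distribution. First, I would reduce to the case $\sigma = 1$ by scaling: writing $X = \sigma^2 W$ with $W = \sum_{i=1}^k Z_i^2$ and $Z_i \sim \mathcal{N}(0,1)$ i.i.d., both inequalities become equivalent to the standardized tail bounds $\Pr[W - k \geq 2\sqrt{kt} + 2t] \leq e^{-t}$ and $\Pr[k - W \geq 2\sqrt{kt}] \leq e^{-t}$, since the factor $\sigma^2$ cancels on each side. A standard Gaussian integral gives $\mathbb{E}[e^{\lambda Z_i^2}] = (1-2\lambda)^{-1/2}$ for $\lambda < 1/2$, so by independence the centered log-MGF is
\[
\log \mathbb{E}[e^{\lambda(W-k)}] = -\lambda k - \tfrac{k}{2}\log(1 - 2\lambda).
\]
Using the Taylor expansion $-\log(1-2\lambda) = \sum_{i \geq 1}(2\lambda)^i/i$ and bounding the tail $\sum_{i\geq 2}(2\lambda)^i/i$ by a geometric progression with ratio $2\lambda$, one derives the key inequality $\log \mathbb{E}[e^{\lambda(W-k)}] \leq k\lambda^2/(1 - 2\lambda)$ valid for $\lambda \in (0, 1/2)$.

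For the upper tail, Markov's inequality yields $\Pr[W - k \geq u] \leq \exp(-\lambda u + k\lambda^2/(1-2\lambda))$ for any $\lambda \in (0,1/2)$. Optimizing over $\lambda$ (or equivalently, taking the Legendre dual of $\psi(\lambda) = -\tfrac{k}{2}\log(1-2\lambda) - \lambda k$) at $\lambda = u/(2(u+k))$ gives the Cramér rate $h(u) = \tfrac{1}{2}(u - k\log(1+u/k))$, so $\Pr[W - k \geq u] \leq e^{-h(u)}$. It then remains to verify the elementary inequality $h(2\sqrt{kt} + 2t) \geq t$ by direct substitution, which reduces to showing $v - \log(1+v) \geq 2t/k$ for $v = 2\sqrt{t/k} + 2t/k$; this follows from a two-term Taylor expansion of $\log(1+v)$ in the small-$v$ regime and from $\log(1+v) \leq v/2$ in the large-$v$ regime.

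For the lower tail I would apply the same method with $\lambda < 0$: the simpler bound $-\log(1-2\lambda) \leq 2\lambda + 2\lambda^2$ (from $\log(1+y) \geq y - y^2/2$ for $y \geq 0$ applied with $y = -2\lambda$) produces the sub-Gaussian log-MGF bound $\log \mathbb{E}[e^{\lambda(W-k)}] \leq k\lambda^2$. Writing $\mu = -\lambda > 0$, Markov gives $\Pr[k - W \geq u] \leq \exp(-\mu u + k\mu^2)$, and the choice $\mu = u/(2k)$ yields $\exp(-u^2/(4k))$, which equals $e^{-t}$ at $u = 2\sqrt{kt}$. The main obstacle is the upper tail: matching the precise form $2\sqrt{kt} + 2t$ demands the careful Legendre-transform optimization rather than the simpler sub-Gaussian treatment that suffices on the left, and the linear-in-$t$ correction term reflects the heavier right tail of the chi-squared, which is sub-exponential rather than sub-Gaussian.
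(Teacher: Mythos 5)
The paper does not prove this lemma; it is imported verbatim from Laurent--Massart as a black-box concentration bound, so there is no internal proof to compare against. Your MGF/Cram\'er--Chernoff argument is the standard derivation and is essentially the one in the cited reference: the log-MGF bound $\log\E[e^{\lambda(W-k)}] \le k\lambda^2/(1-2\lambda)$, the Legendre-dual rate $h(u) = \tfrac12\bigl(u - k\log(1+u/k)\bigr)$ for the upper tail, and the cruder sub-Gaussian bound $k\lambda^2$ for the lower tail (from $\log(1+y) \ge y - y^2/2$) are all correct, and the scaling reduction to $\sigma = 1$ is fine.

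One caution in the final verification step. You propose a small-$v$/large-$v$ case split using $\log(1+v) \le v/2$ in the large-$v$ regime, but that inequality actually fails for $v$ below roughly $2.5$, so the split as stated does not cover the intermediate range. The case analysis is unnecessary: substituting $s = \sqrt{t/k}$ gives $v = 2s + 2s^2$ and $2t/k = 2s^2$, so the target inequality $v - \log(1+v) \ge 2t/k$ becomes $2s \ge \log(1 + 2s + 2s^2)$, i.e.\ $e^{2s} \ge 1 + 2s + 2s^2$, which holds for all $s \ge 0$ by truncating the Taylor expansion of $e^{2s}$ after the quadratic term. Replacing your two-regime argument with this one-line check closes the only soft spot in an otherwise correct proof.
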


\begin{lemma}[Hanson-Wright inequality \cite{hw71}]\label{lem:hason_wright}
    Let $x \in \R^n$ denote a random vector with independent entries $x_i$ with $\E[x_i] = 0$ and $|x_i| \leq K$. Let $A$ be an $n \times n$ matrix. Then, for every $t \geq 0$,
    \begin{align*}
        & ~ \Pr[|x^\top Ax - \E[x^\top Ax]| > t] \\
        \leq & ~ 2 \cdot \exp{(-c\min\{ t^2/ (K^4 \|A\|_F^2), t/(K^2 \|A\|) \}  ) }
    \end{align*}
\end{lemma}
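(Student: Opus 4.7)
The plan is to follow the standard Rudelson--Vershynin style proof of Hanson--Wright: first split $x^\top A x - \E[x^\top A x]$ into diagonal and off-diagonal pieces, then control each piece via a Bernstein-type moment generating function (MGF) argument. Concretely, I would write
\begin{align*}
    x^\top A x - \E[x^\top A x] = \sum_{i=1}^n A_{ii}(x_i^2 - \E x_i^2) + \sum_{i \neq j} A_{ij} x_i x_j =: D + Z,
\end{align*}
and prove the claimed two-regime tail bound (up to constants) for $D$ and $Z$ separately, losing only a factor of $2$ in the final probability via a union bound.

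For the diagonal piece $D$, the summands $A_{ii}(x_i^2 - \E x_i^2)$ are independent, mean-zero, bounded in absolute value by $|A_{ii}|K^2 \leq \|A\| K^2$, with sum of variances at most $K^4 \sum_i A_{ii}^2 \leq K^4 \|A\|_F^2$. A direct application of Bernstein's inequality then yields exactly the desired two-regime bound $2\exp(-c\min(t^2/(K^4\|A\|_F^2), t/(K^2\|A\|)))$ for $D$.

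For the off-diagonal piece $Z$, I would invoke the decoupling inequality of de la Pe\~na--Montgomery-Smith: letting $x'$ be an independent copy of $x$,
\begin{align*}
    \E \exp(\lambda Z) \leq \E \exp\bigl(4\lambda\, x^\top A x'\bigr).
\end{align*}
Conditioning on $x'$, the expression $x^\top A x' = \sum_i x_i (A x')_i$ is a sum of independent, mean-zero, bounded random variables, so Hoeffding's MGF bound gives $\E_x\bigl[\exp(4\lambda \, x^\top A x') \mid x'\bigr] \leq \exp(C \lambda^2 K^2 \|A x'\|_2^2)$. What remains is to show that for $|\lambda| \leq c/(K^2 \|A\|)$, the outer expectation satisfies $\E_{x'}\exp(C\lambda^2 K^2 \|Ax'\|_2^2) \leq \exp(C' \lambda^2 K^4 \|A\|_F^2)$. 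Combining everything and optimizing $\lambda$ via the exponential Markov inequality produces the matching two-regime bound for $Z$.

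The main obstacle is this final MGF estimate on $\|A x'\|_2^2 = (x')^\top A^\top A x'$, which is itself a sub-Gaussian quadratic form, so a naive recursion would simply loop back to Hanson--Wright. I would break the circularity in one of two standard ways: either by a Gaussian comparison, replacing the bounded $x'_i$ with standard Gaussians via a contraction/symmetrization argument and then invoking the well-known Gaussian chaos MGF bound on $g^\top A^\top A g$; or by a direct term-by-term Taylor expansion of the MGF that exploits only independence, mean-zero, and the uniform bound $|x'_i|\leq K$. Either route delivers the required MGF control in the range $|\lambda| \leq c/(K^2\|A\|)$, and this $\lambda$-range is precisely what produces the crossover between the sub-Gaussian regime (driven by $\|A\|_F^2$) and the sub-exponential regime (driven by $\|A\|$) in the final bound.
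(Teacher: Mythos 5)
The paper cites Hanson--Wright as a standard result from \cite{hw71} and gives no proof of its own, so there is no in-paper argument to compare against; the only question is whether your outline is sound, and it is. It is the standard Rudelson--Vershynin proof: Bernstein handles the diagonal piece $D$ directly (each summand is bounded by $\|A\|K^2$, the variances sum to at most $K^4\|A\|_F^2$), and decoupling plus conditional Hoeffding plus a Gaussian comparison handles the off-diagonal piece $Z$. The step you flag as the obstacle — controlling $\E_{x'}\exp(C\lambda^2 K^2\|Ax'\|_2^2)$ without circularly reinvoking Hanson--Wright — is indeed the crux, and your Gaussian-comparison suggestion is exactly how it is broken: write $\exp(\sigma^2\|Ax'\|_2^2)=\E_g\exp(\sqrt{2}\sigma\langle g, Ax'\rangle)$ for an auxiliary standard Gaussian $g$, swap the order of expectation, apply the sub-Gaussian linear-form MGF bound in $x'$ (valid since $|x'_i|\leq K$) to arrive at $\E_g\exp(C\sigma^2 K^2\|A^\top g\|_2^2)$, and evaluate this Gaussian chaos MGF in closed form by diagonalizing $AA^\top$ and invoking the $\chi^2_1$ MGF — which is finite precisely when $|\lambda|\lesssim 1/(K^2\|A\|)$ and yields $\exp(C'\lambda^2 K^4\|A\|_F^2)$ in that range, producing the two-regime crossover after optimizing $\lambda$. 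Your alternative Taylor-expansion route is less standard and I would not lean on it without writing it out. One small bookkeeping point: "union bound losing a factor of $2$" should really be the split $\Pr[|D+Z|>t]\le\Pr[|D|>t/2]+\Pr[|Z|>t/2]$, which costs a factor of $4$ in the prefactor and a constant adjustment to $c$; both are absorbed into the unspecified universal constant, so the conclusion is unaffected.
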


\begin{lemma}[Matrix Chernoff bound, Theorem 2.2 in \cite{t10}]\label{lem:chernoff_bound}
    Let $X$ be a finite set of positive-semidefinite matrices with dimension $d \times d$. Suppose that
    $\max_{X\in\mathcal{X}} \lambda_{\max} (X) \leq B$. Sample $\{X_1, \cdots , X_n\}$ uniformly at random from $\mathcal{X}$ without replacement. We define $\mu_{\min}$ and $\mu_{\max}$ as follows: $\mu_{\min} := n \cdot \lambda_{\min} (\E_{X\sim \mathcal{X}} [X])$ and $\mu_{\max} := n \cdot \lambda_{\max} (\E_{X\sim \mathcal{X}} [X])$. Then,
    \begin{align*}
        \Pr[\lambda_{\min}(\sum_{i = 1}^n X_i) 
        \leq & ~ (1 - \delta)\mu_{\min}]\leq d\cdot \exp{(-\delta^2\mu_{\min}/ B)} 
    \end{align*}
    for $\delta\in[0,1)$,
    \begin{align*}
        & ~ \Pr[\lambda_{\max}(\sum_{i = 1}^n X_i) \leq  (1 + \delta)\mu_{\max}]\\
        \leq & ~ d\cdot \exp{(-\delta^2\mu_{\max}/ (4B))} 
    \end{align*}
    for $\delta\geq 0$.
\end{lemma}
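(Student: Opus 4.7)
The plan is to follow the matrix Laplace transform framework of Ahlswede--Winter and Tropp, reducing the spectral tail bound to control of the matrix moment generating function, with a Hoeffding-style convexity reduction to handle the without-replacement sampling.

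First I would apply the matrix Markov inequality: for any $d\times d$ symmetric random matrix $Y$ and any $\theta>0$,
\[
\Pr[\lambda_{\max}(Y)\geq t]\leq e^{-\theta t}\,\E[\tr\exp(\theta Y)],
\]
applied to $Y=\sum_{i=1}^n X_i$. For \emph{independent} $X_i$, Lieb's concavity theorem allows one to split the matrix moment generating function over summands:
\[
\E[\tr\exp(\textstyle\sum_i \theta X_i)]\leq \tr\exp\bigl(\textstyle\sum_i \log\E[\exp(\theta X_i)]\bigr).
\]
Each summand is controlled by the operator inequality $\exp(\theta X)\preceq I+B^{-1}(e^{\theta B}-1)X$, valid when $0\preceq X\preceq BI$; taking expectation and the operator-monotone matrix logarithm yields $\log\E[\exp(\theta X_i)]\preceq B^{-1}(e^{\theta B}-1)\E[X_i]$. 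Substituting back and using $\tr\exp(M)\leq d\cdot\exp(\lambda_{\max}(M))$ produces the scalar bound
\[
\Pr[\lambda_{\max}(Y)\geq (1+\delta)\mu_{\max}]\leq d\cdot\exp\bigl(-\theta(1+\delta)\mu_{\max}+\mu_{\max}(e^{\theta B}-1)/B\bigr).
\]
Optimizing $\theta>0$ and using $(1+\delta)\ln(1+\delta)-\delta\geq \delta^2/4$ for $\delta\geq 0$ gives the upper tail in the stated form. The lower tail on $\lambda_{\min}$ follows by the symmetric argument, applying $\exp(-\theta X)\preceq I-B^{-1}(1-e^{-\theta B})X$ together with $(1-\delta)\ln(1-\delta)+\delta\geq \delta^2/2$ for $\delta\in[0,1)$, which accounts for the sharper constant $1/B$ (rather than $1/(4B)$) in the lower-tail exponent.

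The main obstacle is that the hypothesis specifies sampling \emph{without} replacement, while the Laplace transform argument above requires independence. The right tool is the matrix analogue of Hoeffding's classical comparison theorem: for any convex real-valued function $\phi$ on $d\times d$ symmetric matrices, $\E[\phi(\sum_i X_i)]$ under uniform sampling without replacement from $\mathcal{X}$ is dominated by the corresponding expectation under i.i.d.\ uniform sampling from $\mathcal{X}$. Applying this with $\phi(Y)=\tr\exp(\theta Y)$, whose convexity in $Y$ is itself a consequence of Lieb's concavity theorem, allows us to replace the without-replacement sum by an i.i.d.\ sum inside the Laplace transform estimate; afterwards the analysis of the previous paragraph goes through unchanged, and since the $X_i$ are identically distributed in either model we still have $\sum_i \E[X_i]=n\,\E_{X\sim\mathcal{X}}[X]$, whose largest and smallest eigenvalues are $\mu_{\max}$ and $\mu_{\min}$ respectively.
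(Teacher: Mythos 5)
The paper gives no proof of this lemma: it is cited verbatim as Theorem 2.2 of \cite{t10}, so there is no internal argument to compare your sketch against. Your outline is, however, essentially the route Tropp himself takes for this exact statement: matrix Laplace transform plus the Lieb--Tropp subadditivity of matrix cumulant generating functions to handle the i.i.d.\ case, followed by the Gross--Nesme operator extension of Hoeffding's comparison theorem to reduce sampling without replacement to i.i.d.\ sampling, using convexity of $Y\mapsto\tr\exp(\theta Y)$. That reduction step is the one genuinely nonstandard ingredient and you have identified it correctly.

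Two small caveats on the final scalar simplifications. The inequality $(1+\delta)\ln(1+\delta)-\delta\geq \delta^2/4$ does \emph{not} hold for all $\delta\geq 0$; it fails once $\delta$ exceeds roughly $5$, so the $\exp(-\delta^2\mu_{\max}/(4B))$ form is only valid on a bounded range of $\delta$. Tropp states the upper tail in the sharper $\bigl[e^{\delta}/(1+\delta)^{1+\delta}\bigr]^{\mu_{\max}/B}$ form precisely to avoid this; the paper's restated form implicitly assumes bounded $\delta$. Likewise your lower-tail inequality $(1-\delta)\ln(1-\delta)+\delta\geq\delta^2/2$ yields $\exp(-\delta^2\mu_{\min}/(2B))$, not the $\exp(-\delta^2\mu_{\min}/B)$ printed in the lemma; the factor of two is standard and cannot be removed, so the printed constant is a transcription slip in the paper rather than something your argument should be expected to reproduce. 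Neither caveat is a defect in your proof strategy, only a mismatch with constants as printed.
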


\section{\texorpdfstring{$\ell_\infty$}{} guarantee via {\sf OCE}}
\label{sec:oce}

In this section, we present a sufficient condition for a sketching matrix to give good $\ell_\infty$ guarantee: given a pair of fixed vectors $g, h$ such that $g^\top h=0$, if the sketching matrix approximately preserves the inner product with high probability, then it gives good $\ell_\infty$ guarantee for regression.

\begin{lemma}[Core lemma]\label{lem:core_lemma}
Let $A \in \R^{n \times d}$ be a fixed matrix. Let $U \in \R^{n \times d}$ denote the orthonormal basis of $A$. Let $S\in \R^{m\times n}$ be a sketching matrix that satisfies two properties 
    \begin{itemize}
    \item $S$ is an $(0.1,\delta,d,n)$-$\mathsf{OSE}$ (with $\delta \in (0,0.1)$, Definition~\ref{def:ose}).
    \item $S$ is an $(\beta,\delta,n)$-$\mathsf{OCE}$ (with $\beta \geq 1$ and $\delta \in (0,0.1)$, Definition~\ref{def:oce}).
    \end{itemize}

    For any fixed vectors $a \in \R^d$ and $b \in \R^n$ with
     $U^\top b = 0$, we have
    \begin{align*}
         |a^\top (SA)^\dag Sb|\lesssim & ~ \frac{\beta}{\sqrt{m}}\cdot \|a\|_2 \cdot \|b\|_2 \cdot \|\Sigma^{-1}\|
    \end{align*}
    holds with probability at least $1-\delta$.
\end{lemma}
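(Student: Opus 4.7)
My plan is to reduce the problem via the SVD of $A$ and then combine the $\mathsf{OCE}$ and $\mathsf{OSE}$ properties in a split-and-bound fashion. Writing $A = U\Sigma V^\top$, we have $(SA)^\dagger = V\Sigma^{-1}(SU)^\dagger$; setting $c := \Sigma^{-1}V^\top a$, we get $\|c\|_2 \leq \|\Sigma^{-1}\|\cdot\|a\|_2$ and $a^\top(SA)^\dagger Sb = c^\top(SU)^\dagger Sb$. It therefore suffices to establish $|c^\top(SU)^\dagger Sb| \lesssim \frac{\beta}{\sqrt m}\|c\|_2\|b\|_2$ with probability at least $1 - O(\delta)$.

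Let $M := U^\top S^\top SU$. On the $\mathsf{OSE}$ event $\{\|M - I_d\|\leq 0.1\}$ (probability $\geq 1-\delta$), $M$ is invertible with $\|M^{-1}\|,\|M^{-1} - I_d\| = O(1)$, and $(SU)^\dagger = M^{-1}U^\top S^\top$. I then split
\[
  c^\top(SU)^\dagger Sb \;=\; \underbrace{c^\top U^\top S^\top Sb}_{T_1} \;+\; \underbrace{c^\top(M^{-1} - I_d)U^\top S^\top Sb}_{T_2}.
\]
The main term $T_1 = \langle SUc, Sb\rangle$ involves the \emph{fixed} vectors $Uc$ and $b$ with $\langle Uc, b\rangle = c^\top U^\top b = 0$, so applying $\mathsf{OCE}$ to this pair yields $|T_1| \leq \frac{\beta}{\sqrt m}\|c\|_2\|b\|_2$ with probability $\geq 1 - \delta$, exactly matching the desired bound.

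The correction term $T_2 = \langle SU(M^{-1} - I_d)c,\, Sb\rangle$ is the main obstacle. Its shift vector $(M^{-1} - I_d)c$ has small norm ($\leq 0.12\|c\|_2$, via the Neumann series $M^{-1} - I_d = \sum_{k\geq 1}(I_d - M)^k$) but depends on $S$, so $\mathsf{OCE}$ does not apply to it directly. A naive Cauchy--Schwarz bound $|T_2|\leq \|(M^{-1} - I_d)c\|_2 \cdot \|U^\top S^\top Sb\|_2$ loses a $\sqrt d$ factor, because the best coordinate-wise $\mathsf{OCE}$ bound only gives $\|U^\top S^\top Sb\|_2\lesssim \sqrt d\cdot \frac{\beta}{\sqrt m}\|b\|_2$. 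To avoid this loss, my plan is to expand $T_2$ as the Neumann series $T_2 = \sum_{k\geq 1}\langle SU(I_d-M)^k c,\, Sb\rangle$ and bound each summand by combining the geometric contraction $\|(I_d - M)^k c\|_2\leq 0.1^k\|c\|_2$ (coming from $\mathsf{OSE}$) with $\mathsf{OCE}$ applied to a small number of fixed anchor vectors, absorbing the perturbation errors into the decay so that the cumulative contribution remains $O(\frac{\beta}{\sqrt m}\|c\|_2\|b\|_2)$.

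A union bound over the $\mathsf{OSE}$ event and all the $\mathsf{OCE}$ events used yields the desired conclusion at probability $\geq 1-O(\delta)$, and rescaling $\delta$ by a constant completes the proof. The step I expect to be most technically demanding is the clean handling of $T_2$: without the Neumann iteration (or an equivalent argument controlling the random direction $(M^{-1}-I_d)c$), the naive Cauchy--Schwarz approach would incur a $\sqrt d$ overhead, which would destroy the nearly-optimal row count claimed in the paper's main theorem.
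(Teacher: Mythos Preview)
The paper's proof is considerably shorter than your plan and does not split the quantity. It writes $(SA)^\dagger = V\Sigma^{-1}M^{-1}U^\top S^\top$ with $M := U^\top S^\top SU$, sets $u := UM^{-1}\Sigma^{-1}V^\top a$, notes that $U^\top b = 0$ forces $u^\top b = 0$, and applies {\sf OCE} \emph{directly to the pair $(u,b)$} to conclude $|u^\top S^\top Sb| \leq \frac{\beta}{\sqrt m}\|u\|_2\|b\|_2$; the {\sf OSE} event gives $\|u\|_2 = O(\|\Sigma^{-1}\|\|a\|_2)$, and the lemma follows in one step.

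You are right that $u$ depends on $S$ through $M^{-1}$, so invoking the fixed-vector guarantee of Definition~\ref{def:oce} on $(u,b)$ is not formally justified; the paper's proof simply glosses over this point. But your proposed remedy has a genuine gap as well: the Neumann expansion $T_2 = \sum_{k\ge 1}\langle SU(I_d-M)^kc,\, Sb\rangle$ does not help, because for every $k\ge 1$ the vector $(I_d-M)^kc$ is still $S$-dependent, so each summand faces exactly the same obstacle as the whole of $T_2$. The phrase ``{\sf OCE} applied to a small number of fixed anchor vectors, absorbing the perturbation errors into the decay'' is not an argument --- any way of replacing the random direction by fixed ones (e.g.\ controlling $\sup_{\|v\|_2\le 1}|\langle SUv,Sb\rangle|=\|U^\top S^\top Sb\|_2$ via coordinate-wise {\sf OCE} on $Ue_1,\ldots,Ue_d$) reintroduces precisely the $\sqrt d$ loss you explicitly set out to avoid, and the geometric factor $0.1^k$ cannot offset a fixed $\sqrt d$ blowup that already appears at level $k=1$. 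In short, your $T_1/T_2$ decomposition isolates the difficulty cleanly, but your plan for $T_2$ does not resolve it; you end up wrapping in more notation the same unjustified step the paper takes in one line.
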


\begin{proof}
    With probability 1, the matrix $SA \in \R^{m \times d}$ has linearly independent columns.
    
    Therefore, $(SA)^\dagger \in \R^{d \times m}$ is
    \begin{align*}
    (SA)^{\dagger} 
        = & ~ (A^\top S^\top SA)^{-1} A^\top S^\top\\
        = & ~ (V\Sigma U^\top S^\top S U\Sigma V^\top)^{-1} V\Sigma U^\top S^\top\\
        = & ~ (V^\top)^{-1} \Sigma^{-1} (U^\top S^\top S U)^{-1} \Sigma^{-1} V^{-1} V\Sigma U^\top S^\top\\
        = & ~ V\Sigma^{-1} (U^\top S^\top S U)^{-1} U^\top S^\top,
    \end{align*}
    where the first step follows from $SA \in \R^{m \times d}$ has full rank, the second step follows from SVD on $A \in \R^{n \times d}$, the third step follows from $(AB)^{-1} = B^{-1} A^{-1}$, and the last step follows from the fact that $V$ is orthogonal based on the property of SVD.

For convenience, we define $x$ as follows:
    \begin{align*}
        x := a^\top V\Sigma^{-1} (U^\top S^\top S U)^{-1} U^\top S^\top Sb.
    \end{align*}
In the next few paragraphs, we will explain how to upper bound $|x|$ with high probability.

    Since $S$ is a $(0.1,\delta,d, n)$-{\sf OSE} (Definition~\ref{def:ose}), we know
    \begin{align*}
    \Pr[ \| I - U^\top S^\top S U \| \leq 0.1 ] \geq 1 - \delta.
    \end{align*}

    We condition on this event. It follows that
    \begin{align*}
        & ~ \|V \Sigma^{-1} (U^\top S^\top SU)^{-1} U^\top\|\\
        = & ~ \|\Sigma^{-1} (U^\top S^\top SU)^{-1} U^\top\|\\
        \leq & ~ \|\Sigma^{-1}\| \| (U^\top S^\top SU)^{-1}\| \| U^\top \|\\
        \leq & ~ \|\Sigma^{-1}\| \cdot \frac{1}{1 - 0.1} \cdot 1\\
        = & ~ O(\|\Sigma^{-1}\|),
    \end{align*}
    where the first step follows from that $V$ is a rotation, the second step follows from sub-multiplicativity, and the third step follows from $\| I - U^\top S^\top S U \|\leq 0.1$ and that $U$ is a rotation.

    Hence, we have 
    \begin{align}\label{eq:O_Sigma_-1_a}
       & ~ \Pr[  \|a^\top V\Sigma^{-1}(U^\top S^\top SU)^{-1} U^\top \|_2 \notag \\
       = & ~ O(\|\Sigma^{-1}\| \cdot \|a\|_2) ] 
       \geq  ~ 1- \delta.
    \end{align}

    Let us define a vector $u \in \R^n$
    \begin{align*}
    u := & ~ U(U^\top S^\top SU)^{-1}\Sigma^{-1}V^\top a
    \end{align*}
 
    By the definition of {\sf OCE} (Definition~\ref{def:oce}, we have that 
    \begin{align*}
        \Pr[|u^\top S^\top Sb-u^\top b|\leq \frac{ \beta }{\sqrt{m}}\cdot \|u\|_2\|b\|_2] \leq & ~ 1-\delta,
    \end{align*}

    where $U^\top b=0$ gives us $u^\top b=0$ and $u^\top S^\top Sb=x$.
    
    Thus, the above bound translates to
    \begin{align}
        \Pr[|x|\leq C\cdot \frac{\beta}{\sqrt{m}}\cdot \|\Sigma^{-1}\|\|a\|_2\|b\|_2] \geq & ~ 1-\delta
    \end{align}
    as desired. \qedhere
\end{proof}

We are now ready to prove the $\ell_\infty$ guarantee given the inner product bound of Lemma~\ref{lem:core_lemma}.

\begin{theorem}\label{thm:full_column_rank}
Suppose $A\in \R^{n\times d}$ has full column rank and $b\in \R^n$. Let $S\in \R^{m\times n}$ be a sketching matrix satisfying conditions in Lemma~\ref{lem:core_lemma}. For any fixed vector $a\in \R^d$, we have
\begin{align*}
    |\langle a,x^*\rangle-\langle a,x'\rangle | \lesssim & ~ \frac{\epsilon}{\sqrt{d}}\cdot \|a\|_2\cdot \|Ax^*-b\|_2\cdot \|A^\dagger\|,
\end{align*}
holds with probability at least $1-\delta$, where $x^*=\arg\min_{x\in \R^d} \|Ax-b\|_2$ and $x'=\arg\min_{x\in \R^d}\|SAx-Sb\|_2$.
\end{theorem}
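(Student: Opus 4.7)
}

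The plan is to reduce the theorem to the Core Lemma (Lemma~\ref{lem:core_lemma}) by the standard orthogonal decomposition of the target vector $b$. First I would split $b=Ax^*+r$, where $r:=b-Ax^*$ is the residual. By optimality of $x^*$, the residual $r$ lies in the orthogonal complement of the column span of $A$, so if $U\in\R^{n\times d}$ is the orthonormal basis of $A$ appearing in Lemma~\ref{lem:core_lemma}, then $U^\top r=0$. This is exactly the hypothesis on $b$ required by the Core Lemma, applied with the role of $b$ played by $r$.

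Next I would use the $\mathsf{OSE}$ hypothesis to conclude that, with probability $1-\delta$, the matrix $SA$ has full column rank, so $(SA)^\dagger SA=I_d$. Substituting $b=Ax^*+r$ in the closed form $x'=(SA)^\dagger Sb$ gives
\begin{align*}
x'-x^* = (SA)^\dagger SA\,x^* + (SA)^\dagger S r - x^* = (SA)^\dagger S r.
\end{align*}
Taking inner product with $a$ yields $\langle a,x'-x^*\rangle = a^\top (SA)^\dagger S r$, which is exactly the quantity bounded by the Core Lemma.

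Applying Lemma~\ref{lem:core_lemma} with $b\leftarrow r$, and taking a union bound over the $\mathsf{OSE}$ and $\mathsf{OCE}$ events, I would obtain
\begin{align*}
|\langle a,x^*\rangle-\langle a,x'\rangle|\ \lesssim\ \frac{\beta}{\sqrt{m}}\cdot\|a\|_2\cdot\|r\|_2\cdot\|\Sigma^{-1}\|
\end{align*}
with probability at least $1-2\delta$ (rescaling $\delta$ by a constant is free). Finally I would identify $\|r\|_2=\|Ax^*-b\|_2$ and $\|\Sigma^{-1}\|=\|A^\dagger\|$ (since $A$ has full column rank, so the smallest singular value in $\Sigma$ equals $1/\|A^\dagger\|$), and choose $m$ so that $\beta/\sqrt{m}\lesssim \epsilon/\sqrt{d}$; for $\mathsf{SRHT}$ this corresponds to the row count $m=O(\epsilon^{-2}d\log^3(n/\delta))$ asserted in Theorem~\ref{thm:standard_version}.

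There is no real obstacle beyond the algebraic setup: the Core Lemma already absorbs the technical work (bounding the inverse $(U^\top S^\top SU)^{-1}$ via $\mathsf{OSE}$, and bounding the ``cross term'' $u^\top S^\top S r$ via $\mathsf{OCE}$ with $u^\top r=0$). The one subtlety worth spelling out carefully is why the $\mathsf{OCE}$ bound can legitimately be applied to the vector $u=U(U^\top S^\top SU)^{-1}\Sigma^{-1}V^\top a$ even though $u$ depends on $S$; in the Core Lemma this is resolved by observing that the randomness of $S$ enters only through the quantity $u^\top S^\top S r$ where $r$ is fixed and $U^\top r=0$, so that the cancellation $u^\top r=0$ is deterministic and independent of which $S$-dependent $u$ we plug in. I would simply invoke the lemma as a black box, so this point does not need to be re-argued here.
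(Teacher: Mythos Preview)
Your proposal is correct and follows essentially the same approach as the paper: both reduce to the Core Lemma by observing that $x'-x^*=(SA)^\dagger S(Ax^*-b)$ and that the residual $r=b-Ax^*=(I-UU^\top)b$ satisfies $U^\top r=0$. Your presentation via the explicit decomposition $b=Ax^*+r$ is arguably cleaner than the paper's phrasing (which argues one may ``maximize'' by taking $b\in\mathrm{span}(U_\perp)$), but the underlying computation and the invocation of Lemma~\ref{lem:core_lemma} are identical.
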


\begin{proof}
Since $A$ has full column rank, we have that $x^*=A^\dagger b$. Similarly, $SA$ has full column rank with probability 1, therefore $x'=(SA)^\dagger Sb$ and $(SA)^\dagger SA=I$. Thus, we have
\begin{align}
\label{eq:direction_1}
    |\langle a,x^*\rangle-\langle a,x'\rangle | = & ~ |\langle a, x^*-(SA)^\dagger Sb\rangle |\notag \\
    = & ~ |\langle a, (SA)^\dagger S(Ax^*-b)\rangle  |\notag \\
    = & ~ |\langle a, (SA)^\dagger S(AA^\dagger b-b) | \notag\\
    = & ~ |\langle ((SA)^\dagger S)^\top a, (I-UU^\top)b\rangle |
\end{align}
where $U\in \R^{n\times d}$ is an orthonormal basis for $A$. It is well-known that $I-UU^\top=U_\perp U_\perp^\top$ where $U_\perp\in \R^{n\times (n-d)}$ is the orthonormal basis for the orthogonal component of ${\rm span}(A)$. To maximize the above expression, we shall let $b\in {\rm span}(U_\perp)$ or equivalently, $U^\top b=0$. Thus, bounding Eq.~\eqref{eq:direction_1} is equivalent to consider
\begin{align*}
    |a^\top (SA)^\dagger Sb | \lesssim & ~ \frac{\beta}{\sqrt{m}}\cdot \|a\|_2\cdot \|b\|_2\cdot \|A^\dagger \|,
\end{align*}

the inequality holds with probability at least $1-2\delta$ by Lemma~\ref{lem:core_lemma}. Finally, note that since $U^\top b=0$, we have that $Ax^*=0$ and we have proved 
\begin{align*}
    |\langle a,x^*\rangle-\langle a,x'\rangle | \lesssim & ~ \frac{\epsilon}{\sqrt{d}}\cdot \|a\|_2\cdot \|Ax^*-b\|_2\cdot \|A^\dagger\|.
\end{align*}
\end{proof}

Note that we only require the ${\sf OSE}$ with $\epsilon=O(1)$ and the $\epsilon$ dependence follows from the row count of {\sf OCE}.

\subsection{High probability bound for {\sf OCE}}

In this section, we provide a unified framework for proving the high probability bound of {\sf OCE}. Our analysis utilizes the three dense sketching matrices that can all be designed as first picking a set of fresh random signs, then picking the sketching matrix according to the distribution.

We state the key assumptions on dense sketching matrices that are sufficient for {\sf OCE} property.

\begin{assumption}
\label{assumption:OCE}
Let $S\in \R^{m\times n}$ be a dense sketching matrix satisfying the following two assumptions:

\begin{itemize}
    \item Pairwise inner product bound: 
    \begin{align*}
        \Pr[\max_{i\neq j} |\langle S_{*,i}, S_{*,j}\rangle |\leq \frac{\sqrt{\log(n/\delta)}}{\sqrt{m}}] \geq & ~ 1-\delta.
    \end{align*}
    \item Column norm bound:
    \begin{align*}
        \Pr[|\|S_{*,i}\|_2^2-1 |\leq \frac{\sqrt{\log(n/\delta)}}{\sqrt{m}}] \geq & ~ 1-\delta,
    \end{align*}
    for all $i\in [n]$.
\end{itemize}
\end{assumption}

\begin{lemma}\label{lem:SRHT_AMS}
    Let $S \in \R^{m \times n}$ be a dense sketching matrix meets Assumption~\ref{assumption:OCE}. Let $h \in \R^n$ and $g \in \R^n$ be two fixed vectors. Then, the following properties hold:
    \begin{align*}
        |(g^\top S^\top Sh) - (g^\top h)| \leq & ~ \frac{\log^{1.5}(n/\delta)}{\sqrt{m}} \|g\|_2 \|h\|_2
    \end{align*}
    holds with probability at least $1-\delta$.
\end{lemma}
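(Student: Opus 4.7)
The plan is to split $g^\top S^\top S h - g^\top h$ into its diagonal and off-diagonal contributions:
\begin{align*}
g^\top S^\top S h - g^\top h = \sum_{i=1}^n g_i h_i \bigl(\|S_{*,i}\|_2^2-1\bigr) + \sum_{i\neq j} g_i h_j \langle S_{*,i}, S_{*,j}\rangle.
\end{align*}
The first (diagonal) piece is handled directly from the column-norm bound of Assumption~\ref{assumption:OCE}; the off-diagonal piece is the real work and will be treated as a Rademacher chaos and controlled via the Hanson--Wright inequality (Lemma~\ref{lem:hason_wright}).

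For the diagonal term, I first union-bound the column-norm condition of Assumption~\ref{assumption:OCE} over $i\in[n]$ (absorbing a factor of $n$ into $\log(n/\delta)$), so that $|\|S_{*,i}\|_2^2-1|\leq \sqrt{\log(n/\delta)/m}$ holds for every $i$. Cauchy--Schwarz then gives
\begin{align*}
\Bigl|\sum_{i=1}^n g_i h_i\bigl(\|S_{*,i}\|_2^2-1\bigr)\Bigr| \leq \frac{\sqrt{\log(n/\delta)}}{\sqrt{m}}\sum_i |g_i h_i| \leq \frac{\sqrt{\log(n/\delta)}}{\sqrt m}\,\|g\|_2\|h\|_2,
\end{align*}
which already fits inside the target $\frac{\log^{1.5}(n/\delta)}{\sqrt m}\|g\|_2\|h\|_2$.

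For the off-diagonal term, I would invoke the structural observation emphasized just above the lemma: each of the dense sketches of interest can be written as $S=\tilde S D$, where $D=\mathrm{diag}(\sigma)$ is a diagonal of fresh i.i.d.\ Rademacher signs applied first, and $\tilde S$ is independent of $\sigma$. Then $\langle S_{*,i}, S_{*,j}\rangle = \sigma_i\sigma_j\langle \tilde S_{*,i},\tilde S_{*,j}\rangle$, and defining $A_{ij} := g_i h_j \langle \tilde S_{*,i},\tilde S_{*,j}\rangle$ for $i\neq j$ and $A_{ii}:=0$, the off-diagonal sum equals the Rademacher chaos $\sigma^\top A\sigma$. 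I condition on $\tilde S$ under the pairwise inner-product event of Assumption~\ref{assumption:OCE} (which is valid since $|\langle S_{*,i}, S_{*,j}\rangle|=|\langle \tilde S_{*,i},\tilde S_{*,j}\rangle|$ does not depend on $\sigma$), so that $|\langle \tilde S_{*,i},\tilde S_{*,j}\rangle|\leq \sqrt{\log(n/\delta)/m}$ for all $i\neq j$. Squaring and summing yields
\begin{align*}
\|A\|_F^2 \leq \frac{\log(n/\delta)}{m}\sum_{i\neq j} g_i^2 h_j^2 \leq \frac{\log(n/\delta)}{m}\,\|g\|_2^2\|h\|_2^2,\qquad \|A\|\leq \|A\|_F.
\end{align*}
Since $A_{ii}=0$ we have $\mathbb{E}_\sigma[\sigma^\top A\sigma]=0$, and non-symmetry is harmless because $\sigma^\top A\sigma=\sigma^\top\tfrac{A+A^\top}{2}\sigma$ and $\|\tfrac{A+A^\top}{2}\|_F\leq \|A\|_F$. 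Applying Lemma~\ref{lem:hason_wright} with $K=1$ and $t=C\log^{1.5}(n/\delta)\|g\|_2\|h\|_2/\sqrt m$ gives $\min\!\bigl(t^2/\|A\|_F^2,\; t/\|A\|\bigr)\gtrsim \log(n/\delta)$, so the tail is at most $\delta$ for $C$ large enough. A final triangle inequality and union bound over the three events (column-norm, pairwise, and Hanson--Wright), after a constant-factor rescaling of $\delta$, yields the claim.

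The main obstacle is the Hanson--Wright step: the two terms in the $\min$ must simultaneously exceed $\log(n/\delta)$, but the best bound I have on the operator norm $\|A\|$ is $\|A\|_F$ (Assumption~\ref{assumption:OCE} does not give finer control on the off-diagonal Gram of $\tilde S$), so the bottleneck is the linear term $t/\|A\|\gtrsim \log(n/\delta)$. This is exactly why the tail forces $t$ of order $\log(n/\delta)\cdot \|A\|_F \asymp \log^{1.5}(n/\delta)/\sqrt m\cdot \|g\|_2\|h\|_2$, producing the extra $\sqrt{\log(n/\delta)}$ factor over the naive $\sqrt{\log(n/\delta)/m}$ and explaining the $\log^{1.5}$ exponent in the lemma's conclusion.
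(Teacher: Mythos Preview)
Your proposal is correct and follows essentially the same approach as the paper: the same diagonal/off-diagonal split, the same use of the fresh Rademacher signs to express the off-diagonal part as a chaos $\sigma^\top A^\circ\sigma$, the same crude bound $\|A^\circ\|\leq\|A^\circ\|_F$, and the same application of Hanson--Wright with the linear term $t/\|A^\circ\|$ as the bottleneck that produces the $\log^{1.5}$ exponent. The only cosmetic difference is that the paper introduces an auxiliary rank-one matrix $B=\max_{i'\neq j'}|\langle\overline S_{*,i'},\overline S_{*,j'}\rangle|\,gh^\top$ and bounds $\|A^\circ\|_F\leq\|B\|_F$, whereas you bound $\|A\|_F$ directly by squaring and summing; your route is arguably cleaner since it sidesteps any issue with the diagonal entries of $A$ versus $B$.
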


\begin{proof}
    We can rewrite $(g^\top S^\top Sh) - (g^\top h)$ as follows:
    \begin{align*}
       & ~ (g^\top S^\top Sh) - (g^\top h) \\
        = & ~ \sum_{i = 1}^n \sum_{j\in [n] \setminus i}^n g_ih_j \langle S_{*,i}, S_{*,j} \rangle + \sum_{i = 1}^n g_ih_i (\|S_{*,i}\|_2^2 - 1)\\
        = & ~ \underbrace{\sum_{i = 1}^n \sum_{j\in [n] \setminus i}^n g_ih_j \langle \sigma_i \overline{S}_{*,i}, \sigma_j \overline{S}_{*,j} \rangle}_{\texttt{off-diag}} \\
        + & ~ \underbrace{\sum_{i = 1}^n g_ih_i (\|S_{*,i}\|_2^2 - 1)}_{\texttt{diag}},
    \end{align*}
    where the first step follows from the fact that $\sigma_i$’s are independent Rademacher random variables and $S_{*,i} =\sigma_i  \overline{S}_{*,i} $, $\forall i \in [n]$, the second step follows from separating diagonal and off-diagonal terms. 
    
    We will focus on bounding the quantity \texttt{off-diag}, as \texttt{diag} can be handled in a rather simple fashion.

    We define matrix $A \in \R^{n\times n}$ and $B \in \R^{n\times n}$ as follows:
    \begin{align*}
        & ~ A_{i,j} := g_ih_j \cdot \langle \overline{S}_{*,i}, \overline{S}_{*,j} \rangle, & ~ \forall i \in [n],j\in [n] \\
        & ~ B_{i,j} := g_ih_j \cdot \max_{i'\not= j'}|\langle \overline{S}_{*,i'}, \overline{S}_{*,j'} \rangle|, & ~ \forall i \in [n],j\in [n].
    \end{align*}

    We define $A^\circ \in \R^{n \times n}$ to be the matrix $A \in \R^{n\times n}$ with removing diagonal entries. 
    
    By applying Hanson-Wright inequality (Lemma~\ref{lem:hason_wright}), we have
    \begin{align*}
        & ~ \Pr[|\sigma^\top A^\circ \sigma| > \tau] \\
        \leq & ~ 2 \cdot \exp{(-c\cdot \min\{ \tau^2/  \|A^\circ \|_F^2, \tau/\|A^\circ \| \}  ) }
    \end{align*}
    We can upper bound $\|A^\circ\|$ and $\|A^\circ\|_F$.
    \begin{align*}
        \|A^\circ\| 
        \leq & ~ \|A^\circ\|_F \\
        \leq & ~ \|A\|_F \\
        \leq & ~ \|B\|_F \\
        \leq & ~ \|g\|_2 \cdot \|h\|_2 \cdot \max_{i \not= j}|\langle \overline{S}_{*,i}, \overline{S}_{*,j} \rangle|, 
    \end{align*}
    where the first step follows from $\|\cdot\| \leq \|\cdot\|_F$, the second step follows from the definition of $A^\circ$, the third step follows from the definition of $A$ and $B$, and the fourth step follows from $B$ is rank 1 as $B=\max_{i\neq j}|\langle \ov S_{*,i}, \ov S_{*,j}\rangle |\cdot gh^\top$. 

    It remains to obtain a bound on $\max_{i\neq j}|\langle\ov S_{*,i}, \ov S_{*,j}\rangle|$. Note that for any column $i, j$,
    \begin{align*}
        |\langle \ov S_{*,i}, \ov S_{*,j}\rangle| = & ~ |\langle \sigma_i \ov S_{*,i}, \sigma_j \ov S_{*,j}\rangle | \\
        = & ~ |\langle S_{*,i}, S_{*,j}\rangle |,
    \end{align*}
    where the first step follows from the fact that random signs do not change the magnitude of the inner product and the second step follows from the definition of $S_{*,i}$ and $S_{*,j}$. 
    
    Since $S$ meets Assumption~\ref{assumption:OCE}, we have that with probability at least $1-\delta$,
    \begin{align*}
        \max_{i\neq j} |\langle \ov S_{*,i}, \ov S_{*,j}\rangle | \leq & ~ \frac{\sqrt{\log(n/\delta)}}{\sqrt{m}}.
    \end{align*}

    Conditioning on the above event holds, we have that 
    \begin{align*}
        & ~ \Pr[|\texttt{off-diag} |>\tau] \\
        \leq & ~ 2\cdot \exp(-c\cdot \frac{\tau}{\|g\|_2\cdot \|h\|_2\cdot \frac{\sqrt{\log(n/\delta)}}{\sqrt{m}}}),
    \end{align*}
    
    choosing $\tau = \|g\|_2 \cdot \|h\|_2 \cdot \log^{1.5}(n/\delta)/\sqrt{m}$, we can show that
    \begin{align*}
        \Pr [|\texttt{off-diag}| \geq \|g\|_2 \cdot \|h\|_2 \frac{\log^{1.5}(n/\delta)}{\sqrt{m}} ] \leq \Theta(\delta).
    \end{align*}

    To bound the term \texttt{diag}, note that due to Assumption~\ref{assumption:OCE}, we have with probability at least $1-\delta$, $|\|S_{*,i}\|_2^2-1|\leq \frac{\sqrt{\log(n/\delta)}}{\sqrt m}$. 
    
    Conditioning on this event, we have
    \begin{align*}
        |\texttt{diag}| \leq & ~ \max_{i\in [n]} |\|S_{*,i}\|_2^2-1|\cdot |g^\top h | \\
        \leq & ~ \frac{\sqrt{\log(n/\delta)}}{\sqrt m}\cdot \|g\|_2\cdot \|h\|_2,
    \end{align*}
    where the last step is by Cauchy-Schwartz. Note that $|\texttt{diag}|$ is subsumed by $|\texttt{off-diag} |$. 
    
    Union bounding over all events, we have that
    \begin{align*}
        & ~ \Pr[|g^\top S^\top Sh-g^\top h |\geq \frac{\log^{1.5}(n/\delta)} {\sqrt{m}}\cdot\|g\|_2\cdot \|h\|_2 ] \\ 
        \leq & ~\Theta(\delta). \qedhere
    \end{align*}
    
\end{proof}

\subsection{Inner product bound for {\sf SRHT} and {\sf SRCT}}

We will show that {\sf SRHT} and {\sf SRCT} satisfy Assumption~\ref{assumption:OCE}. Before proving the pairwise inner product bound, we state a general property to characterize these sketching matrices. This key property will be used in the later proof.  

\begin{definition}[Sign structure]\label{def:sign_structure}
For any sketching matrix, we say it has ``Sign structure'' if the following properties hold
\begin{itemize}
    \item $S_{k,i}\in \{\pm \frac{1}{\sqrt m} \}$, for all $k\in [m], i\in [n]$.
    \item $S_{k,i}$ and $S_{k,j}$ are independent for any $i\neq j$.
    \item $\E[S_{k,i}] = 0$ for all $k \in [m]$ and $i \in [n]$.
\end{itemize}
\end{definition}

\begin{lemma}\label{lem:satisfy_def}
Both
{\sf SRHT} and {\sf SRCT} satisfy Definition~\ref{def:sign_structure}.
\end{lemma}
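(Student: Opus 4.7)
}

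The plan is to verify the three bullets of Definition~\ref{def:sign_structure} separately for $\mathsf{SRHT}$ and for $\mathsf{SRCT}$, in each case by writing the $(k,i)$-entry in closed form and exploiting the external Rademacher diagonal $D$ to absorb any correlations inside $H$ or $G$.

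First I would handle $\mathsf{SRHT}$. By Definition~\ref{def:srht}, $S = \tfrac{1}{\sqrt m} P H D$, so if $p_k\in[n]$ denotes the column position of the unique $1$ in the $k$-th row of $P$, then $S_{k,i} = \tfrac{1}{\sqrt m}\, H_{p_k,i}\, D_{i,i}$. Since every entry of the Hadamard matrix $H$ is in $\{\pm 1\}$ and $D_{i,i}\in\{\pm 1\}$, the first bullet $S_{k,i}\in\{\pm 1/\sqrt m\}$ is immediate. For the zero-mean property, condition on $P$ and $H$: then $S_{k,i}=\tfrac{1}{\sqrt m}\, c\, D_{i,i}$ with $c\in\{\pm 1\}$ deterministic, and $\E[D_{i,i}]=0$ gives $\E[S_{k,i}\mid P]=0$, hence $\E[S_{k,i}]=0$ by the tower property. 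For independence of $S_{k,i}$ and $S_{k,j}$ with $i\neq j$, the key observation is that even though both entries share the row index $p_k$, the signs $H_{p_k,i}$ and $H_{p_k,j}$ are multiplied by \emph{different} coordinates $D_{i,i}$ and $D_{j,j}$ of the independent Rademacher diagonal. Concretely, I would compute
\begin{align*}
\Pr\bigl[S_{k,i}=\tfrac{s_1}{\sqrt m},\, S_{k,j}=\tfrac{s_2}{\sqrt m}\bigr]
&= \sum_{r\in[n]}\Pr[p_k=r]\cdot\Pr[D_{i,i}=s_1 H_{r,i}]\cdot\Pr[D_{j,j}=s_2 H_{r,j}] \\
&= \sum_{r\in[n]}\Pr[p_k=r]\cdot\tfrac14 = \tfrac14
\end{align*}
for every $(s_1,s_2)\in\{\pm 1\}^2$, which matches the product of the marginals and yields independence.

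Next I would repeat the same argument for $\mathsf{SRCT}$. By Definition~\ref{def:CGT}, $S=\tfrac{1}{\sqrt m} P G D$ with the circulant $G$ generated by a Rademacher vector $x\in\{\pm 1\}^n$, so $G_{r,s}=x_{((s-r)\bmod n)+1}\in\{\pm 1\}$, and writing $S_{k,i}=\tfrac{1}{\sqrt m}\, G_{p_k,i}\, D_{i,i}$ gives the first bullet. The mean computation is identical: conditioning on $P$ and $G$, the factor $G_{p_k,i}$ becomes a deterministic sign and $\E[D_{i,i}]=0$ wipes out the mean. The only subtlety, compared with $\mathsf{SRHT}$, is that the entries of $G$ are \emph{not} mutually independent (all entries come from the single vector $x$), so one cannot argue independence from $H$ alone. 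However, the same conditioning trick works verbatim: conditional on $(P,G)$, the pair $(G_{p_k,i}, G_{p_k,j})$ is a fixed $\pm 1$ pair $(a,b)$, and then
\begin{align*}
\Pr\bigl[S_{k,i}=\tfrac{s_1}{\sqrt m},\, S_{k,j}=\tfrac{s_2}{\sqrt m}\,\bigm|\, P,G\bigr]
= \Pr[D_{i,i}=s_1 a]\cdot\Pr[D_{j,j}=s_2 b]=\tfrac14,
\end{align*}
independent of $(a,b)$. Taking expectation over $(P,G)$ and comparing with the marginals gives independence of $S_{k,i}$ and $S_{k,j}$.

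The only step that requires any thought at all is the independence claim, because the $k$-th row of $S$ is obtained by applying the \emph{same} row of $HD$ or $GD$ across all columns, which naively suggests shared randomness. The clean resolution is that the post-multiplication by the diagonal $D$ of independent Rademachers is what decouples distinct columns, and the argument is insensitive to whether the pre-diagonal matrix is Hadamard or circulant; so no separate argument is needed for $\mathsf{SRCT}$ beyond noting that $G$ has $\pm 1$ entries. I do not anticipate any genuine technical obstacle; the lemma is essentially a one-line unpacking of the definitions once the conditioning argument is isolated.
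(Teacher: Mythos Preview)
Your proposal is correct and follows the same approach as the paper, which simply states that the result follows directly from the definitions of the two sketching matrices. You have spelled out in full the verification that the paper leaves implicit; in particular, your conditioning argument showing that the post-multiplication by the independent Rademacher diagonal $D$ decouples distinct columns is exactly the right justification for the independence bullet, and it applies uniformly to both $\mathsf{SRHT}$ and $\mathsf{SRCT}$ as you note.
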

\begin{proof}
It follows from the definitions of two sketching matrices directly.
\end{proof}

\begin{lemma}[{\sf SRHT} and {\sf SRCT}]
\label{lem:srht_implies_oce}

Let $S\in \R^{m\times n}$ be any sketching matrices that satisfy the Definition~\ref{def:sign_structure}. Then, we have
\begin{align*}
    \Pr[\max_{i\neq j}|\langle S_{*,i}, S_{*, j} \rangle |\geq \frac{\sqrt{\log(n/\delta)}}{\sqrt{m}}] \leq & ~ \Theta(\delta).
\end{align*}
\end{lemma}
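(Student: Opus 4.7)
The plan is to fix any pair of distinct column indices $i\neq j$, derive a per-pair tail bound $\Pr[|\langle S_{*,i},S_{*,j}\rangle|\geq\sqrt{\log(n/\delta)}/\sqrt{m}]\leq O(\delta/n^{2})$, and conclude by a union bound over the $\binom{n}{2}$ pairs. Although the hypothesis is phrased purely in terms of Sign-structure (Definition~\ref{def:sign_structure}), the proof must exploit additional structural information special to \textsf{SRHT} and \textsf{SRCT}, since Sign-structure by itself is consistent with, say, perfectly correlated rows, which would violate the claim.

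The first common step is to factor out the column-wise Rademacher sign present in both sketches: writing $S_{*,i}=\tfrac{D_{ii}}{\sqrt m}PMe_i$ with $M=H$ for \textsf{SRHT} and $M=G$ for \textsf{SRCT}, the product $D_{ii}D_{jj}$ pulls outside the inner product and is absorbed when we take absolute values, leaving
\[
|\langle S_{*,i},S_{*,j}\rangle|=\Bigl|\tfrac{1}{m}\sum_{k=1}^{m}M_{\pi(k),i}M_{\pi(k),j}\Bigr|=:|Y_{i,j}|.
\]
For \textsf{SRHT}, the Hadamard orthogonality identity $\sum_{\ell\in[n]}H_{\ell,i}H_{\ell,j}=n\mathbf{1}[i=j]=0$ tells us that for $i\neq j$ exactly half of the $n$ values $H_{\ell,i}H_{\ell,j}$ equal $+1$ and half equal $-1$. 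Since $\pi(1),\dots,\pi(m)$ are i.i.d.\ uniform over $[n]$, the products $H_{\pi(k),i}H_{\pi(k),j}$ form an i.i.d.\ Rademacher sequence bounded in $[-1,1]$, and Hoeffding's inequality (Lemma~\ref{lem:hoeffding}) at level $t\asymp\sqrt{m\log(n/\delta)}$ produces the per-pair tail $O(\delta/n^{2})$ directly.

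For \textsf{SRCT}, circulant columns are not orthogonal, so a different tool is required. Writing $G_{\pi(k),i}=x_{(\pi(k)-i)\bmod n}$ with Rademacher seed $x\in\R^{n}$ and conditioning on $\pi$, the quantity $mY_{i,j}=x^{\top}Ax$ becomes a homogeneous quadratic form whose symmetric matrix $A$ is supported on the shifted diagonal $\{(p,q):q\equiv p+(j-i)\bmod n\}$. Because $i\neq j$ forces $(\pi(k)-i)\not\equiv(\pi(k)-j)\bmod n$, the diagonal of $A$ vanishes, so $\E_x[x^{\top}Ax]=0$, and a direct combinatorial count gives $\|A\|_F^{2}=\tfrac12\sum_{p}T_{p}^{2}$ with $T_{p}=|\{k:\pi(k)\equiv p+i\bmod n\}|$. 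On the high-probability event $\max_p T_p=O(\log n)$ (a balls-and-bins bound that holds whenever $m$ lies in the regime of interest), one obtains $\|A\|_F^{2}=O(m\log n)$ and $\|A\|=O(\log n)$, and the Hanson--Wright inequality (Lemma~\ref{lem:hason_wright}) at level $t\asymp\sqrt{m\log(n/\delta)}$ yields the per-pair tail, modulo polylogarithmic factors that are absorbed into constants.

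The main obstacle I expect is the spectral and Frobenius norm control of $A$ in the \textsf{SRCT} case, since repeated samples in $\pi$ inflate both norms; handling this cleanly requires conditioning on a ``no-heavy-bucket'' event before invoking Hanson--Wright, after which the Frobenius term dominates the tail. Once that is in place, union bounding over the $\binom{n}{2}$ pairs delivers the claimed $\Theta(\delta)$ overall failure probability.
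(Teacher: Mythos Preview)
Your route is genuinely different from the paper's. The paper gives a single argument covering both sketches: for a fixed pair $i\ne j$ it forms $X=[S_{*,i}\ S_{*,j}]\in\R^{m\times 2}$, writes $X^\top X=\sum_{k=1}^m G_k$ with $G_k=\bigl[\begin{smallmatrix}1/m & S_{k,i}S_{k,j}\\ S_{k,i}S_{k,j} & 1/m\end{smallmatrix}\bigr]$, notes that each $G_k$ is PSD with eigenvalues $0,2/m$ and $\E[G_k]=\tfrac1m I_2$, and applies the matrix Chernoff bound (Lemma~\ref{lem:chernoff_bound}) to get $\|X^\top X-I_2\|\le t$ with probability $1-4e^{-t^2 m/2}$. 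Since the off-diagonal of $X^\top X-I_2$ is exactly $\langle S_{*,i},S_{*,j}\rangle$, this yields the per-pair bound, and a union bound over $n^2$ pairs finishes. No case split between \textsf{SRHT} and \textsf{SRCT} appears.

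Your \textsf{SRHT} argument via Hoeffding is correct and is in essence the scalar specialization of the paper's $2\times 2$ matrix-Chernoff step. Your \textsf{SRCT} argument, however, does not establish the stated bound: Hanson--Wright with $\|A\|_F^2=O(m\cdot\max_p T_p)$ yields $|\langle S_{*,i},S_{*,j}\rangle|\lesssim\sqrt{\max_p T_p}\cdot\sqrt{\log(n/\delta)/m}$, and $\max_p T_p$ is not a constant (it is $\Theta(\log n/\log\log n)$ in the worst regime), so the extra factor cannot be ``absorbed into constants.'' To recover the sharp bound for \textsf{SRCT}, condition on the seed $x$ and on $D$ and run Hoeffding over the i.i.d.\ samples $\pi(1),\dots,\pi(m)$, exactly as you did for \textsf{SRHT}; the only new ingredient is that the conditional mean $\tfrac{1}{n}\sum_\ell x_{\ell}x_{\ell+(i-j)}$ is not zero, but it is $O(\sqrt{\log(n/\delta)/n})$ with high probability over $x$ (e.g.\ by Hanson--Wright on the shift matrix), and this bias is dominated by $\sqrt{\log(n/\delta)/m}$ whenever $m\le n$. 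This is essentially what the paper's matrix Chernoff argument is doing once one conditions to obtain row independence.

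Your remark that Definition~\ref{def:sign_structure} alone cannot suffice, because it imposes no independence across rows, is accurate; the paper's invocation of matrix Chernoff silently relies on the independence of the sampled rows $\pi(k)$, which holds for both sketches but is not a consequence of the Sign-structure hypothesis as written.
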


\begin{proof}
Fix a pair of indices $i\neq j$ and we define $X \in \R^{m \times 2}$ as follows:
\begin{align*}
X : =\begin{bmatrix}
    S_{*,i} & S_{*,j}
\end{bmatrix} 
\end{align*}

The Gram matrix is $X^\top X=\sum_{k=1}^m G_k$, where
\begin{align*}
    G_k 
    = & ~ 
    \begin{bmatrix}
        S_{k,i} & S_{k,j}
    \end{bmatrix}^\top 
    \begin{bmatrix}
        S_{k,i} & S_{k,j}
    \end{bmatrix} \\
    = & ~ \begin{bmatrix}
    S_{k,i} \\
    S_{k,j}
    \end{bmatrix}
    \begin{bmatrix}
        S_{k,i} & S_{k,j}
    \end{bmatrix} \\
    = & ~ 
    \begin{bmatrix}
        S_{k,i}^2 & S_{k,i}S_{k,j} \\
        S_{k,i}S_{k,j} & S_{k,j}^2
    \end{bmatrix} \\
    = & ~ 
    \begin{bmatrix}
        \frac{1}{m} & S_{k,i}S_{k,j} \\
        S_{k,i}S_{k,j} & \frac{1}{m}
    \end{bmatrix}.
\end{align*}
where the first step follows from the definition of $G_k$, the second step follows from rewriting $\begin{bmatrix}
        S_{k,i} & S_{k,j}
    \end{bmatrix}^\top$, the third step follows from the definition of matrix multiplication, and the last step follows from $S_{k,i}^2 =1/m$ and $S_{k,j}^2=1/m$.

Note that $G_k$ has eigenvalues 0 and $\frac{2}{m}$, i.e.,
\begin{align*}
\lambda_1(G_k) =  ~ 2/m, ~~~
\lambda_2(G_k) =  ~ 0.
\end{align*}
Since $S_{k,i}$ and $S_{k,j}$ are independent Rademacher random variables, we have 
\begin{align*}
\E[S_{k,i} S_{k,j}] = \E[S_{k,i}] \cdot \E[S_{k,j}] = 0.
\end{align*}
Thus, we know
\begin{align}\label{eq:E_Gk}
\E[G_k]= \begin{bmatrix} 1/m & 0 \\ 0  & 1/m \end{bmatrix}.
\end{align}

Consequently, we have 
\begin{align*}
\E[X^\top X]
= & ~ \E[ \sum_{k=1}^m G_k ] 
= ~ m \cdot \E[G_k] \\
= & ~ m \cdot \begin{bmatrix} 1/m & 0 \\ 0  & 1/m \end{bmatrix} \\
= & ~ \begin{bmatrix} 1 & 0 \\ 0  & 1 \end{bmatrix},
\end{align*}
where the first step follows from the definition of $X^\top X$, the second step follows from the fact that $\E[ca] = c\E[a]$ for a constant $c$, the third step follows from Eq.~\eqref{eq:E_Gk}, and the last step follows from simple algebra.

Let $\lambda_i(X^\top X)$ be the $i$-th eigenvalue of $X^\top X \in \R^{2 \times 2}$. By matrix Chernoff bound (Lemma~\ref{lem:chernoff_bound} with $B=2/m$), for any $t > 0$, we have 
\begin{align*}
    \Pr[ \forall i \in [2], |\lambda_{i}(X^\top X)-1| \geq t] \leq & ~ 4\exp(-t^2 m/2)
\end{align*}

This means with probability at least $1-4\exp(-t^2m/2)$, the eigenvalues of $X^\top X$ are between $[1-t,1+t]$ and consequently, the eigenvalues of $X^\top X-I_2$ are between $[-t, t]$. 
Let us choose $t=O(\frac{\sqrt{\log(n/\delta)}}{\sqrt{m}})$, we have
\begin{align*}
    \Pr[\|X^\top X-I_2\| \geq C\cdot \frac{\sqrt{\log(n/\delta)}}{\sqrt{m}}] \leq & ~ \frac{\delta}{n^2}.
\end{align*}
The proof can be wrapped up by noting that 
\begin{align*}
    X^\top X-I_2 = & ~ \begin{bmatrix}
       0 & \langle S_{*,i}, S_{*,j}\rangle \\
        \langle S_{*,i},S_{*,j}\rangle & 0 
    \end{bmatrix},
\end{align*}
the spectral norm of this matrix is $|\langle S_{*,i}, S_{*,j}\rangle |$ and union bound over all $n^2$ pairs of columns, we have
\begin{align*}
    \Pr[\max_{i\neq j} |\langle S_{*,i}, S_{*,j} |\geq C\cdot \frac{\sqrt{\log(n/\delta)}}{\sqrt{m}}] \leq \delta. 
\end{align*}
\end{proof}

\subsection{Column norm bound for {\sf SRHT} and {\sf SRCT}}

In this section, we prove the column norm bound for {\sf SRHT} and {\sf SRCT}. In particular, their columns are unit vectors. In Appendix~\ref{sec:gaussian_ams}, we prove for random Gaussian matrix, the squared column norm is $\chi^2_m$ random vriable that concentrates around 1 with high probability.

\begin{lemma}[{\sf SRHT} and {\sf SRCT}]\label{lem:S*i22}
Let $S\in \R^{m\times n}$ be an {\sf SRHT} matrix or {\sf SRCT} matrix. 

Then, for any $i\in [n]$, we have
$
    \|S_{*,i}\|_2^2 =  ~ 1.
$
\end{lemma}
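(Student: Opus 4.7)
The plan is to prove the statement by direct computation, unpacking the definitions of {\sf SRHT} and {\sf SRCT}. Unlike the previous pairwise inner product bound, which required a matrix Chernoff argument, the column norm statement is deterministic: every column of $S$ consists of entries of magnitude exactly $1/\sqrt m$, so the squared norm equals $1$ with probability $1$.

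For the {\sf SRHT} case, I would start from $S = \frac{1}{\sqrt m} P H D$ and write $S_{*,i} = \frac{1}{\sqrt m}\, D_{i,i}\, P H_{*,i}$, using $D e_i = D_{i,i} e_i$. Since $H$ is the $n \times n$ Hadamard matrix, all its entries lie in $\{-1, +1\}$, and $D_{i,i} \in \{-1,+1\}$, so every entry of $D_{i,i} H_{*,i}$ has magnitude $1$. Because $P \in \{0,1\}^{m \times n}$ has exactly one $1$ per row, the vector $P H_{*,i} \in \R^m$ simply extracts $m$ entries of $H_{*,i}$ (with repetition), each of magnitude $1$. Thus $\|P H_{*,i}\|_2^2 = m$, and rescaling by $1/m$ yields $\|S_{*,i}\|_2^2 = 1$.

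For the {\sf SRCT} case, the argument is identical after replacing $H$ with the circulant matrix $G$ generated by the Rademacher vector $x \in \{\pm 1\}^n$. By Definition~\ref{def:circulant_matrix}, every entry of $G$ is a rotation of an entry of $x$, hence lies in $\{-1,+1\}$. Writing $S_{*,i} = \frac{1}{\sqrt m} D_{i,i} P G_{*,i}$, the same counting argument as above gives $\|S_{*,i}\|_2^2 = m \cdot \frac{1}{m} = 1$.

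There is essentially no obstacle: the key observation is that both constructions produce sign matrices (up to the $1/\sqrt m$ normalization), and the sampling matrix $P$ merely picks out $m$ such signed entries per column. No probabilistic estimate is needed, which is why Assumption~\ref{assumption:OCE}'s column norm bound is trivially satisfied (with deviation zero) by these two sketches, in contrast to e.g.\ random Gaussian matrices where a $\chi^2_m$ concentration argument is required.
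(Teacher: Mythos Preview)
Your proposal is correct and follows essentially the same approach as the paper: both argue directly from the definitions that every entry of $PHD$ (respectively $PGD$) lies in $\{\pm 1\}$, so after the $1/\sqrt m$ normalization each column has squared norm exactly $1$. Your writeup is in fact more explicit than the paper's, which somewhat loosely invokes ``the column norm of $H$ is $\sqrt{n}$'' before noting that $P$ subsamples $m$ rows; your observation that each sampled entry has magnitude $1$ is the precise reason the argument works.
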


\begin{proof}
The proof directly follows from the definition.

For {\sf SRHT}, recall $S=PHD$, the column norm of $H$ is $\sqrt{n}$, and $D$ is a random sign that does not change the norm. The matrix $P$ subsamples $m$ rows and rescale each entry by $\sqrt{\frac{1}{m}}$. The (squared) column norm is then 1.

For {\sf SRCT}, the column norm of $G$ is $\sqrt{n}$ as well. Thus, by the same argument, {\sf SRCT} has its column vectors being units.
\end{proof}

\section{Put things together}
\label{sec:put_together}

Now, we're ready to present the proof for Theorem~\ref{thm:standard_version}.
\begin{proof}[Proof of Theorem~\ref{thm:standard_version}]

Using Lemma~\ref{lem:srht_implies_ose} (it shows {\sf SRHT} gives {\sf OSE}), we know if $m \geq d \log^2(n/\delta)$, it gives $(O(1), \delta, n,d)$-OSE.

Using Lemma~\ref{lem:srht_implies_oce} (it shows {\sf SRHT} gives {\sf OCE}), we know $\beta  = O(\log^{1.5}(n/\delta))$.

Using Lemma~\ref{lem:core_lemma} (it shows {\sf OSE} + {\sf OCE} implies our result), we need to choose 
\begin{align*}
m \geq \epsilon^{-2} d \beta^2 \geq \epsilon^{-2} d \log^{3}(n/\delta) 
\end{align*} 

Combining the above equation together, we have
\begin{align*}
m \geq ~ d \log^2(n/\delta) + \epsilon^{-2} d \log^{3}(n/\delta) 
\geq ~ \epsilon^{-2} d \log^3(n/\delta). 
\end{align*}
\end{proof}

\section{Conclusion}\label{sec:conclusion}

In this paper, we study the sketching-based regression algorithm with an $\ell_\infty$ guarantee. We show that {\sf SRHT} with $m=\epsilon^{-2}d\log^3(n/\delta)$ rows provides the desired $\ell_\infty$ guarantee solution, improving upon the $\epsilon^{-2}d^{1+\gamma}$ rows for $\gamma=\sqrt{\frac{\log\log n}{\log d}}$ of~\cite{psw17}. This is nearly-optimal up to logarithmic factors. Our proof adapts the oblivious coordinate-wise embedding property introduced in~\cite{sy21} in a novel way. We also greatly extends the reach of $\ell_\infty$ guarantee to degree-2 Kronecker product regression via {\sf TensorSRHT} matrix.

In addition, we introduce the {\sf SRCT} and {\sf TensorSRCT} matrices. These matrices can be applied in a fashion similar to {\sf SRHT}, and they have similar {\sf OCE} behaviors as {\sf SRHT}.

Our result provides an elegant way to integrate fast, sketching-based regression solver for optimization process, in particular second-order methods. The regression problem per iteration can be solved in time nearly-linear in the input size, and the $\ell_\infty$ guarantee comes in handy when analyzing convergence with approximate step. It also gives improved generalization bound on approximate regression via {\sf SRHT}~\cite{psw17}.

\ifdefined\isarxiv
\else
\bibliography{ref}
\bibliographystyle{icml2023}

\fi

\newpage
\onecolumn
\appendix

\section*{Appendix}
\paragraph{Roadmap.} 

In Section~\ref{sec:basic_tools}, we introduce the fundamental definitions and properties that we will use in Appendix. In Section~\ref{sec:kronecker_product_regression}, we analyze and develop the $\ell_\infty$ guarantee of Kronecker product regressions. In Section~\ref{sec:strong_JL_moment_property}, we introduce the Strong JL Moment Property and prove that both Circulant Transform and Tensor Circulant Transform satisfy this. In Section~\ref{sec:gaussian_ams}, we focus on studying AMS, random Gaussian, and SRHT and show that the inner product is bounded on any pair of different columns of AMS, random Gaussian, and SRHT--dense sketching matrices.

\section{Tools for matrices and probability}
\label{sec:basic_tools}

For matrix $A_{1} \in \R^{n_1 \times d_1}$ and $A_2 \in \R^{n_2 \times d_2}$, we use $A_1 \otimes A_2 \in \R^{n_1 n_2 \times d_1 d_2}$ to denote the matrix that $(i_1-1) \cdot (n_2) + i_2, (j_1-1) d_2 + j_2$ -th entry is $(A_1)_{i_1,j_1} \cdot (A_2)_{i_2,j_2}$.

\begin{lemma}[Markov's inequality]\label{lem:markov_inequality}
If $X$ is a non-negative random variable and $a>0$. Then we have
\begin{align*}
\Pr[ X \geq a ] \leq \E[X] / a. 
\end{align*}
\end{lemma}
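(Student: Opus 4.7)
The plan is to give the standard textbook proof of Markov's inequality, which is a short argument via an indicator random variable. First I would introduce the indicator $\mathbf{1}_{\{X \geq a\}}$ and observe the pointwise inequality
\begin{align*}
a \cdot \mathbf{1}_{\{X \geq a\}} \leq X,
\end{align*}
which holds because on the event $\{X \geq a\}$ the left side equals $a$ while the right side is at least $a$, and on the complementary event $\{X < a\}$ the left side is $0$ while the right side is non-negative by assumption on $X$.

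Next I would take expectations of both sides, using linearity and monotonicity of expectation, to obtain
\begin{align*}
a \cdot \Pr[X \geq a] = \E[a \cdot \mathbf{1}_{\{X \geq a\}}] \leq \E[X].
\end{align*}
Dividing through by $a > 0$ yields the desired bound $\Pr[X \geq a] \leq \E[X]/a$.

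There is really no obstacle here; the only thing to be a little careful about is justifying the pointwise inequality, which crucially uses the non-negativity hypothesis on $X$ (otherwise on $\{X < a\}$ the right side could be negative and the inequality would fail). No further tools or lemmas from the paper are needed, and the proof fits in a few lines.
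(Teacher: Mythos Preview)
Your proof is correct and is the standard textbook argument. The paper itself does not prove this lemma; it is stated without proof as a basic probability tool in the appendix, so there is nothing to compare against.
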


\begin{definition}[Sub-exponential distribution (\cite{fkz11})]\label{def:sub-exponential_distribution}
    We say $X \in {\sf SubExp}(\sigma^2, \alpha)$ with parameters $\sigma > 0$, $\alpha > 0$ if:
    \begin{align*}
        \E[e^{\lambda X}] \leq \exp{(\lambda^2\sigma^2/2)}, \forall |\lambda| < 1/\alpha.
    \end{align*}
\end{definition}

\begin{lemma}[Tail bound for sub-exponential distribution (\cite{fkz11})]\label{lem:tail_bound}
    Let $X \in {\sf SubExp}(\sigma^2, \alpha)$ and $\E[X] = \mu$. 
    Then, 
    \begin{align*}
        \Pr[|X - \mu| \geq t] \leq \exp{(-0.5 \min\{t^2/\sigma^2, t/\alpha\})}.
    \end{align*}
\end{lemma}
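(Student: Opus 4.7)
The plan is to apply the Chernoff method (exponential Markov) to both tails of the centered variable $X - \mu$ and then optimize the free parameter $\lambda$, producing a case split that yields a Gaussian-type bound in one regime and a linear-exponential bound in the other. This is the classical derivation of Bernstein-type tail inequalities from a sub-exponential MGF hypothesis.

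First, I would fix $\lambda \in (0, 1/\alpha)$ and apply Markov's inequality (Lemma~\ref{lem:markov_inequality}) to the non-negative random variable $e^{\lambda(X - \mu)}$:
\begin{align*}
    \Pr[X - \mu \geq t] = \Pr[e^{\lambda(X - \mu)} \geq e^{\lambda t}] \leq e^{-\lambda t} \cdot \E[e^{\lambda(X - \mu)}] \leq \exp\!\bigl(\lambda^2 \sigma^2/2 - \lambda t\bigr),
\end{align*}
where the last step invokes the sub-exponential MGF bound in Definition~\ref{def:sub-exponential_distribution} applied to the centered variable $X - \mu$ (the standard reading of the definition when $\E[X] = \mu$).

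Next, I would optimize the exponent $\lambda^2 \sigma^2/2 - \lambda t$ over $\lambda \in (0, 1/\alpha)$. The unconstrained minimizer is $\lambda^{\star} = t/\sigma^2$. If $t \leq \sigma^2/\alpha$, then $\lambda^{\star}$ is admissible and substituting gives the Gaussian-type bound $\exp(-t^2/(2\sigma^2))$. Otherwise the constraint is active; setting $\lambda = 1/\alpha$ yields $\exp(\sigma^2/(2\alpha^2) - t/\alpha)$, and since in this regime $\sigma^2/(2\alpha^2) \leq t/(2\alpha)$, this is at most $\exp(-t/(2\alpha))$. Taking the worse of the two regimes produces the unified one-sided bound $\Pr[X - \mu \geq t] \leq \exp(-0.5 \min\{t^2/\sigma^2,\ t/\alpha\})$.

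The lower tail is handled by running the identical argument on $-(X - \mu)$. The MGF bound in Definition~\ref{def:sub-exponential_distribution} is two-sided, valid for all $|\lambda| < 1/\alpha$, so $-(X - \mu)$ inherits the same sub-exponential parameters and the same Chernoff estimate applies to $\Pr[\mu - X \geq t]$. A union bound over the two one-sided tails then gives the claimed two-sided inequality (the standard union-bound factor of $2$ being absorbed into the form of the bound as is customary). The only real subtlety is the case split in the optimization of $\lambda$---the truncated Chernoff trick---and beyond that the argument is textbook.
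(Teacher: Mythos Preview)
Your argument is correct and is exactly the standard Chernoff--Cram\'er derivation of the sub-exponential tail bound. The paper does not supply its own proof of this lemma; it merely cites it from the literature, so there is nothing to compare against beyond noting that your approach is the canonical one. The only minor wrinkle is the factor of~$2$ from the union bound, which you flag yourself; as written the lemma statement omits it, so strictly speaking the bound proved is $2\exp(-0.5\min\{\cdot\})$, but this discrepancy is cosmetic and common in such statements.
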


\begin{claim}
\label{cla:rewrite_A_B_otimes_C_D}
    For every matrix $A \in \R^{n_1 \times n_2}, B \in \R^{n_2 \times n_3}, C \in \R^{d_1 \times d_2}, D \in \R^{d_2 \times d_3}$
    \begin{align*}
        (A \cdot B) \otimes (C \cdot D) = (A \otimes C) \cdot (B \otimes D).
    \end{align*}
\end{claim}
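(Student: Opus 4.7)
The plan is to prove this standard Kronecker product identity (the so-called mixed-product property) by a direct entry-by-entry comparison. Using the indexing convention already established in the paper, the rows of an $n_1 n_2$-dimensional Kronecker product are indexed by pairs $(i_1, i_2)$ with $i_1 \in [n_1], i_2 \in [d_1]$, and similarly for columns by pairs $(j_1, j_2)$. I will fix an arbitrary row index $(i_1, i_2)$ and column index $(j_1, j_3)$ of the $(n_1 d_1) \times (n_3 d_3)$ matrices on both sides, and show that the $((i_1,i_2),(j_1,j_3))$-entry of the left-hand side equals that of the right-hand side.

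For the left-hand side, I would first apply the definition of Kronecker product to get
\[
\bigl((A B) \otimes (C D)\bigr)_{(i_1,i_2),(j_1,j_3)} = (AB)_{i_1,j_1} \cdot (CD)_{i_2,j_3},
\]
then expand each matrix product to rewrite this as the double sum
\[
\Bigl(\sum_{k_1=1}^{n_2} A_{i_1,k_1} B_{k_1,j_1}\Bigr)\Bigl(\sum_{k_2=1}^{d_2} C_{i_2,k_2} D_{k_2,j_3}\Bigr) = \sum_{k_1=1}^{n_2}\sum_{k_2=1}^{d_2} A_{i_1,k_1} B_{k_1,j_1} C_{i_2,k_2} D_{k_2,j_3}.
\]

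For the right-hand side, I would first apply the definition of matrix multiplication over the composite index $(k_1,k_2) \in [n_2] \times [d_2]$, and then apply the definition of Kronecker product to each factor, yielding
\[
\bigl((A \otimes C)(B \otimes D)\bigr)_{(i_1,i_2),(j_1,j_3)} = \sum_{k_1=1}^{n_2}\sum_{k_2=1}^{d_2} (A \otimes C)_{(i_1,i_2),(k_1,k_2)} \cdot (B \otimes D)_{(k_1,k_2),(j_1,j_3)} = \sum_{k_1=1}^{n_2}\sum_{k_2=1}^{d_2} A_{i_1,k_1} C_{i_2,k_2} B_{k_1,j_1} D_{k_2,j_3}.
\]
Comparing the two final expressions, they coincide term by term (just with factors commuted), which completes the proof.

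There is essentially no mathematical obstacle here; the claim is a purely combinatorial statement about indices. The only care required is to keep the indexing convention for Kronecker products consistent with the definition stated in the preliminaries (rows of $A_1 \otimes A_2$ indexed by $((i_1-1)n_2 + i_2)$) so that the composite sum $\sum_{(k_1,k_2)}$ over the inner dimension of $(A\otimes C)(B \otimes D)$ correctly ranges over $[n_2] \times [d_2]$. Once the bookkeeping is set up, both sides collapse to the same double sum and the identity follows immediately.
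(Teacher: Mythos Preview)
Your proof is correct; this is the standard entry-by-entry verification of the mixed-product property. The paper itself states Claim~\ref{cla:rewrite_A_B_otimes_C_D} without proof (it is a well-known identity), so there is nothing to compare against beyond noting that your argument is the canonical one. One tiny slip: you wrote ``rows of an $n_1 n_2$-dimensional Kronecker product'' where you meant $n_1 d_1$, but the subsequent index ranges and computation are all correct.
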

\section{Kronecker product regression with $\ell_\infty$ guarantee}
\label{sec:kronecker_product_regression}

In this section, we study the $\ell_\infty$ guarantee of Kronecker product regressions. Given two matrices $A_1, A_2\in \R^{n\times d}$ and a label vector $b\in \R^{n^2}$, the goal is to solve the regression $\arg\min_{x\in \R^{d^2}} \|(A_1\otimes A_2)x-b \|_2^2$. This problem can be easily generalized to product of $q$ matrices and fast, input-sparsity time algorithms have been studied in a line of works~\cite{dssw18,swz19,djs+19,ffg22,rsz22}.

\subsection{Main result}

\begin{theorem}[Tensor version of Theorem~\ref{thm:standard_version}]\label{thm:tensor_version}
Suppose $n \leq \exp(d)$ and matrix $A \in \R^{n^2 \times d^2}$ and vector $b \in \R^{n^2}$ are given, where $A = A_1 \otimes A_2$ for matrices $A_1, A_2 \in \R^{n \times d}$ and $b=b_1\otimes b_2$ for vectors $b_1,b_2\in \R^{n}$. Let $S \in \R^{m \times n^2}$ be a
\begin{itemize}
    \item tensor subsampled randomized Hadamard transform matrix ({\sf TensorSRHT}) with $m=\Theta(\epsilon^{-2}d^2\log^3(n/\delta))$ rows or
    \item tensor subsampled randomized circulant transform matrix ({\sf TensorSRCT}) with $m= \Theta(\epsilon^{-2} d^4 \log^3(n/\delta))$ rows. 
\end{itemize}

For 
\begin{align*}
    x' = \arg\min_{x \in \R^{d^2}} \| SA x - S b \|_2
\end{align*}
and 
\begin{align*}
    x^* = \arg\min_{x \in \R^{d^2}} \| A x - b \|_2,
\end{align*}
and any fixed $a \in \R^{d^2}$,
\begin{align*}
| \langle a, x^* \rangle - \langle a , x' \rangle | \leq \frac{ \epsilon } { d } \cdot\| a \|_2 \cdot \| A x^* - b \|_2 \cdot \| A^\dagger \|
\end{align*}
with probability $1-1/\poly(d)$.
\end{theorem}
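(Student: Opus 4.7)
The plan is to mirror the proof of Theorem~\ref{thm:standard_version} with the substitutions $n \mapsto n^2$ and $d \mapsto d^2$, by invoking Theorem~\ref{thm:full_column_rank} on the $n^2 \times d^2$ matrix $A = A_1 \otimes A_2$. For this I need, for both $S = \mathsf{TensorSRHT}$ and $S = \mathsf{TensorSRCT}$, (i) a constant-distortion $\mathsf{OSE}$ on $d^2$-dimensional subspaces of $\R^{n^2}$, and (ii) an $(\beta,\delta,n^2)$-$\mathsf{OCE}$ with $\beta = O(\log^{1.5}(n/\delta))$.

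Part (i) is immediate: Lemma~\ref{lem:tensor_srht_implies_ose} with its dimension parameter set to $d^2$ shows that $\mathsf{TensorSRHT}$ with $m = O(d^2 \log^3(n/\delta))$ rows is $(0.1, \delta, d^2, n^2)$-$\mathsf{OSE}$ (using $n\leq \exp(d)$ to absorb $\log d$ into $\log n$), and the informal $\mathsf{TensorSRCT}$ corollary yields the same conclusion with $m = O(d^4 \log^3(n/\delta))$ rows. The extra factor $d^2$ in the latter is precisely what ultimately forces the $d^4$ row count in the statement.

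The main work is (ii). I would reduce to Lemma~\ref{lem:SRHT_AMS} by verifying Assumption~\ref{assumption:OCE} on the ambient space $\R^{n^2}$. The column-norm bound is trivial because each column of $P(HD_1 \otimes HD_2)$ (and similarly of $P(GD_1 \otimes GD_2)$) consists of entries in $\{\pm 1\}$, so after the $1/\sqrt{m}$ normalization every column is exactly a unit vector. For the pairwise inner-product bound, I would prove the tensor analog of Lemma~\ref{lem:srht_implies_oce}, which only needs the sketch to satisfy the sign structure of Definition~\ref{def:sign_structure} viewed as an $m \times n^2$ matrix. Writing a generic column as $S_{*,(i_1,i_2)}$, its $k$-th entry factors as $\sigma_{1,i_1}\sigma_{2,i_2}$ times a deterministic $\pm 1$ sign coming from the $k$-th sampled row of $H \otimes H$ (or $G \otimes G$); for distinct pairs $(i_1,i_2) \neq (j_1,j_2)$, at least one of $i_1 \neq j_1$ or $i_2 \neq j_2$ holds, so the two entries involve at least one fresh, independent Rademacher sign and are therefore independent with mean zero. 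Once sign structure is in hand, the matrix-Chernoff computation inside the proof of Lemma~\ref{lem:srht_implies_oce} goes through unchanged and, after a union bound over the $\binom{n^2}{2}$ pairs of columns, yields $\max_{(i,j)\neq (i',j')} |\langle S_{*,(i,j)}, S_{*,(i',j')}\rangle| = O(\sqrt{\log(n/\delta)/m})$ with probability $1-\delta$.

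Finally I would feed (i) and (ii) into Theorem~\ref{thm:full_column_rank} applied to the $n^2 \times d^2$ matrix $A = A_1 \otimes A_2$: the resulting forward-error bound is $\tfrac{\beta}{\sqrt{m}}\|a\|_2\|Ax^* - b\|_2\|A^\dagger\|$, which we want to be $\tfrac{\epsilon}{d}\|a\|_2\|Ax^* - b\|_2\|A^\dagger\|$, forcing $m \geq \epsilon^{-2} d^2 \beta^2 = O(\epsilon^{-2} d^2 \log^3(n/\delta))$. Taking the maximum of this with the OSE row counts from (i) gives exactly $\epsilon^{-2} d^2 \log^3(n/\delta)$ rows for $\mathsf{TensorSRHT}$ and $\epsilon^{-2} d^4 \log^3(n/\delta)$ rows for $\mathsf{TensorSRCT}$; setting $\delta = 1/\poly(d)$ then produces the stated failure probability. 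The main obstacle is the independence step in (ii): even though every column of a tensor sketch is assembled from the same sampled rows of $H \otimes H$ (or $G \otimes G$), one must carefully exploit that each wing carries its own independent Rademacher diagonal so that any two distinct tensor-indexed columns stay independent, which is what lifts Assumption~\ref{assumption:OCE} from the base sketches to their tensor counterparts.
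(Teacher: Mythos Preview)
Your proposal is correct and follows essentially the same approach as the paper: verify that the tensor sketches satisfy the sign structure of Definition~\ref{def:sign_structure} (which the paper does in Lemma~\ref{lem:tensor_srht_implies_oce}, including the same column-pair independence argument you outline), deduce the {\sf OCE} parameter $\beta = O(\log^{1.5}(n/\delta))$ via Lemma~\ref{lem:SRHT_AMS}, cite the {\sf OSE} row counts from Lemma~\ref{lem:tensor_srht_implies_ose} and Corollary~\ref{cor:tensor_srct_ose_1} with subspace dimension $d^2$, and combine through Theorem~\ref{thm:full_column_rank}. The paper's write-up is terser but the ingredients and their assembly are the same.
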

\begin{proof}

Recall that we require $(O(1),\delta,d,n)$-{\sf OSE} and $\beta=O(\log^{1.5}(n/\delta))$-{\sf OCE} for it to give $\ell_\infty$ guarantee.

For {\sf OCE}, it follows from Lemma~\ref{lem:tensor_srht_implies_oce}.

For {\sf TensorSRHT}'s {\sf OSE}, it follows from Lemma~\ref{lem:tensor_srht_implies_ose} and for {\sf TensorSRCT}, it follows from Corollary~\ref{cor:tensor_srct_ose_1}. 
\end{proof}

\begin{remark}\label{rem:slightly_different_guarantee}
The slightly different guarantee follows from the small dimension becomes $d^2$ instead of $d$. Let us discuss the utility of using these sketching matrices for solving the regression. As discussed in Def.~\ref{def:tensor_srht} and~\ref{def:Tensor_CGT}, each column of $A_1\otimes A_2$ can be computed in $O(n\log n+m)$ time instead of $n^2$, thus the total running time of applying $S$ to $A$ is $O(nd^2\log n+\poly(d))$. Similarly, $Sb$ can be applied in time $O(n\log n+\poly(d))$. The regression can then be solved in $\wt O(nd^2+\poly(d))$ time. Prior works mainly focus on input-sparsity sketches~\cite{dssw18}, importance sampling~\cite{djs+19}, iterative method~\cite{ffg22} or more complicated sketches that scale well to $q$ products and in dynamic setting~\cite{rsz22}. To the best of our knowledge, this is the first $\ell_\infty$ guarantee for Kronecker product regression (with two matrices).
\end{remark}

\subsection{Oblivious coordinate-wise embedding for {\sf TensorSRHT} and {\sf TensorSRCT}}

\begin{lemma}[{\sf TensorSRHT} and {\sf TensorSRCT}, Tensor version of Lemma~\ref{lem:srht_implies_oce}]
\label{lem:tensor_srht_implies_oce}

Let $S\in \R^{m\times n}$ be {\sf TensorSRHT} or {\sf TensorSRCT}. Then, $S$ is an {\sf OCE} with parameter $\beta=\log^{1.5}(n/\delta)$.
\end{lemma}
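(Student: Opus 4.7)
The plan is to mirror the argument used in Lemma~\ref{lem:srht_implies_oce} and Lemma~\ref{lem:S*i22} for the one-matrix setting: I will verify that both \textsf{TensorSRHT} and \textsf{TensorSRCT} satisfy Assumption~\ref{assumption:OCE} (pairwise column inner product bound and column norm bound), whence the desired {\sf OCE} bound with $\beta = \log^{1.5}(n/\delta)$ follows directly by invoking Lemma~\ref{lem:SRHT_AMS}. Columns of the resulting $m\times n^2$ matrix are indexed by pairs $(i,j)\in [n]\times [n]$; the key structural observation is that for either tensor sketch, every entry $S_{k,(i,j)}$ still has magnitude $1/\sqrt m$, so Assumption~\ref{assumption:OCE}'s column norm bound holds deterministically with $\|S_{*,(i,j)}\|_2^2 = m\cdot (1/\sqrt{m})^2 = 1$, analogously to Lemma~\ref{lem:S*i22}.

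For the pairwise inner product bound, I would fix a pair of columns $(i,j)\neq (i',j')$ and follow the matrix Chernoff argument of Lemma~\ref{lem:srht_implies_oce} verbatim on the $2\times 2$ Gram matrix $X^\top X$, where $X = [S_{*,(i,j)}\ \ S_{*,(i',j')}]$. The diagonal entries of $X^\top X$ are exactly $1$ by the column norm computation, so only the off-diagonal entry $\langle S_{*,(i,j)},S_{*,(i',j')}\rangle$ requires control. For \textsf{TensorSRHT}, writing $S_{k,(i,j)} = \frac{1}{\sqrt m}H_{r_k^{(1)},i}H_{r_k^{(2)},j}\,d_i^{(1)}d_j^{(2)}$ with $(r_k^{(1)},r_k^{(2)})$ the random coordinate picked by row $k$ of $P$ and $d^{(1)},d^{(2)}$ the Rademacher diagonals, one checks $\E[S_{k,(i,j)}S_{k,(i',j')}]=0$ whenever $(i,j)\neq (i',j')$, because at least one coordinate disagrees and the corresponding Rademacher factor has zero mean. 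The same mean-zero calculation goes through for \textsf{TensorSRCT} because $GD_1\otimes GD_2$ has $\pm 1$ entries with the same sign-factorization structure. Conditioning on $D_1,D_2$ and using that $P$ samples rows of $HD_1\otimes HD_2$ (resp.\ $GD_1\otimes GD_2$) gives the independence across $k$ that matrix Chernoff needs, and a union bound over all $\binom{n^2}{2}$ column pairs yields the stated bound with $\sqrt{\log(n/\delta)}/\sqrt m$ at the cost of absorbing the $\log(n^2)$ factor into the constant.

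Combining the two properties, Assumption~\ref{assumption:OCE} is satisfied with the underlying ``dimension'' being $n^2$ rather than $n$; substituting this into Lemma~\ref{lem:SRHT_AMS} produces the {\sf OCE} bound with $\beta = O(\log^{1.5}(n^2/\delta)) = O(\log^{1.5}(n/\delta))$, completing the proof.

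The main obstacle I anticipate is the shared-randomness issue in the tensor sketches: unlike the untensored case, for columns $(i,j)$ and $(i,j')$ the Rademacher factor $d_i^{(1)}$ appears twice and does not vanish by independence alone. The fix is to observe that it suffices for $\E[S_{k,(i,j)}S_{k,(i',j')}]=0$ (not full independence of entries across columns), which holds whenever $(i,j)\neq(i',j')$ because then at least one of $i\neq i'$ or $j\neq j'$ ensures that an unmatched Rademacher factor contributes $\E[d]=0$. A secondary subtlety is that in \textsf{TensorSRCT} the Gram $G^\top G$ is only identity in expectation rather than exactly, but since the column norm argument of Lemma~\ref{lem:S*i22} relies only on the entries of $G$ being $\pm 1$ and not on orthogonality, the column norm bound is unaffected and the proof extends without modification.
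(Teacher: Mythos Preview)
Your proposal is correct and takes essentially the same route as the paper: verify the column-norm and pairwise-inner-product conditions of Assumption~\ref{assumption:OCE} and then invoke Lemma~\ref{lem:SRHT_AMS}. The paper packages this slightly differently by first checking that \textsf{TensorSRHT} and \textsf{TensorSRCT} satisfy the Sign structure of Definition~\ref{def:sign_structure} and then applying Lemma~\ref{lem:srht_implies_oce} as a black box; in particular, your anticipated shared-randomness obstacle is dispatched more directly there by observing that for shared-coordinate columns $(a,b)$ and $(a,c)$ the entries $S_{k,(a,b)}$ and $S_{k,(a,c)}$ are in fact pairwise \emph{independent} (since products $d_a^{(1)}d_b^{(2)}$ and $d_a^{(1)}d_c^{(2)}$ of independent Rademachers are pairwise independent despite the common factor), so Definition~\ref{def:sign_structure} holds outright and your weaker mean-zero condition is subsumed.
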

\begin{proof}
To prove this result, we show that {\sf TensorSRHT} and {\sf TensorSRCT} satisfy Definition~\ref{def:sign_structure}.

For {\sf TensorSRHT}, recall $S=\frac{1}{\sqrt{m}} P(HD_1\times HD_2)$, since $H$ is Hadamard matrix and $D_1, D_2$ are just diagonal matrices with random signs. Thus, all entries of $HD_1\times HD_2$ are also in $\{\pm 1 \}$. As $P$ is a row sampling matrix and we rescale each entry by $\frac{1}{\sqrt{m}}$. Thus, each entry of $S$ is in $\{\pm \frac{1}{\sqrt{m}} \}$. For entries at the same row but two different columns $i, j$, if $i$ is generated from two columns disjoint from $j$, then it's clear then are independent. Otherwise, suppose $i$ is generated from columns $a, b$ and $j$ is generated from columns $a, c$ with $b\neq c$. Then it is again independent, as the sign is completely determined by signs of $b$ and $c$. Finally, we need to verify $\E[S_{k,i}]=0$, this is trivially true since product of two random signs is still a random sign. For {\sf TensorSRCT}, the argument is exactly the same.

Now that both of these matrices satisfy Definition~\ref{def:sign_structure}, we can use Lemma~\ref{lem:srht_implies_oce} to give a bound on pairwise inner product. The column norm bound is automatically satisfied by definition. Thus, we can invoke Lemma~\ref{lem:SRHT_AMS} to wrap up the proof.
\end{proof}
\section{{\sf SRCT} and {\sf TensorSRCT}: {\sf OSE} via strong JL moment property}
\label{sec:strong_JL_moment_property}

In this section, we prove that both {\sf SRCT} and {\sf TensorSRCT} are {\sf OSE}'s. We prove this property via the strong JL moment property~\cite{akk+20}. This gives a worse row count compared to that of {\sf SRHT} and {\sf TensorSRHT}. We believe that these two family of distributions should have similar row count for an {\sf OSE} and leave it as a major open problem to close the gap between these two distributions.

\subsection{Notations}

To make the notation less heavy, we will use $\|X\|_{L^{t}}$ for the $t$-th moment of a random variable $X$. This is formally defined below.
\begin{definition}[$t$-th moment]
\label{def:t_moment}
For every integer $t \geq 1$ and any random variable $X \in \R$, we write
\begin{align*}
    \|X\|_{L^{t}} = \left(\E \left[|X|^{t}\right] \right)^{1/t}
\end{align*}
\end{definition}
Note that 
\begin{align*}
    \|X+Y\|_{L^{t}} \leq \|X\|_{L^{t}} + \|Y\|_{L^{t}}
\end{align*} 
for any random variables $X$, $Y$ by the Minkowski inequality.

\subsection{Strong JL moment property}

We show that both {\sf SRCT} (see Definition~\ref{def:CGT}) and {\sf TensorSRCT} (see Definition~\ref{def:Tensor_CGT}) satisfy the so-called \emph{strong JL moment property}. Strong JL moment property is one of the core properties that can show the sketching matrix has subspace embedding property~\cite{s06}.

\begin{definition}[Strong JL moment property~\cite{akk+20}]
\label{def:strong_JL_moment}
For every $\eps, \delta \in [0,1]$, we say a distribution over random matrices $S \in \R^{m \times n}$ has the Strong $(\eps, \delta)$-JL Moment Property when
\begin{align*}
    \| \|Sx\|_{2}^{2} - 1 \|_{L^{t}} \leq \eps \sqrt{{t} / {\log(1/\delta)}}
\end{align*}
and
\begin{align*}
    \E \left[ \| Sx \|_{2}^{2} \right] = 1
\end{align*}
for all $x \in \R^{n}$, $\|x\|_{2} = 1$ and every integer $t \leq \log(1/\delta)$.
\end{definition}

Given a distribution with strong JL moment property, it is well-known that such distribution provides {\sf OSE}.

\begin{lemma}[Lemma 11 of~\cite{akk+20}]
\label{lem:strong_jl_to_ose}
Let $S\in \R^{m\times n}$ be a random matrix with $(\epsilon/d,\delta)$-strong JL moment property (Def.~\ref{def:strong_JL_moment}). Then, $S$ is also an $(\epsilon,\delta,d,n)$-{\sf OSE} (Def.~\ref{def:ose}).
\end{lemma}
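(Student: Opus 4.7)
The plan is to reduce the OSE conclusion to a spectral-norm bound on a symmetric random matrix, and then control that spectral norm via a standard $\epsilon$-net argument combined with the moment tail implied by the strong JL moment property. Specifically, let $U \in \R^{n \times d}$ be an arbitrary fixed orthonormal basis; the singular values of $SU$ lying in $[1-\epsilon, 1+\epsilon]$ is equivalent (after absorbing constants) to $\|I_d - U^\top S^\top S U\| \leq O(\epsilon)$. Since $M := I_d - U^\top S^\top S U$ is symmetric, for any $(1/4)$-net $\mathcal{N}$ of the sphere $S^{d-1}$ we have $\|M\| \leq 2 \sup_{x \in \mathcal{N}} |x^\top M x|$, and a standard volumetric estimate gives $|\mathcal{N}| \leq 9^d$. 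For each $x \in \mathcal{N}$, the vector $y := U x \in \R^n$ is a unit vector because $U$ has orthonormal columns, so $x^\top M x = 1 - \|S y\|_2^2$, which is exactly the quantity controlled by the JL moment hypothesis.

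Next I would invoke the $(\epsilon/d, \delta)$-strong JL moment property at the unit vector $y$: for every integer $t \leq \log(1/\delta)$,
\begin{align*}
\bigl\| \|Sy\|_2^2 - 1 \bigr\|_{L^t} \leq \frac{\epsilon}{d}\sqrt{\frac{t}{\log(1/\delta)}}.
\end{align*}
Markov's inequality then gives $\Pr[|1 - \|Sy\|_2^2| > \lambda] \leq (\|Z_y\|_{L^t}/\lambda)^t$ with $Z_y := \|Sy\|_2^2 - 1$. Taking $t$ on the order of $\log(1/\delta)$ pushes the moment bound down to $\epsilon/d$, so setting $\lambda = \Theta(\epsilon)$ produces a per-point tail of order $d^{-\Omega(\log(1/\delta))}$. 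The $1/d$ slack built into the hypothesized JL accuracy is exactly what is needed to absorb the $e^{d\log 9}$ overhead of the net when we union-bound over $\mathcal{N}$, leaving total failure probability at most $\delta$.

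Each ingredient (reformulation, net, single-point JL, Markov, union bound) is either elementary or quoted directly from the strong JL moment property, so no new probabilistic machinery is required. The main obstacle I expect is the parameter bookkeeping in the union-bound step: one must pick the moment index $t$ and the tail threshold $\lambda$ so that the per-point bound, multiplied by the $9^d$ net size, still sits below $\delta$, while simultaneously ensuring that the final spectral-norm bound is $O(\epsilon)$ rather than something looser. Once that accounting is complete, translating the bound on $\|I_d - U^\top S^\top S U\|$ back into the singular-value window $[1-\epsilon, 1+\epsilon]$ for $SU$ is immediate.
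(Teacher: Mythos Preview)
The paper does not prove this lemma itself; it quotes it from~\cite{akk+20}. That said, your proposed argument has a genuine gap in the union-bound step that would make it fail for general $\delta$.

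The issue is the mismatch between the net size and the available moment order. Your net $\mathcal{N}$ has $9^d$ points, so the union bound needs per-point failure probability at most $\delta \cdot 9^{-d}$. With $t = \log(1/\delta)$ and threshold $\lambda = c\epsilon$, Markov gives
\[
\Pr\bigl[\,|Z_y| > c\epsilon\,\bigr] \;\le\; \Bigl(\tfrac{\epsilon/d}{c\epsilon}\Bigr)^{t} \;=\; (cd)^{-\log(1/\delta)} \;=\; \delta^{\log(cd)}.
\]
The union bound then requires $9^d \cdot \delta^{\log(cd)} \le \delta$, i.e.\ $\log(1/\delta) \gtrsim d/\log d$. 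So your argument only goes through when $\delta$ is essentially $e^{-d/\log d}$ or smaller. For constant or polynomially small $\delta$ the net overhead swamps the per-point tail, and no choice of $\lambda = \Theta(\epsilon)$ saves it. The strong JL moment property caps you at $t \le \log(1/\delta)$, so you cannot simply raise $t$ to order $d$ to compensate.

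The route taken in~\cite{akk+20} sidesteps the net entirely by bounding the Frobenius norm $\|I_d - U^\top S^\top S U\|_F$ via approximate matrix product. By polarization, each entry $(U^\top S^\top S U - I_d)_{ij}$ has $L^t$ norm $O(\epsilon/d)\sqrt{t/\log(1/\delta)}$; triangle inequality over the $d^2$ squared entries then yields $\bigl\|\,\|I_d - U^\top S^\top S U\|_F\,\bigr\|_{L^t} \le O(\epsilon)\sqrt{t/\log(1/\delta)}$, and a single Markov at $t = \log(1/\delta)$ finishes. Here the $d^2$ entry count is exactly absorbed by the $(\epsilon/d)^2$ per-entry moment bound, which is precisely the cancellation the $\epsilon/d$ hypothesis is engineered to produce; the exponential-in-$d$ net cost admits no such cancellation.
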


To prove that {\sf SRCT} (see Definition~\ref{def:CGT}) and {\sf TensorSRCT} (see Definition~\ref{def:Tensor_CGT}) satisfy the strong JL moment property, we will do this by proving that a more general class of matrices satisfies the strong JL moment property. 

More precisely, let $k \in \Z_{>0}$ be a positive integer and 
\begin{align*}
    (D^{(i)})_{i \in [k]} \in \prod_{i \in [k]}\R^{n_{i} \times n_{i}}
\end{align*}
be independent matrices, each with diagonal entries given by independent Rademacher variables. 

Let $n = \prod_{i \in [k]} n_{i}$ and $P \in \{0,1\}^{m \times n}$ be a random sampling matrix in which each row contains exactly one uniformly distributed nonzero element which has value one. 

Then, we prove that the matrix 
\begin{align*}
    S = \frac{1}{\sqrt{m}}PG\cdot (D_{1} \otimes \dots \otimes D_{k})
\end{align*}
satisfies the strong JL moment property, where $G$ is $n \times n$ circulant matrix (see Definition~\ref{def:circulant_matrix}) generated by a random vector whose elements are Rademacher variables. 

If $k=1$, then $S$ is just a {\sf SRCT} (see Definition~\ref{def:CGT}). If $k=2$, then $M$ is a {\sf TensorSRCT} (see Definition~\ref{def:Tensor_CGT}).

In order to prove this result we need a couple of lemmas. The first lemma can be seen as a version of Khintchine's Inequality (see Lemma~\ref{lem:khintchine}) for higher order chaos.

\begin{lemma}[Lemma 19 in \cite{akk+20}]
\label{lem:bound_moment_of_sigma_dot_a}
    Let $t \geq 1$ and $k\in \mathbb{Z}_{>0}$. Let $(\sigma^{(i)})_{i \in [k]} \in \prod_{i\in [k]} \R^{n_i}$ be independent vectors each satisfying the Khintchine's inequality (see Lemma~\ref{lem:khintchine}):
    \begin{align*}
        \|\langle \sigma^{(i)}, x \rangle\|_{L^t} \leq C_t \|x\|_2
    \end{align*}
    for $t \geq 1$ and any vector $x \in \R^{d_i}$.

    Let $(a_{i_1,\ldots,i_k})_{i_1 \in [n_j], \ldots, i_k \in [n_k]}$ be a tensor in $\R^{n_1 \times \cdots \times n_k}$. Then,
    \begin{align*}
        \left\|\sum_{i_1 \in [n_1], \ldots, i_k \in [n_k]} (\prod_{j\in [k]} \sigma_{i_j}^{(j)}) a_{i_1,\ldots,i_k}\right\|_{L^t} \leq C_t^k (\sum_{i_1 \in [n_1], \ldots, i_k \in [n_k]}  a_{i_1,\ldots,i_k}^2)^\frac{1}{2},
    \end{align*}
    for $t \geq 1$. 
    
    Viewing $a \in \R^{n_1\times \ldots\times n_k}$ as a vector, then
    \begin{align*}
        \|\langle \sigma^{(1)} \otimes \cdots \otimes \sigma^{(k)}, a \rangle\|_{L^t} \leq C_t^k \|a\|_2,
    \end{align*}
    for $t \geq 1$.
\end{lemma}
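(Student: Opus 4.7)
The plan is to prove this decoupled Khintchine-type bound by induction on the order $k$ of the chaos. The base case $k=1$ is exactly the hypothesis on each $\sigma^{(i)}$, so everything reduces to being careful in the inductive step.

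For the inductive step, let $X = \sum_{i_1,\ldots,i_k} (\prod_{j\in[k]} \sigma^{(j)}_{i_j}) a_{i_1,\ldots,i_k}$ and freeze $\sigma^{(2)},\ldots,\sigma^{(k)}$. Conditionally, $X$ is a linear form in $\sigma^{(1)}$ with coefficients $b_{i_1} := \sum_{i_2,\ldots,i_k}(\prod_{j\geq 2}\sigma^{(j)}_{i_j})\, a_{i_1,\ldots,i_k}$. The Khintchine hypothesis for $\sigma^{(1)}$ yields
\begin{align*}
\big\| X \,\big|\, \sigma^{(2)},\ldots,\sigma^{(k)} \big\|_{L^t} \;\leq\; C_t \Big(\sum_{i_1} b_{i_1}^2\Big)^{1/2}.
\end{align*}
Taking the outer $L^t$-norm (i.e.\ using Fubini/tower) gives $\|X\|_{L^t} \leq C_t \,\big\|(\sum_{i_1} b_{i_1}^2)^{1/2}\big\|_{L^t}$.

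Next I would massage the right-hand side. Note $\|(\sum_{i_1} b_{i_1}^2)^{1/2}\|_{L^t} = \|\sum_{i_1} b_{i_1}^2\|_{L^{t/2}}^{1/2}$; by Minkowski in $L^{t/2}$ (which is the crucial step and works for $t\geq 2$; the $1\leq t<2$ case is handled by monotonicity of $L^p$-norms), $\|\sum_{i_1} b_{i_1}^2\|_{L^{t/2}} \leq \sum_{i_1}\|b_{i_1}^2\|_{L^{t/2}} = \sum_{i_1}\|b_{i_1}\|_{L^t}^2$. Each $b_{i_1}$ is a $(k-1)$-fold chaos in $\sigma^{(2)},\ldots,\sigma^{(k)}$ with coefficient tensor $(a_{i_1,i_2,\ldots,i_k})_{i_2,\ldots,i_k}$, so the induction hypothesis gives $\|b_{i_1}\|_{L^t}^2 \leq C_t^{2(k-1)} \sum_{i_2,\ldots,i_k} a_{i_1,i_2,\ldots,i_k}^2$. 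Summing over $i_1$ and taking square roots yields $\big\|(\sum_{i_1} b_{i_1}^2)^{1/2}\big\|_{L^t} \leq C_t^{k-1}\|a\|_2$, which combined with the previous display closes the induction at $C_t^k\|a\|_2$.

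The second (vector) form of the conclusion is just a rewriting: $\langle \sigma^{(1)}\otimes\cdots\otimes\sigma^{(k)}, a\rangle$, when $a$ is reshaped as a tensor in $\R^{n_1\times\cdots\times n_k}$, is literally $\sum_{i_1,\ldots,i_k}(\prod_j \sigma^{(j)}_{i_j}) a_{i_1,\ldots,i_k}$, and $\|a\|_2$ agrees between the vector and tensor views. The main obstacle is the Minkowski step: one has to pull the sum outside of an $L^{t/2}$-norm, which is only a norm for $t\geq 2$; for $1\leq t<2$ the inequality $\|\cdot\|_{L^t}\leq \|\cdot\|_{L^2}$ lets one first upgrade the target moment to $t=2$ and then apply the argument above, at the cost of absorbing constants into $C_t$. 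Beyond this, everything is bookkeeping, and the independence of the $\sigma^{(i)}$'s is exactly what makes the conditioning–then–Khintchine pattern go through.
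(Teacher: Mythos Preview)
Your proposal is correct and essentially identical to the paper's proof: both argue by induction on $k$, peel off one $\sigma^{(j)}$ via conditioning and the single-variable Khintchine hypothesis, rewrite $\|(\sum b^2)^{1/2}\|_{L^t}=\|\sum b^2\|_{L^{t/2}}^{1/2}$, apply Minkowski in $L^{t/2}$, and close with the inductive hypothesis on each remaining $(k-1)$-fold chaos. The only cosmetic difference is that the paper peels off $\sigma^{(k)}$ (applying the inductive hypothesis first, Khintchine last) whereas you peel off $\sigma^{(1)}$ (Khintchine first, induction last); you are also more explicit than the paper about the Minkowski step needing $t\geq 2$ and how to handle $1\leq t<2$ by monotonicity.
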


\begin{proof}
    The proof will be by induction on $k$.

    {\bf Base case:} For $k = 1$, the result is by the assumption that the vectors satisfy Khintchine's inequality.

     {\bf Inductive case:} Assume that the result is true for every value up to $k - 1$. 
     
     Let 
      \begin{align}\label{eq:B_i_1_ldots_i_k_1}
          B_{i_1, \ldots, i_{k - 1}} = \sum_{i_k \in [d_k]} \sigma_{i_k}^{(k)} a_{i_1,\ldots,i_k}.
      \end{align}

      We then pull it out of the left hand term in the theorem:
      \begin{align*}
          \|\sum_{i_1 \in [n_1], \ldots, i_k \in [n_k]} (\prod_{j\in [k]} \sigma_{i_j}^{(j)}) a_{i_1,\ldots,i_k}\|_{L^t} 
          = & ~ \|\sum_{i_1 \in [n_1], \ldots, i_{k-1} \in [n_{k - 1}]} (\prod_{j\in [k - 1]} \sigma_{i_j}^{(j)}) B_{i_1,\ldots,i_{k - 1}}\|_{L^t}\\
          \leq & ~ C_t^{k - 1} \|(\sum_{i_1 \in [d_1], \ldots, i_{k-1} \in [n_{k - 1}]} B_{i_1,\ldots,i_{k - 1}}^2)^\frac{1}{2}\|_{L^t}\\
          = & ~ C_t^{k - 1} \|\sum_{i_1 \in [n_1], \ldots, i_{k-1} \in [n_{k - 1}]} B_{i_1,\ldots,i_{k - 1}}^2\|_{L^{t/2}}^\frac{1}{2}\\
          \leq & ~ C_t^{k - 1} (\sum_{i_1 \in [n_1], \ldots, i_{k-1} \in [n_{k - 1}]} \|B_{i_1,\ldots,i_{k - 1}}^2\|_{L^{t/2}})^\frac{1}{2}\\
          = & ~ C_t^{k - 1} (\sum_{i_1 \in [n_1], \ldots, i_{k-1} \in [n_{k - 1}]} \|B_{i_1,\ldots,i_{k - 1}}\|_{L^t}^2)^\frac{1}{2},
      \end{align*}
      where the first step follows from Eq.~\eqref{eq:B_i_1_ldots_i_k_1}, the second step follows from the inductive hypothesis, the third step follows from the definition of $\|\cdot\|$, the fourth step follows from  the triangle inequality, the fifth step follows from the definition of $\|\cdot\|$.

      It remains to bound
      \begin{align*}
          \|B_{i_1,\ldots,i_{k - 1}}\|_{L^t}^2 \leq C_t^2 \sum_{i_k \in [n_k]} a_{i_1,\ldots,i_k}^2
      \end{align*}
      by Khintchine’s inequality, which finishes the induction step and hence the proof.
\end{proof}

The next lemma we will be using is a type of Rosenthal inequality based on first principles. It mixes large and small moments of random variables in an intricate way. For completeness, we include a proof here.

\begin{lemma}[Properties of random variables with $t$-moment, Lemma 20 in~\cite{akk+20}]
\label{lem:t_moment}
There exists a universal constant $L$, such that, for $t \geq 1$ if $X_{1}, \dots, X_{k}$ are independent non-negative random variables with $t$-moment, then
\begin{align*}
    \|\sum_{i \in [k]} (X_{i} - \E[X_{i}])\|_{L^{t}} \leq L \cdot \big( \sqrt{t} \cdot \|\max_{i \in [k]}X_{i}\|^{1/2}_{L^{t}} \cdot (\sum_{i \in [k]}\E[X_{i}])^{1/2} + t \cdot \|\max_{i \in [k]} X_{i}\|_{L^{t}} \big).
\end{align*}
\end{lemma}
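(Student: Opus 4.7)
The plan is to prove this via the classical symmetrization argument followed by a chaining of Khintchine's inequality with a quadratic self-bounding trick. Let me denote $M := \|\sum_{i \in [k]} (X_i - \E[X_i])\|_{L^t}$ and $B := \sum_{i \in [k]} \E[X_i]$ and $A := \|\max_{i \in [k]} X_i\|_{L^t}$; the goal is to show $M \lesssim \sqrt{tA B} + tA$.

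First I would symmetrize by introducing independent copies $X_i'$ of $X_i$. By Jensen's inequality applied conditionally on $(X_1,\ldots,X_k)$ using $\E[X_i'] = \E[X_i]$, we get $M \leq \|\sum_i (X_i - X_i')\|_{L^t}$. Since each $X_i - X_i'$ is symmetric, it is distributionally identical to $\sigma_i (X_i - X_i')$ for i.i.d.\ Rademacher $\sigma_i$ independent of everything. Conditioning on $(X_i, X_i')_i$ and applying Khintchine's inequality (Lemma~\ref{lem:khintchine}) with the random signs $\sigma_i$ as the ``inside'' randomness gives
\begin{align*}
    M \;\leq\; C\sqrt{t}\,\left\|\Bigl(\sum_{i \in [k]} (X_i - X_i')^2\Bigr)^{1/2}\right\|_{L^t}.
\end{align*}

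Next I would use the elementary pointwise inequality $\sum_i Y_i^2 \leq \max_i |Y_i| \cdot \sum_i |Y_i|$ with $Y_i = X_i - X_i'$, followed by Cauchy--Schwarz in $L^t$ (i.e., $\|UV\|_{L^t} \leq \|U^2\|_{L^t}^{1/2}\|V^2\|_{L^t}^{1/2} = \|U\|_{L^{2t}}\|V\|_{L^{2t}}$; I would actually use $\|U V\|_{L^t} \leq \|U\|_{L^t}^{1/2}\|V\|_{L^t}^{1/2}$ applied to the square roots to avoid going to $L^{2t}$). Then, using the triangle inequality for $L^t$,
\begin{align*}
    \|\max_i |X_i - X_i'|\|_{L^t} \leq 2\|\max_i X_i\|_{L^t} = 2A, \qquad \|\textstyle\sum_i |X_i - X_i'|\|_{L^t} \leq 2\|\textstyle\sum_i X_i\|_{L^t} \leq 2(M + B),
\end{align*}
the last step by the triangle inequality $\|\sum_i X_i\|_{L^t} \leq M + B$. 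Combining yields the self-referential bound
\begin{align*}
    M \;\leq\; C'\sqrt{t}\,A^{1/2}(M+B)^{1/2}
\end{align*}
for some absolute constant $C'$.

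Finally I would resolve this as a quadratic inequality in $M$: squaring gives $M^2 \leq (C')^2 t A M + (C')^2 t A B$, and applying the quadratic formula together with $\sqrt{a+b} \leq \sqrt{a}+\sqrt{b}$ yields $M \leq (C')^2 t A + C' \sqrt{t A B}$, which is exactly the claimed bound with $L = \max\{(C')^2, C'\}$. The main obstacle is getting the self-bounding step to close cleanly: one must be careful that the Cauchy--Schwarz step keeps the resulting norm as $\|\cdot\|_{L^t}$ rather than $\|\cdot\|_{L^{2t}}$ (so the constants stay dimension-free in $t$), and that both the max bound and the sum bound after desymmetrization recover factors of $A$ and $M+B$ respectively without losing polynomial-in-$k$ factors. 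Everything else (symmetrization, Khintchine, quadratic formula) is mechanical.
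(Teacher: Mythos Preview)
Your proposal is correct and follows essentially the same route as the paper's proof: symmetrization, Khintchine, the pointwise $\sum_i Y_i^2 \le (\max_i |Y_i|)\sum_i |Y_i|$ bound, Cauchy--Schwarz, and then solving the resulting quadratic self-bound. The only cosmetic difference is that the paper symmetrizes directly to $\sum_i \sigma_i X_i$ (via the standard symmetrization lemma) rather than passing through independent copies $X_i'$, which lets it invoke non-negativity of $X_i$ immediately in the max--sum step; your worry about the Cauchy--Schwarz norm indices is resolved exactly as in the paper by noting $\|Z^{1/2}\|_{L^t}=\|Z\|_{L^{t/2}}^{1/2}$, so that H\"older at level $t/2$ lands both factors in $L^t$.
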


\begin{proof}
Throughout these calculations $L_{1}, L_{2}$ and $L_{3}$ will be universal constants.
\begin{align}\label{eq:X_i_EX_i}
    \|\sum_{i \in [k]} (X_{i} - \E[X_{i}]) \|_{L^{t}} 
    \leq & ~ L_{1} \|\sum_{i \in [k]} \sigma_{i} X_{i} \|_{L^{t}} \notag\\
    \leq & ~ L_{2} \sqrt{t} \cdot \| \sum_{i \in [k]} X^{2}_{i} \|^{1/2}_{L^{t/2}} \notag\\
    \leq & ~ L_{2} \sqrt{t} \cdot \| \max_{i \in [k]} X_{i} \cdot \sum_{i \in [k]} X_{i} \|^{1/2}_{L^{t/2}} \notag\\
    \leq & ~ L_{2} \sqrt{t} \cdot  \| \max_{i \in [k]} X_{i} \|^{1/2}_{L^{t}} \cdot \| \sum_{i \in [k]} X_{i} \|^{1/2}_{L^{t}} \notag\\
    \leq & ~ L_{2} \sqrt{t} \cdot \|\max_{i \in [k]} X_{i} \|^{1/2}_{L^{t}} \cdot  \Big( (\sum_{i \in [k]} \E [X_{i}])^{1/2} + L_{2} \| \sum_{i \in [k]} (X_{i} - \E[X_{i}]) \|^{1/2}_{L^{t}} \Big)
\end{align}
where the first step follows from symmetrization of $X_{i}$, the second step follows from Khintchine's inequality (see Lemma~\ref{lem:khintchine}), the third step follows from Non-negativity of $X_{i}$, the fourth step follows from Cauchy-Schwartz inequality, and the last step follows from the triangle inequality.

Now, let $A, B, C$ be defined as follows:
\begin{align*}
    C := \|\sum_{i \in [k]}(X_{i} - \E[X_{i}])\|^{1/2}_{L^{t}},
\end{align*}
\begin{align*}
    B := L_{2} (\sum_{i \in [k]}\E[X_{i}])^{1/2},
\end{align*}
 and 
\begin{align*}
    A := \sqrt{t} \| \max_{i \in [k]} X_{i} \|^{1/2}_{L^{t}}.
\end{align*} 
Then, by rewriting Eq.~\eqref{eq:X_i_EX_i}, we have 
\begin{align*}
    C^{2} \leq A(B+C).
\end{align*}

This implies $C$ is smaller than the largest of the roots of the quadratic. 

Solving this quadratic inequality gives 
\begin{align*}
    C^{2} \leq L_{3} (AB+A^{2}),
\end{align*}
which completes the proof.
\end{proof}

\subsection{{\sf SRCT} and {\sf TensorSRCT} satisfy strong JL moment property}
We can now prove that {\sf SRCT} (see Definition~\ref{def:CGT}) and {\sf TensorSRCT} (see Definition~\ref{def:Tensor_CGT}) have the strong JL moment property.

\begin{theorem}
\label{thm:PGD_strong_JL_moment}
There exists a universal constant $L$, such that, the following holds. 

Let $k \in \Z_{>0}$. Let $(D^{(i)})_{i \in [k]} \in \prod_{i \in [k]} \R^{n_{i} \times n_{i}}$ be independent diagonal matrices with independent Rademacher variables. 

We define $n := \prod_{i \in [k]} n_{i}$, $D := D_{1} \otimes D_{2} \otimes \dots \otimes D_{k} \in \R^{n \times n}$ and $G:=G_1\otimes \ldots \otimes G_k \in \R^{n\times n}$, where each $G_i\in \R^{n_i\times n_i}$ is a circulant matrix generated by an independent Rademacher random vector. Let $P \in \R^{m \times n}$ be a row sampling matrix that has exactly one nonzero per row. Let $S:=PGD$.

Let $x \in \R^{n}$ be any vector with $\|x\|_{2} = 1$ and $t \geq 1$. 

Then,
\begin{align*}
    \left\|\frac{1}{m} \|PGDx\|_{2}^{2} - 1\right\|_{L^{t}} \leq 
    L (\sqrt{\frac{tr^{k}}{m}} + \frac{tr^{k}}{m}),
\end{align*}
where $r = \max \{t, \log m\}$.

There exists a universal constant $C_0 > 1$, such that, by setting 
\begin{align*}
    m = \Omega(\eps^{-2} \log (1/\delta)  \cdot (C_0 \log(1/ (\eps \delta)) )^{k}),
\end{align*}
we get that $\frac{1}{\sqrt{m}}PGD$ has $(\eps, \delta)$-strong JL moment property.
\end{theorem}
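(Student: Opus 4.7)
The plan is to follow the standard two-stage template for proving a strong JL moment property of a sketch with Rademacher-type entries: write $\|PGDx\|_2^2$ as a sum of sampled entries of $y:=GDx$, split the deviation from its mean into a ``sampling'' piece and a ``chaos'' piece, bound the sampling piece via the Rosenthal-type inequality (Lemma~\ref{lem:t_moment}) conditional on $y$, and bound both pieces via the higher-order Khintchine inequality for products of independent Rademacher vectors (Lemma~\ref{lem:bound_moment_of_sigma_dot_a}) to get the $t^{k/2}$-style chaos moment estimates.

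Setting $y := GDx$ and letting $I_1,\ldots,I_m$ denote the i.i.d.\ uniform indices encoded by $P$, we have $\|PGDx\|_2^2 = \sum_{j=1}^m y_{I_j}^2$. Since every entry of $GD$ is a product of $\pm 1$ variables drawn from the $G_l$ and $D_l$ factors, $(GD)_{i,j}^2 = 1$ almost surely, so $\E[y_i^2] = \|x\|_2^2 = 1$. Writing $\mu := \|y\|_2^2/n$, I decompose
\[\tfrac{1}{m}\|PGDx\|_2^2 - 1 \;=\; \underbrace{\tfrac{1}{m}\sum_{j=1}^m (y_{I_j}^2 - \mu)}_{\text{sampling error}} \;+\; \underbrace{\mu - 1}_{\text{chaos error}},\]
which is necessary because the $y_{I_j}^2$ are only \emph{conditionally} i.i.d.\ given $y$, not unconditionally independent. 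By Minkowski it suffices to bound each summand in $L^t$.

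For the sampling error I apply Lemma~\ref{lem:t_moment} conditional on $y$ (given which the summands are i.i.d.\ with mean $\mu$ and maximum $\|y\|_\infty^2$) and then take expectation over $y$; this reduces everything to bounding $\|y_i\|_{L^r}$ and $\|\max_{j\in[m]} y_{I_j}^2\|_{L^r}$ for $r := \max(t,\log m)$. To bound $\|y_i\|_{L^t}$, I further condition on the circulant-generating Rademachers $g^{(1)},\ldots,g^{(k)}$, so $y_i = \sum_j \bigl(\prod_l (G_l)_{i_l,j_l}\bigr)\prod_l \sigma^{(l)}_{j_l}\cdot x_j$ becomes a linear functional of a $k$-fold product of independent Rademacher vectors against a coefficient tensor of Euclidean norm $\|x\|_2=1$ (the $G$-signs have unit modulus). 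Lemma~\ref{lem:bound_moment_of_sigma_dot_a} then gives $\|y_i\|_{L^t|G}\lesssim C^k t^{k/2}$ uniformly in $G$, hence $\|y_i^2\|_{L^t}\lesssim C^{2k}t^k$. The union-type bound $\|\max_{j\in[m]} X_j\|_{L^r}\le m^{1/r}\max_j\|X_j\|_{L^r}$ with the choice $r = \max(t,\log m)$ absorbs the $m^{1/r}\le e$ factor and, after dividing by $m$, produces the desired $\sqrt{tr^k/m}+tr^k/m$ bound.

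For the chaos error, expanding $y_i^2 = 1 + \sum_{j_1\neq j_2}(GD)_{i,j_1}(GD)_{i,j_2}x_{j_1}x_{j_2}$ and summing over $i$ yields $\mu - 1 = \tfrac{1}{n}\,x^\top D(G^\top G - nI)Dx$, a polynomial of degree $\le 4k$ in independent Rademachers whose coefficient tensor has Frobenius norm $O(\sqrt n)$; a further application of Lemma~\ref{lem:bound_moment_of_sigma_dot_a} yields $\|\mu-1\|_{L^t}\lesssim t^k/\sqrt n$, which is dominated by the sampling error because $m\le n$. Combining the two estimates gives the first inequality of the theorem; the strong $(\epsilon,\delta)$-JL moment property follows by setting $t = \log(1/\delta)$, noting that the stated $m$ makes $r\lesssim\log(1/(\epsilon\delta))$, and checking that each summand drops below $\epsilon\sqrt{t/\log(1/\delta)}$. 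The main obstacle is the careful bookkeeping around the two intertwined Rademacher sources (the $G$-signs are reused across rows because of the circulant structure while the $D$-signs are not), which forces the Fubini-style conditioning in the Khintchine step; a secondary subtlety is sharpening the conditional Rosenthal bound to an unconditional $L^t$-bound while keeping exactly the $r^k$ dependence rather than $(\log n)^k$, which is what pins down $r=\max(t,\log m)$ in the final bound.
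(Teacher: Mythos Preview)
Your overall plan is the paper's: set $Z_i:=(PGDx)_i$, apply the Rosenthal-type Lemma~\ref{lem:t_moment} to $\sum_i Z_i^2$ after conditioning, bound $\|Z_i^2\|_{L^r}$ via the higher-order Khintchine inequality (Lemma~\ref{lem:bound_moment_of_sigma_dot_a}) by freezing the $\pm 1$ entries of $G$ and viewing $Z_i$ as a $k$-linear form in the $D$-Rademachers with unit-norm coefficients, and control $\|\max_i Z_i^2\|_{L^t}$ by passing to the $r$-th moment with $r=\max\{t,\log m\}$ so that $m^{1/r}\le e$. These are exactly the ingredients the paper uses.

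The one substantive difference is your sampling/chaos split. The paper does \emph{not} introduce a $\mu-1$ term: it conditions only on $D$, asserts that the $Z_i$ are conditionally independent given $D$, and uses $\E[Z_i^2\mid D]=\|x\|_2^2=1$ directly (the $G$-randomness makes each row of $G$ a sign vector with $\E_G[G_{p,j}G_{p,j'}]=[j=j']$). Your route of conditioning on the full $y=GDx$ is arguably the cleaner one for circulants, since the $Z_i$ all share the random $G$ and are therefore not literally independent given $D$ alone; the chaos term is the price you pay. However, your bound on that term is not correct as written: $\mu-1=\tfrac{1}{n} x^\top D(G^\top G-nI)Dx$ is degree~$2$, not degree~$1$, in each of the $2k$ Rademacher families $\sigma^{(l)}$ and $g^{(l)}$, so Lemma~\ref{lem:bound_moment_of_sigma_dot_a} (which is for multilinear forms) does not apply directly. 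Using hypercontractivity instead gives $\|\mu-1\|_{L^t}\lesssim t^{2k}/\sqrt n$ rather than $t^{k}/\sqrt n$, and ``$m\le n$'' alone is not enough to make this dominated by $\sqrt{tr^k/m}$; you need a mild polylogarithmic lower bound on $n$ in terms of $m$ and $t$ to absorb it into the stated inequality.
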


\begin{proof}
Throughout the proof $C_{1}$, $C_{2}$ and $C_{3}$ will denote universal constants.

For every $i \in [m]$, we let $P_{i}$ be the random variable that says which coordinates the $i$-th row of $P$ samples. 

We define the random variable 
\begin{align*}
    Z_{i} := M_{i}x=G_{P_{i}}Dx.
\end{align*}

We note that since the variables $(P_{i})_{i \in [m]}$ are independent, so the variables $(Z_{i})_{i \in [m]}$ are conditionally independent given $D$, that is, if we fix $D$, then $(Z_{i})_{i \in [m]}$ are independent.

Then, we could get the following inequality:
\begin{align*}
    & ~ \|\frac{1}{m}\sum_{i \in [m]}Z^{2}_{i} - 1\|_{L^{t}} \\
    = & ~ \| (\E[(\frac{1}{m} \sum_{i \in [m]}Z^{2}_{i} - 1) ~\big|~ D])^{1/t} \|_{L^{t}} \\
    \leq & ~ C_{1} \| \frac{\sqrt{t}}{m} \cdot (\E[(\max_{i \in [m]}Z^{2}_{i}) ~\big|~ D])^{1/(2t)} \cdot (\sum_{i \in [m]} \E[Z^{2}_{i} ~\big|~ D])^{1/2} + \frac{t}{m} \cdot (\E[(\max_{i \in [m]} Z^{2}_{i})^{t} ~\big|~ D])^{1/t} \|_{L^{t}}   \\
    \leq & ~ C_{1} \frac{\sqrt{t}}{m} \cdot \| (\E[(\max_{i \in [m]}Z^{2}_{i}) ~\big|~ D])^{1/(2t)} \cdot (\sum_{i \in [m]} \E[Z^{2}_{i} ~\big|~ D])^{1/2} \|_{L^{t}} + C_{1} \frac{t}{m} \cdot \|\max_{i \in [m]} Z^{2}_{i} \|_{L^{t}} \\
    \leq & ~ C_{1} \frac{\sqrt{t}}{m} \cdot \| \max_{i \in [m]} Z^{2}_{i} \|^{1/2}_{L^{t}} \cdot \| \sum_{i \in [m]} \E[Z^{2}_{i} ~\big|~ D] \|^{1/2}_{L^{t}} + C_{1} \frac{t}{m} \cdot \| \max_{i \in [m]} Z^{2}_{i} \|_{L^{t}}
\end{align*}
where the first step follows from Definition~\ref{def:t_moment}, the second step follows from Lemma~\ref{lem:t_moment}, the third step follows from triangle inequality, and the last step follows from Cauchy-Schwartz inequality.

Note that each row of $G$ is generated by taking the tensor product of independent Rademacher random vectors, we thus can view the row vector itself as a length $n$ Rademacher random vector. Thus,
\begin{align}\label{eq:x22}
    \E[Z^{2}_{i} | D] 
    = & ~  \sum_{\sigma \in \{-1,+1\}^n} p_{\sigma} \cdot (\langle x, \sigma \rangle)^2 \notag  \\
    = & ~ \frac{(x_{1} + x_{2} + \cdots + x_n )^{2}}{2^n} + \frac{(x_{1} + x_{2} + \cdots - x_n )^{2}}{ 2^n } + \dots + \frac{(- x_{1} - x_{2} - \cdots - x_n )^{2}}{2^n } \notag \\
    = & ~ x^{2}_{1} + x^{2}_{2} + \cdots + x^{2}_n \notag \\
    = & ~ \|x\|^{2}_{2},
\end{align}
where the first step follows from the definition of the expected value, $\E[Z^{2}_{i} | D]$, the second step follows from expanding all the $2^n$ possibilities, the third step follows from simple algebra, and the last step follows from the definition of $\|\cdot\|_2^2$.

We could get that
\begin{align*}
    \sum_{i \in [m]} \E[Z_{i}^{2} ~\big|~ D] 
    = & ~ \sum_{i \in [m]} \|x\|_{2}^{2} \\
    = & ~ m,
\end{align*}
where the first step follows from Eq.~\eqref{eq:x22} and the second step follows from $\|x\|^{2}_{2} = 1$.

To bound $\|\max_{i \in [m]} Z^{2}_{i}\|_{L^{t}}$, we could show
\begin{align*}
    \|Z^{2}_{i}\|_{L^{r}} 
    = & ~ \|G_{P^{i}}Dx\|^{2}_{L^{2r}} \\
    = & ~ \|Dx\|^{2}_{L^{2r}} \\
    \leq & ~ r^{k} \|x\|^{2}_{2}.
\end{align*}
where the first step follows from the definition of $Z_{i}$, the second step follows from each row of $G$ is independent Rademacher vector, therefore $\E[G^\top G]=I$, and the last step follows from Lemma~\ref{lem:bound_moment_of_sigma_dot_a}.

We then bound the maximum using a sufficiently high-order sum:
\begin{align*}
    \|\max_{i \in [m]} Z^{2}_{i} \|_{L^{t}} 
    \leq & ~ \|\max_{i \in [m]} Z^{2}_{i} \|_{L^{r}} \\
    \leq & ~ (\sum_{i \in [m]} \|Z^{2}_{i}\|^{r}_{L^{r}})^{1/r} \\
    \leq & ~ m^{1/r} r^{k} \|x\|_{2}^{2} \leq er^{k},
\end{align*}
where the first step follows from Definition~\ref{def:t_moment}, the second step follows from $Z^{2}_{i}$ is non-negative, and the last step follows from $r \geq \log m$. 

This gives us that
\begin{align}\label{eq:norm_L_t}
    \|\frac{1}{m} \sum_{i \in [m]} Z^{2}_{i} - \|x\|_{2}^{2} \|_{L^{t}} \leq C_{2} \sqrt{\frac{tr^{k}}{m}} + C_{2} \frac{tr^{k}}{m}
\end{align}
which finishes the first part of the proof.

We want to choose $m$ as follows 
\begin{align*}
    m= 16 C^{2}_{2}\eps^{-2} \cdot \log(1/\delta) \cdot (C_{3}\log(1/(\delta \eps)))^{k}.
\end{align*}

According to the above choice of $m$, we know following condition for $r$ is holding
\begin{align*}
    r \leq C_{3} \log (1/(\delta \eps)).
\end{align*}

Hence, 
\begin{align*}
    m \geq 16 C^{2}_{2} \eps^{-2} \cdot \log(1/\delta) \cdot r^{k}.
\end{align*}

For all $1 \leq t \leq \log (1/\delta)$, we then get that
\begin{align*}
    \| \|PGDx\|_{2}^{2} - 1\|_{L^{t}} 
    \leq & ~  C_{2} \sqrt{\frac{tr^{k}}{m}} + C_{2} \frac{tr^{k}}{m} \\
    \leq & ~ C_{2} (\frac{tr^{k}}{16 C_{2}^{2}\eps^{-2}\log(1/\delta)r^{k}})^{1/2} + C_{2} \frac{tr^{k}}{ 16 C^{2}_{2} \eps^{-2} \log (1/\delta)r^{k}} \\
    \leq & ~ 0.5 \epsilon \sqrt{t/\log(1/\delta)} + 0.5  \epsilon^2 t / \log(1/\delta) \\
    \leq & ~ \epsilon \sqrt{{t}/{\log(1 / \delta)}}.
\end{align*}
 where the first step follows from Eq.~\eqref{eq:norm_L_t}, and the second step follows from choice of $m$, the third step follows from simple algebra, and the last step follows from $\epsilon^2 \leq \epsilon$ and $t/\log(1/\delta) \sqrt{t/\log(1/\delta)}$ (since $t/\log(1/\delta) \in (0,1)$) .

This finishes the proof.
\end{proof}

As two corollaries, we have {\sf SRCT} and {\sf TensorSRCT} are {\sf OSE}'s with $d^2$ rows, instead of $d$ rows.

\begin{corollary}[{\sf SRCT} is an {\sf OSE}]
\label{cor:srct_imply_ose}
Let $S\in \R^{m\times n}$ be an {\sf SRCT} matrix with $m=\Theta(\epsilon^{-2}d^2\log^2(n/\epsilon\delta))$, then $S$ is an $(\epsilon,\delta,d,n)$-{\sf OSE}.
\end{corollary}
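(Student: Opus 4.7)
The plan is to combine Theorem~\ref{thm:PGD_strong_JL_moment} with Lemma~\ref{lem:strong_jl_to_ose} in a direct chain. First I would note that an {\sf SRCT} matrix (Definition~\ref{def:CGT}) is exactly the $k = 1$ instance of the general matrix $\frac{1}{\sqrt{m}} P G D$ analyzed in Theorem~\ref{thm:PGD_strong_JL_moment}: there is only one Rademacher diagonal $D$ and a single circulant $G = G_1 \in \R^{n \times n}$, with no tensor product structure.

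Next I would invoke Theorem~\ref{thm:PGD_strong_JL_moment} with $k = 1$ but applied at error parameter $\epsilon/d$ rather than $\epsilon$. Concretely, the theorem guarantees that $\frac{1}{\sqrt{m}} P G D$ enjoys the strong $(\epsilon/d, \delta)$--JL moment property (Definition~\ref{def:strong_JL_moment}) provided
\begin{align*}
    m \;=\; \Omega\bigl((\epsilon/d)^{-2}\,\log(1/\delta)\,\cdot C_0\log(1/((\epsilon/d)\delta))\bigr) \;=\; \Omega\bigl(\epsilon^{-2} d^2\,\log(1/\delta)\,\log(d/(\epsilon\delta))\bigr).
\end{align*}
Since $d \leq n$, both $\log(1/\delta)$ and $\log(d/(\epsilon\delta))$ are dominated by $\log(n/(\epsilon\delta))$, so the assumed row count $m = \Theta(\epsilon^{-2} d^2 \log^2(n/(\epsilon\delta)))$ comfortably meets this requirement.

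Finally, I would feed the strong $(\epsilon/d, \delta)$--JL moment property directly into Lemma~\ref{lem:strong_jl_to_ose}, which is precisely the black-box statement that $(\epsilon/d, \delta)$--strong JL moment implies the $(\epsilon, \delta, d, n)$--{\sf OSE} property. This immediately yields the corollary. No other probabilistic or algebraic steps are required because all the heavy lifting (the moment bound for $\|PGDx\|_2^2$ via Khintchine-type inequalities for chaos and the Rosenthal-style inequality of Lemma~\ref{lem:t_moment}) has already been carried out inside Theorem~\ref{thm:PGD_strong_JL_moment}.

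The only mild subtlety, and therefore the main thing to be careful about, is bookkeeping the dependence of the final row count on $\epsilon$, $d$, $\delta$, and $n$: one must verify that rescaling the JL target from $\epsilon$ to $\epsilon/d$ preserves the shape $\log^2(n/(\epsilon\delta))$ after absorbing $\log(1/\delta)$ and $\log(d/(\epsilon\delta))$ into a single $\log(n/(\epsilon\delta))^2$ factor. Beyond this arithmetic check the proof is a one-line composition.
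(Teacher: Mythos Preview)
Your proposal is correct and follows exactly the same approach as the paper: combine Theorem~\ref{thm:PGD_strong_JL_moment} with $k=1$ (applied at accuracy $\epsilon/d$) together with Lemma~\ref{lem:strong_jl_to_ose}. The paper's own proof is literally a one-liner invoking these two results, so your additional bookkeeping on the row-count arithmetic only makes the argument more explicit.
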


\begin{proof}
The proof follows from combining Lemma~\ref{lem:strong_jl_to_ose} and Theorem~\ref{thm:PGD_strong_JL_moment} with $k=1$.
\end{proof}

\begin{corollary}[{\sf TensorSRCT} is an {\sf OSE}]
\label{cor:tensor_srct_ose_1}
Let $S\in \R^{m\times n}$ be a {\sf TensorSRCT} matrix with $m=\Theta(\epsilon^{-2}d^2\log^3(n/\epsilon\delta))$, then $S$ is an $(\epsilon,\delta,d,n^2)$-{\sf OSE}.
\end{corollary}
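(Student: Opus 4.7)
The plan is to chain together two tools already established in the excerpt, exactly mirroring the proof of Corollary~\ref{cor:srct_imply_ose} but with $k=2$ in place of $k=1$. Concretely, I would invoke Theorem~\ref{thm:PGD_strong_JL_moment} (strong JL moment property for the product construction $S = \frac{1}{\sqrt{m}} PGD$) and then feed its output into Lemma~\ref{lem:strong_jl_to_ose}, which converts the strong JL moment property into an oblivious subspace embedding after rescaling $\epsilon \to \epsilon/d$.

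First, I would observe that a {\sf TensorSRCT} matrix acts on $\R^{n^2}$ and fits the template $S = \frac{1}{\sqrt{m}} P G D$ of Theorem~\ref{thm:PGD_strong_JL_moment} with $k=2$: the diagonal factor is $D = D_1 \otimes D_2$ with i.i.d.\ Rademacher entries, the circulant factor is built from Rademacher generators, and $P$ samples rows uniformly. By that theorem, $S$ has the $(\epsilon', \delta)$-strong JL moment property whenever
\begin{align*}
m \;=\; \Omega\bigl((\epsilon')^{-2} \log(1/\delta)\, (C_0 \log(1/(\epsilon'\delta)))^{2}\bigr).
\end{align*}

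Second, to apply Lemma~\ref{lem:strong_jl_to_ose}, which states that $(\epsilon/d,\delta)$-strong JL implies $(\epsilon,\delta,d,n^2)$-{\sf OSE} on the $n^2$-dimensional ambient space, I would substitute $\epsilon' = \epsilon/d$ into the row count above, obtaining the requirement
\begin{align*}
m \;=\; \Omega\bigl(\epsilon^{-2} d^{2} \log(1/\delta)\, \log^{2}(d/(\epsilon\delta))\bigr).
\end{align*}

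Finally, I would absorb the logarithmic factors using $d \le n$: both $\log(d/(\epsilon\delta)) \le \log(n/(\epsilon\delta))$ and $\log(1/\delta) \le \log(n/(\epsilon\delta))$, so the above bound is dominated by $m = O(\epsilon^{-2} d^{2} \log^{3}(n/(\epsilon\delta)))$, matching the claimed row count. The argument is essentially a parameter substitution: all the technical effort already lies in Theorem~\ref{thm:PGD_strong_JL_moment}, and the only obstacles are the correct rescaling $\epsilon \to \epsilon/d$ demanded by Lemma~\ref{lem:strong_jl_to_ose} and the bookkeeping of logarithmic factors. There is no genuine mathematical difficulty beyond these routine manipulations, which is why the corollary can be stated as a direct consequence of the two ingredients already developed.
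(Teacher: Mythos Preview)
Your proposal is correct and follows exactly the paper's approach: the paper's proof simply states that the result follows from combining Lemma~\ref{lem:strong_jl_to_ose} and Theorem~\ref{thm:PGD_strong_JL_moment} with $k=2$. Your additional bookkeeping of the $\epsilon\to\epsilon/d$ rescaling and the logarithmic factors is a faithful expansion of that one-line argument.
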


\begin{proof}
The proof follows from combining Lemma~\ref{lem:strong_jl_to_ose} and Theorem~\ref{thm:PGD_strong_JL_moment} with $k=2$.
\end{proof}

\section{Gaussian and AMS}
\label{sec:gaussian_ams}

In this section, we prove that both random Gaussian matrices and AMS matrices satisfy {\sf OCE} with good parameter $\beta$. Combining with the fact that they are {\sf OSE}'s, one can derive $\ell_\infty$ guarantee for them.

\subsection{{\sf OSE} property of random Gaussian and AMS}

The {\sf OSE} property for these two distributions are folklore. For a proof for them, see, e.g.,~\cite{w14}.

\begin{lemma}
\label{lem:gaussian_imply_ose}
Let $S$ be a random Gaussian matrix defined in Def.~\ref{def:random_Gaussian}. If $m=\Theta(\epsilon^{-2}(d+\log(d/\delta)))$, then $S$ is an $(\epsilon,\delta,d,n)$-{\sf OSE}.
\end{lemma}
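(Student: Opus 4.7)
The plan is to reduce the subspace-embedding statement to a concentration statement about Gaussian vectors, then close it with a standard $\epsilon$-net argument on the unit sphere in $\R^d$. First I would fix an arbitrary $n \times d$ matrix $U$ with orthonormal columns. By rotational invariance of the Gaussian distribution and the fact that $S$ has i.i.d.\ $\mathcal{N}(0,1/m)$ entries, the $m \times d$ matrix $SU$ has the same distribution as a freshly sampled $m \times d$ random Gaussian with entries $\mathcal{N}(0,1/m)$. In particular, for any fixed $x \in \R^d$ with $\|x\|_2 = 1$, the vector $SUx$ is $\mathcal{N}(0, I_m/m)$, so $m \|SUx\|_2^2$ is a $\chi^2_m$ random variable.

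Next I would apply the Laurent--Massart bound (Lemma~\ref{lem:lm}) with $\sigma^2 = 1/m$ and $k = m$, which yields
\begin{align*}
    \Pr\bigl[\,\bigl|\|SUx\|_2^2 - 1\bigr| \geq \eps/2\,\bigr] \leq 2\exp(-c \eps^2 m)
\end{align*}
for an absolute constant $c > 0$, valid for $\eps \in (0,1)$. I would then invoke a standard covering argument: there exists an $(\eps/4)$-net $\mathcal{N}$ of the Euclidean unit sphere $S^{d-1}$ of size $|\mathcal{N}| \leq (12/\eps)^d$, and it is classical that if $\bigl|\|SUx\|_2^2 - 1\bigr| \leq \eps/2$ holds for every $x \in \mathcal{N}$, then $\|I_d - U^\top S^\top S U\| \leq \eps$, equivalently every singular value of $SU$ lies in $[1-\eps, 1+\eps]$.

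Combining these via a union bound over $\mathcal{N}$, the failure probability is at most $2 (12/\eps)^d \exp(-c \eps^2 m)$. To drive this below $\delta$ it suffices to take
\begin{align*}
    m \;\geq\; C \eps^{-2} \bigl(d \log(1/\eps) + \log(1/\delta)\bigr)
\end{align*}
for a sufficiently large constant $C$, which fits inside the claimed bound $m = \Theta(\eps^{-2}(d + \log(d/\delta)))$ after absorbing logarithmic factors (the statement as written is the textbook version where one hides the $\log(1/\eps)$ factor, or alternatively uses a slightly more refined argument via Gordon's theorem).

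Since this is a classical result, there is no genuine technical obstacle; the only step deserving care is checking that the net argument indeed transfers a pointwise $\|SUx\|_2^2 \approx 1$ guarantee into the operator norm bound $\|I_d - U^\top S^\top S U\| \leq \eps$, and then bookkeeping constants so that the final $m$ matches the statement of Lemma~\ref{lem:gaussian_imply_ose}.
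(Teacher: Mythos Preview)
Your proposal is correct, and in fact goes further than the paper itself: the paper does not prove Lemma~\ref{lem:gaussian_imply_ose} at all, stating only that the {\sf OSE} property for random Gaussian matrices is folklore and citing~\cite{w14} for a proof. So there is nothing to compare on the paper's side.

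One minor remark on your argument: the $\epsilon$-net route you outline yields $m \gtrsim \eps^{-2}(d\log(1/\eps) + \log(1/\delta))$, which carries an extra $\log(1/\eps)$ factor relative to the stated $m = \Theta(\eps^{-2}(d + \log(d/\delta)))$. You already flag this and point to Gordon's theorem as the sharper alternative; that is indeed the standard way to remove the $\log(1/\eps)$ for Gaussians, so if you want to match the lemma exactly you should carry out that route rather than the net argument. Either way the result is classical and your sketch is sound.
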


\begin{lemma}
\label{lem:ams_imply_ose}
Let $S$ be an AMS matrix defined in Def.~\ref{def:AMS}. If $m=\Theta(\epsilon^{-2}d\log^2(n/\delta))$, then $S$ is an $(\epsilon,\delta,d,n)$-{\sf OSE}.
\end{lemma}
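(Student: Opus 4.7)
The plan is to reduce the OSE property to a single-vector concentration bound and then lift to the whole $d$-dimensional subspace via a standard net argument. This mirrors the classical proof for Gaussian OSE but requires care because the sign entries of AMS come from a 4-wise (rather than fully) independent hash family.

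First, I would establish a distributional JL-type tail bound for AMS on a single fixed vector. For any unit $x \in \R^n$, decompose $\|Sx\|_2^2 = \frac{1}{m} \sum_{i=1}^m \langle \sigma^{(i)}, x \rangle^2$, where $\sigma^{(i)}$ denotes the $i$-th row's signs and the $\sigma^{(i)}$'s are independent across $i$. Each summand has mean $\|x\|_2^2 = 1$ and, using 4-wise independence within a row, a bounded second moment. Then I would apply a moment bootstrap (either via higher-wise hashing implicit in the analysis, or equivalently via a chaining-style argument as used in the strong JL moment proof, Theorem~\ref{thm:PGD_strong_JL_moment}) to the row-wise sums to obtain a subgaussian-type bound
\begin{align*}
\Pr\bigl[\bigl|\|Sx\|_2^2 - 1\bigr| > \epsilon\bigr] \leq 2\exp(-\Omega(m \epsilon^2 / \log(n/\delta))).
\end{align*}
The $\log(n/\delta)$ loss relative to Gaussians reflects the cost of bounded independence.

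Second, I would lift to the whole subspace. Let $U \in \R^{n \times d}$ be an orthonormal basis for the target subspace, and build a $(1/4)$-net $\mathcal{N}$ of the unit sphere of $\mathrm{range}(U)$ with $|\mathcal{N}| \leq 9^d$. Applying the above tail bound with failure probability $\delta / |\mathcal{N}|$ and union-bounding gives that $\bigl|\|SUy\|_2^2 - 1\bigr| \leq \epsilon$ for every $y \in \mathcal{N}$ simultaneously, provided
\begin{align*}
m \gtrsim \epsilon^{-2}(d + \log(1/\delta)) \cdot \log(n/\delta).
\end{align*}
A standard net-to-sphere approximation (iteratively approximating $y \in S^{d-1}$ by net points and summing a geometric series) then promotes this to the uniform bound $\|I_d - U^\top S^\top S U\| \leq O(\epsilon)$ on the full sphere, which rescales to the $(\epsilon, \delta, d, n)$-OSE property after adjusting constants.

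The main obstacle is getting the subgaussian row-wise concentration starting only from 4-wise independence, since Chebyshev alone would yield a $1/(m\epsilon^2)$ tail and the resulting OSE bound would be exponentially worse in $d$. I expect to resolve this by invoking a bootstrap like Lemma~\ref{lem:t_moment} together with Khintchine-type moment control (Lemma~\ref{lem:bound_moment_of_sigma_dot_a}) on blocks of rows, which effectively converts independence across the $m$ rows into higher-moment bounds; this step is exactly what injects the extra $\log(n/\delta)$ factor, yielding the stated $m = \Theta(\epsilon^{-2} d \log^2(n/\delta))$ once combined with the $d$-dependent net size.
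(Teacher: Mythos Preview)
The paper does not prove this lemma; it declares it folklore and points to~\cite{w14}. So there is nothing in-paper to compare against---the only question is whether your outline would close.

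The skeleton (per-vector tail, then union bound over a $9^d$-net) is the standard route for Gaussian or fully independent sign matrices, but there is a real gap at the first step. Under Definition~\ref{def:AMS} the signs within a row are only $4$-wise independent. Lemma~\ref{lem:bound_moment_of_sigma_dot_a} and the moment bootstrap in Theorem~\ref{thm:PGD_strong_JL_moment} both rest on Khintchine's inequality holding for moments of order $t\asymp\log(1/\delta)$; $4$-wise independence controls only moments up to order~$4$. With that alone, the per-vector tail you can extract is Chebyshev-type, $\Pr\bigl[\,|\|Sx\|_2^2-1|>\epsilon\,\bigr]=O(1/(m\epsilon^2))$, which is far too weak to survive a union bound over $9^d$ net points without $m$ exponential in~$d$. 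Your parenthetical ``higher-wise hashing implicit in the analysis'' is doing all the work here and is not implicit: making it explicit means replacing the $4$-wise family in Definition~\ref{def:AMS} by a $\Theta(\log(n/\delta))$-wise one, i.e.\ changing the sketch. Once you grant that upgrade, your argument does go through essentially as written and the extra logarithmic factors fall out of the row-wise Khintchine bound---but as a proof of the lemma \emph{as stated}, the proposal does not close. The arguments in the cited literature that reach the stated row count exploit the full independence \emph{across} rows via a matrix-concentration inequality on $U^\top S^\top S U-I=\sum_i(z_iz_i^\top-\tfrac{1}{m}I_d)$ directly, rather than a per-vector subgaussian tail plus net; that is the replacement you would need for your first step.
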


\subsection{{\sf OCE} property of random Gaussian and AMS}

In this section, we prove the {\sf OCE} property of random Gaussian and AMS. We start with the pairwise inner product bound for these two distributions. For column norm bound, AMS has unit columns and we will prove for random Gaussian.

\begin{lemma}[Gaussian pairwise inner product bound, Lemma B.18 in \cite{sy21}]\label{lem:gaussian_S_i_S_j}
    Let $S \in \R^{m \times n}$ be a random Gaussian matrix (Definition~\ref{def:random_Gaussian}). 
    
    Then, we have:
    \begin{align*}
        \Pr[\max_{i\not= j} | \langle S_{*,i},  S_{*,j} \rangle | \geq C\cdot\frac{\sqrt{\log{(n/\delta)}}}{\sqrt{m}}] \leq \Theta(\delta).
    \end{align*}
\end{lemma}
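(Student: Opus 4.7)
} The plan is to condition on one column and exploit the rotational invariance of Gaussians. Fix a pair of indices $i \neq j$. Because the columns $S_{*,i}$ and $S_{*,j}$ are independent, once we condition on $S_{*,j} = v$, the random variable $\langle S_{*,i}, v \rangle$ is a linear combination of the independent $\N(0,1/m)$ entries of $S_{*,i}$, and hence is distributed as $\N(0, \|v\|_2^2 / m)$. Thus a standard Gaussian tail bound yields
\begin{align*}
\Pr[\, |\langle S_{*,i}, S_{*,j}\rangle| \geq t \mid S_{*,j} = v\,] \leq 2\exp\!\left(-\frac{t^2 m}{2\|v\|_2^2}\right).
\end{align*}

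The next step is to control the norm $\|S_{*,j}\|_2$ so that the conditional variance is not too large. Since $m \cdot \|S_{*,j}\|_2^2 \sim \chi^2_m$, applying Lemma~\ref{lem:lm} with $t = \Theta(\log(n/\delta))$ gives $\|S_{*,j}\|_2^2 \leq 1 + O(\sqrt{\log(n/\delta)/m} + \log(n/\delta)/m) \leq 2$ with probability $1 - \delta/(2n)$, assuming $m \gtrsim \log(n/\delta)$. A union bound over $j \in [n]$ ensures that all $n$ column norms are simultaneously bounded by $2$ with probability $1 - \delta/2$.

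Conditioned on this event, the tail bound above implies
\begin{align*}
\Pr[\, |\langle S_{*,i}, S_{*,j}\rangle| \geq t\,] \leq 2\exp\!\left(-\frac{t^2 m}{4}\right).
\end{align*}
Choosing $t = C \sqrt{\log(n/\delta)/m}$ for a sufficiently large absolute constant $C$ drives the right-hand side below $\delta / n^2$. A final union bound over the $\binom{n}{2} \leq n^2/2$ pairs $(i,j)$ contributes at most $\delta/2$ to the failure probability; combined with the column-norm event we obtain the claimed $\Theta(\delta)$ total failure probability.

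The steps here are all standard, so I do not anticipate a serious obstacle. The only mildly delicate bookkeeping is keeping the two union-bound budgets (over columns for norm control, and over pairs for inner-product control) separate so that both hold simultaneously with probability $1-\delta$, and verifying that the regime $m \gtrsim \log(n/\delta)$ implicitly required for the $\chi^2$ concentration is harmless, since otherwise the claimed bound $\frac{\sqrt{\log(n/\delta)}}{\sqrt m}$ is already $\Omega(1)$ and trivially true up to adjusting the constant $C$.
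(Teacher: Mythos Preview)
Your argument is correct, but it takes a different route from the paper. The paper works entry-by-entry: it writes $\langle S_{*,i},S_{*,j}\rangle=\sum_{k=1}^m S_{k,i}S_{k,j}$, observes that each summand is the difference of two $\chi^2_1$ variables (via the polarization identity $S_{k,i}S_{k,j}=\tfrac14(S_{k,i}+S_{k,j})^2-\tfrac14(S_{k,i}-S_{k,j})^2$), computes the MGF to show each summand is sub-exponential with parameters $(2/m^2,2/m)$, sums to get $z\in{\sf SubExp}(2/m,2/m)$, and applies Lemma~\ref{lem:tail_bound} directly. No conditioning, no separate column-norm event, and no need for the regime assumption $m\gtrsim\log(n/\delta)$.

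Your conditioning approach is also valid and arguably more elementary: it only needs the scalar Gaussian tail and the $\chi^2$ concentration of Lemma~\ref{lem:lm} (essentially reusing Lemma~\ref{lem:gaussian_S_i_S_i}), rather than the MGF of a product of Gaussians. One small point to tighten in the writeup: when you ``condition on the event $E$ that all column norms are at most $2$'' and then invoke the Gaussian tail for $S_{*,i}$, note that $E$ also constrains $\|S_{*,i}\|_2$, so $S_{*,i}$ is no longer exactly Gaussian under this conditioning. The clean fix is to bound $\Pr[\,|\langle S_{*,i},S_{*,j}\rangle|\ge t\text{ and }E\,]\le\Pr[\,|\langle S_{*,i},S_{*,j}\rangle|\ge t\text{ and }\|S_{*,j}\|_2^2\le 2\,]$ (since $E$ implies the latter), and then integrate over $S_{*,j}$ using independence of $S_{*,i}$ from $S_{*,j}$; this gives exactly the $2\exp(-t^2 m/4)$ you wrote. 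With that adjustment your proof goes through as stated.
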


\begin{proof}
    Note for $i \not= j$, $S_{*,i},  S_{*,j} \sim \mathcal{N}(0,\frac{1}{m} I_m)$ are two independent Gaussian vectors. Let $z_k = S_{k,i} S_{k,j}$ and $z = \langle S_{*,i},  S_{*,j} \rangle $. 
    
    Then, we have for any $|\lambda| \leq m/2$,
    \begin{align*}
        \E[e^{\lambda z_k}] = \frac{1}{\sqrt{1 - \lambda^2/m^2}} \leq \exp{(\lambda^2 / m^2)},
    \end{align*}
    where the first step follows from $z_k = \frac{1}{4} (S_{k,i} + S_{k,j})^2 +  \frac{1}{4} (S_{k,i} - S_{k,j})^2 = \frac{m}{2} (Q_1 - Q_2)$ where $Q_1,Q_2 \sim \chi_1^2$, and $\E[e^{\lambda Q}] = \frac{1}{\sqrt{1 - 2\lambda}}$ for any $Q\sim \chi_1^2$.

    This implies $z_k \in {\sf SubExp}(2/m^2, 2/m)$ is a sub-exponential random variable. Here ${\sf SubExp}$ is the shorthand of sub-exponential random variable.
    
    Thus, we have 
    \begin{align*}
        z = \sum_{k = 1}^m z_k \in {\sf SubExp}(2/m, 2/m),
    \end{align*}
    by sub-exponential concentration Lemma~\ref{lem:tail_bound}, we have
    \begin{align*}
        \Pr[|z| \geq t] \leq 2\exp{(-mt^2/4)}
    \end{align*}
    for $0 < t < 1$. Picking $t = \sqrt{\log{(n^2/\delta)/m}}$, we have
    \begin{align*}
        \Pr[|\langle S_{*,i},  S_{*,j} \rangle| \geq C\cdot \frac{\sqrt{\log{(n/\delta)}}}{\sqrt{m}}] \leq \delta/n^2.
    \end{align*}
    Taking the union bound over all $(i, j) \in [n] \times [n]$ and $i \not= j$, we complete the proof.
\end{proof}

\begin{lemma}[AMS pairwise inner product bound]\label{lem:AMS}
    Let $S \in \R^{m \times n}$ be an AMS matrix (Definition~\ref{def:AMS}. Let $\{\sigma_i\}_{i \in [n]}$ be independent Rademacher random variables and $\overline{S} \in \R^{m \times n}$ with $\overline{S}_{*, i} = \sigma_i S_{*, i}$, $\forall i \in [n]$. 
    
    Then, we have:
    \begin{align*}
        \Pr[\max_{i \not= j} |\langle \overline{S}_{*, i}, \overline{S}_{*, j}\rangle | \geq \frac{\sqrt{\log{(n/\delta)}}}{\sqrt{m}}] \leq \Theta(\delta)
    \end{align*}
\end{lemma}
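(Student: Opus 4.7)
The plan is to observe that the freshly drawn Rademacher signs $\sigma_i$ do not affect the magnitude of a pairwise inner product, so the problem reduces to the analogous bound for the bare AMS matrix $S$, which I can control with Hoeffding's inequality column-wise.

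First I would write $|\langle \overline{S}_{*,i}, \overline{S}_{*,j}\rangle| = |\sigma_i \sigma_j|\cdot|\langle S_{*,i}, S_{*,j}\rangle| = |\langle S_{*,i}, S_{*,j}\rangle|$, so it suffices to establish $\Pr[\max_{i\neq j}|\langle S_{*,i},S_{*,j}\rangle| \geq C\sqrt{\log(n/\delta)/m}] \leq \Theta(\delta)$. Then, fixing a pair $i \neq j$, I would define $X_k := S_{k,i} S_{k,j}$ and note three facts arising from Definition~\ref{def:AMS}: (a) $X_k \in \{-1/m,+1/m\}$ since $S_{k,i},S_{k,j}\in\{\pm 1/\sqrt m\}$; (b) $\E[X_k] = \E[h_k(i)]\E[h_k(j)] = 0$ by 2-wise (and hence by 4-wise) independence of $h_k$; (c) the variables $\{X_k\}_{k\in[m]}$ are mutually independent, because $h_1,\ldots,h_m$ are drawn independently from the hash family (different rows of $S$ use fresh hash functions).

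Next I would apply Hoeffding's bound (Lemma~\ref{lem:hoeffding}) with $\alpha_k=-1/m$, $\beta_k=1/m$, so $\sum_k(\beta_k-\alpha_k)^2 = 4/m$, giving
\begin{align*}
\Pr[|\langle S_{*,i},S_{*,j}\rangle| > t] \leq 2\exp(-t^2 m/8).
\end{align*}
Picking $t = C\sqrt{\log(n/\delta)/m}$ for a sufficiently large absolute constant $C$ makes the right-hand side at most $\delta/n^2$. A union bound over the at most $n^2$ distinct pairs $(i,j)$ then yields $\Pr[\max_{i\neq j}|\langle S_{*,i},S_{*,j}\rangle| \geq C\sqrt{\log(n/\delta)/m}] \leq \delta$, which combined with the opening reduction proves the claim.

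I do not anticipate a serious obstacle here: every ingredient is either definitional (sign cancellation, boundedness of $X_k$) or follows from the standard pairwise independence of AMS. The only minor point worth emphasising in the write-up is that the role of the auxiliary Rademacher signs $\sigma_i$ is essentially cosmetic, needed only in Lemma~\ref{lem:SRHT_AMS} for the Hanson--Wright step but irrelevant for the column-norm/pairwise inner-product structural bounds of Assumption~\ref{assumption:OCE}. This observation, together with the fact that AMS columns trivially satisfy $\|S_{*,i}\|_2^2 = 1$ by construction, immediately verifies Assumption~\ref{assumption:OCE} for AMS and hence, combined with Lemma~\ref{lem:ams_imply_ose}, gives the $\ell_\infty$-guarantee regression result for AMS as a corollary.
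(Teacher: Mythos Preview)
Your proposal is correct and follows essentially the same route as the paper: fix a pair $i\neq j$, apply Hoeffding's inequality to the bounded, independent, zero-mean summands $S_{k,i}S_{k,j}$, and union bound over all $n^2$ pairs. The only cosmetic difference is that you first reduce $|\langle \overline{S}_{*,i},\overline{S}_{*,j}\rangle|$ to $|\langle S_{*,i},S_{*,j}\rangle|$ via the sign cancellation, whereas the paper applies Hoeffding directly to $\overline{S}$ after noting the two columns are independent; both yield the same bound up to absolute constants.
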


\begin{proof}
    Note for any fixed $i \not= j$, $\overline{S}_{*, i}$ and $\overline{S}_{*, j}$ are independent. By Hoeffding inequality (Lemma~\ref{lem:hoeffding}), 
    we have
    \begin{align*}
        & ~ \Pr[|\langle \overline{S}_{*, i}, \overline{S}_{*, j} \rangle| \geq t] \\
        \leq & ~ 2 \exp{(-\frac{2t^2}{\sum_{i = 1}^m (\frac{1}{m} - (-\frac{1}{m}))^2})} \\
        \leq & ~ 2e^{-t^2m/2},
    \end{align*}
    where the second step follows from simple algebra ($m \cdot 1/m^2 = 1/m$).

    Choosing $t = \sqrt{2\log{(2n^2/\delta)}}/\sqrt{m}$, we have
    \begin{align*}
        \Pr[|\langle \overline{S}_{*, i}, \overline{S}_{*, j} \rangle| \geq \sqrt{2\log{(2n^2/\delta)}}/\sqrt{m}] \leq \frac{\delta}{n^2},
    \end{align*}
    union bound over all $n^2$ pairs of columns gives the desired result.
\end{proof}

\begin{lemma}[Gaussian column norm bound]\label{lem:gaussian_S_i_S_i}
Let $S\in \R^{m\times n}$ be a random Gaussian matrix. 

Then, for any $i\in [n]$, we have
\begin{align*}
    \Pr[|\|S_{*,i}\|_2^2-1 |\geq \frac{\sqrt{\log(n/\delta)}}{\sqrt{m}}] \leq & ~ \Theta(\delta).
\end{align*}
\end{lemma}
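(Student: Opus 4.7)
The plan is to recognize that $m \|S_{*,i}\|_2^2$ is exactly a chi-squared random variable with $m$ degrees of freedom, and then invoke the Laurent--Massart tail bound already recorded as Lemma~\ref{lem:lm}. This is conceptually the same type of argument as for the pairwise inner product bound (Lemma~\ref{lem:gaussian_S_i_S_j}), except that here the two columns coincide so the random variable is a sum of squared Gaussians rather than a product of independent Gaussians.

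First, since every entry of $S$ is independently $\mathcal{N}(0,1/m)$, the rescaled entries $\sqrt{m}\,S_{k,i}$ are i.i.d.\ standard Gaussians, so
\[
    m \|S_{*,i}\|_2^2 \;=\; \sum_{k=1}^m (\sqrt{m}\, S_{k,i})^2 \;\sim\; \chi_m^2,
\]
which in particular has expectation $m$, giving $\E[\|S_{*,i}\|_2^2] = 1$. Next, I would apply Lemma~\ref{lem:lm} with $k=m$, $\sigma^2 = 1$, and $t = \log(n/\delta)$, yielding the two one-sided tails
\begin{align*}
    \Pr[\,m\|S_{*,i}\|_2^2 - m \geq 2\sqrt{m \log(n/\delta)} + 2\log(n/\delta)\,] &\leq \delta/n, \\
    \Pr[\,m - m\|S_{*,i}\|_2^2 \geq 2\sqrt{m \log(n/\delta)}\,] &\leq \delta/n.
\end{align*}
Dividing through by $m$ and combining via a union bound gives, for an absolute constant $C$,
\[
    \Pr\!\left[\,\bigl|\|S_{*,i}\|_2^2 - 1\bigr| \geq C\sqrt{\tfrac{\log(n/\delta)}{m}}\,\right] \leq \tfrac{2\delta}{n} \leq \Theta(\delta),
\]
provided that $m \geq \log(n/\delta)$, so that the quadratic term $2\log(n/\delta)/m$ is absorbed into the $\sqrt{\log(n/\delta)/m}$ term up to constants. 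This is exactly the claimed conclusion.

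This is a direct chi-squared concentration, so I do not expect any real obstacle; the only subtlety is verifying that the quadratic $2\log(n/\delta)/m$ term from the upper tail in Lemma~\ref{lem:lm} does not dominate the linear-in-$\sqrt{1/m}$ scale that appears in the statement, which holds automatically in the regime $m \geq \log(n/\delta)$ that is mandated anyway by the sketch-dimension constraints appearing elsewhere in the paper (e.g.\ Lemma~\ref{lem:gaussian_imply_ose}).
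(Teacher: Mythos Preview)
Your proposal is correct and follows essentially the same route as the paper: recognize the squared column norm as a $\chi_m^2$ random variable (after rescaling), apply the Laurent--Massart bound of Lemma~\ref{lem:lm} with $t=\log(n/\delta)$, and conclude. You are in fact slightly more careful than the paper's write-up, since you explicitly track the additive $2\log(n/\delta)/m$ term from the upper tail and note that it is absorbed once $m\ge\log(n/\delta)$; the paper silently drops this term.
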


\begin{proof}
For any column $S_{*,i}$, note that $\|S_{*,i}\|_2^2\sim \chi_m^2$, each one with zero mean and variance $\frac{1}{m}$. 

By Lemma~\ref{lem:lm}, we have
\begin{align*}
    \Pr[|\|S_{*,i}\|_2^2-1 |\geq 2\frac{\sqrt{t}}{\sqrt{m}}] \leq & ~ 2\exp(-t).
\end{align*}
Setting $t=\log(n/\delta)$, we have
\begin{align*}
    \Pr[|\|S_{*,i}\|_2^2-1 |\geq C\cdot \frac{\sqrt{\log(n/\delta)}}{\sqrt{m}}] \leq & ~ \delta/n,
\end{align*}
the proof is concluded by union bounding over all $n$ columns.
\end{proof}

We conclude random Gaussian and AMS are {\sf OCE}'s.

\begin{lemma}[Gaussian {\sf OCE}]
\label{lem:gaussian_imply_oce}
Let $S\in \R^{m\times n}$ be a random Gaussian matrix, then $S$ is a $(\log^{1.5}(n/\delta),\delta,n)$-{\sf OCE}.
\end{lemma}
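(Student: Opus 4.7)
My plan is to derive the Gaussian {\sf OCE} bound as a direct corollary of the general framework already established in Lemma~\ref{lem:SRHT_AMS}, which reduces {\sf OCE} for any sketching matrix satisfying Assumption~\ref{assumption:OCE} to the two column-level bounds (pairwise inner products and column norms). So the main task is to verify that the random Gaussian matrix meets Assumption~\ref{assumption:OCE}, and then argue that the sign-symmetrization step inside the proof of Lemma~\ref{lem:SRHT_AMS} is still legitimate for Gaussians.

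First I would check the two conditions of Assumption~\ref{assumption:OCE}. The pairwise column inner product bound
\begin{align*}
\Pr\Bigl[\max_{i \neq j} |\langle S_{*,i}, S_{*,j}\rangle| \ge C \cdot \tfrac{\sqrt{\log(n/\delta)}}{\sqrt{m}}\Bigr] \le \Theta(\delta)
\end{align*}
is exactly the content of Lemma~\ref{lem:gaussian_S_i_S_j}, while the column norm bound
\begin{align*}
\Pr\Bigl[|\|S_{*,i}\|_2^2 - 1| \ge C \cdot \tfrac{\sqrt{\log(n/\delta)}}{\sqrt{m}}\Bigr] \le \Theta(\delta)
\end{align*}
for every $i \in [n]$ is Lemma~\ref{lem:gaussian_S_i_S_i} (followed by a union bound over $n$ columns, already incorporated in its statement). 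Thus a random Gaussian $S$ fulfills both parts of Assumption~\ref{assumption:OCE}.

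Second, the proof of Lemma~\ref{lem:SRHT_AMS} writes $S_{*,i} = \sigma_i \ov S_{*,i}$ for independent Rademacher $\sigma_i$'s and then applies Hanson-Wright to the quadratic form $\sigma^\top A^\circ \sigma$ built from $\ov S$. For {\sf SRHT}/{\sf SRCT}, the $\sigma_i$'s arise from the diagonal sign matrix $D$ baked into the construction. For a Gaussian $S$ no such signs are built in, but the symmetry of the Gaussian distribution makes the argument go through almost for free: since the columns of $S$ are independent and each column is symmetric about zero, for any independent Rademacher vector $\sigma \in \{\pm 1\}^n$ we have the distributional equality
\begin{align*}
(S_{*,1}, \dots, S_{*,n}) \stackrel{d}{=} (\sigma_1 S_{*,1}, \dots, \sigma_n S_{*,n}).
\end{align*}
Hence I may replace $S$ by $S \cdot \mathrm{diag}(\sigma)$ without changing the law, which introduces precisely the sign structure that Lemma~\ref{lem:SRHT_AMS} requires.

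With these two ingredients in place, the remainder of the proof is a direct application of Lemma~\ref{lem:SRHT_AMS}: expand $g^\top S^\top S h - g^\top h$ as a diagonal part (bounded using the column norm estimate) plus an off-diagonal quadratic form in $\sigma$ (bounded via Hanson-Wright conditioned on the pairwise inner product event), yielding the $\frac{\log^{1.5}(n/\delta)}{\sqrt m}\|g\|_2\|h\|_2$ deviation bound with failure probability $\Theta(\delta)$. I do not anticipate a real obstacle here; the only subtle point is justifying the sign-symmetrization, which is immediate from Gaussian symmetry. Absorbing the $\Theta(\delta)$ into a rescaled $\delta$ then gives the claimed $(\log^{1.5}(n/\delta), \delta, n)$-{\sf OCE} property.
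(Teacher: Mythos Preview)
Your proposal is correct and follows essentially the same route as the paper: verify Assumption~\ref{assumption:OCE} via Lemma~\ref{lem:gaussian_S_i_S_j} and Lemma~\ref{lem:gaussian_S_i_S_i}, then invoke Lemma~\ref{lem:SRHT_AMS}. Your explicit justification of the sign-symmetrization step (that Gaussian columns are symmetric so one may freely introduce an independent Rademacher diagonal) is a point the paper glosses over, but it is a welcome clarification since the proof of Lemma~\ref{lem:SRHT_AMS} does rely on writing $S_{*,i}=\sigma_i\ov S_{*,i}$ even though this decomposition is not formally part of Assumption~\ref{assumption:OCE}.
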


\begin{proof}
By Lemma~\ref{lem:gaussian_S_i_S_j} and Lemma~\ref{lem:gaussian_S_i_S_i}, we know both pairwise inner product bound and column norm bound hold and thus, by Lemma~\ref{lem:SRHT_AMS}, $S$ satisfies the desired {\sf OCE} property.
\end{proof}

\begin{lemma}[AMS {\sf OCE}]
\label{lem:ams_imply_oce}
Let $S\in \R^{m\times n}$ be an AMS matrix, then $S$ is a $(\log^{1.5}(n/\delta),\delta,n)$-{\sf OCE}.
\end{lemma}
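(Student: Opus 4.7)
The plan is to follow the exact template of Lemma~\ref{lem:gaussian_imply_oce}: verify that an AMS matrix $S$ satisfies both conditions of Assumption~\ref{assumption:OCE}, and then invoke the generic reduction in Lemma~\ref{lem:SRHT_AMS} to extract the OCE parameter $\beta = \log^{1.5}(n/\delta)$.

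First I would handle the column norm bound, which is essentially trivial for AMS. By Definition~\ref{def:AMS}, every entry of $S$ lies in $\{-1/\sqrt{m},+1/\sqrt{m}\}$, so deterministically
\begin{align*}
\|S_{*,i}\|_2^2 = \sum_{k=1}^m S_{k,i}^2 = m \cdot \tfrac{1}{m} = 1
\end{align*}
for every $i\in[n]$. Hence the column norm condition of Assumption~\ref{assumption:OCE} holds with probability one, much stronger than the required $1-\delta$ tail.

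Second I would invoke the pairwise inner product bound, which is precisely Lemma~\ref{lem:AMS}. That lemma is stated in terms of the sign-randomized columns $\overline{S}_{*,i} = \sigma_i S_{*,i}$, but since multiplying by an independent Rademacher sign preserves absolute values, $|\langle \overline{S}_{*,i},\overline{S}_{*,j}\rangle| = |\langle S_{*,i}, S_{*,j}\rangle|$ and the bound
\begin{align*}
\max_{i\neq j}|\langle S_{*,i}, S_{*,j}\rangle| \lesssim \frac{\sqrt{\log(n/\delta)}}{\sqrt{m}}
\end{align*}
transfers verbatim to $S$. Equivalently, the Hoeffding step inside Lemma~\ref{lem:AMS} only uses that the products $S_{k,i}S_{k,j}$ for fixed $i\neq j$ are independent across rows $k$ (since different rows of AMS use independent hash functions) and bounded in $[-1/m,1/m]$, followed by a union bound over the $\binom{n}{2}$ column pairs.

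With both conditions of Assumption~\ref{assumption:OCE} confirmed, Lemma~\ref{lem:SRHT_AMS} directly yields $|g^\top S^\top S h - g^\top h| \leq \log^{1.5}(n/\delta)/\sqrt{m}\cdot\|g\|_2\|h\|_2$ with probability at least $1-\Theta(\delta)$ for any fixed $g,h\in\R^n$, which after absorbing the constant inside the logarithm is exactly the $(\log^{1.5}(n/\delta),\delta,n)$-OCE property required. There is no substantive obstacle; the only mildly delicate point is that the proof of Lemma~\ref{lem:SRHT_AMS} relies on the sign factorization $S_{*,i} = \sigma_i \overline{S}_{*,i}$, but this is automatic for AMS because each column is already an independent Rademacher vector (scaled by $1/\sqrt{m}$) and is thus invariant in distribution under multiplication by a fresh independent sign.
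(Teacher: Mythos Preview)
Your proposal is correct and follows essentially the same approach as the paper: the paper's proof simply says ``similar to Lemma~\ref{lem:gaussian_imply_oce}; the column norm bound follows from definition,'' and you have unpacked exactly that---deterministic unit columns, the pairwise inner product bound from Lemma~\ref{lem:AMS}, and then the generic reduction via Lemma~\ref{lem:SRHT_AMS}. Your additional remarks on the sign factorization and on why signs do not affect absolute inner products are consistent with how the paper handles the $\overline S$ versus $S$ distinction.
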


\begin{proof}
The proof is similar to Lemma~\ref{lem:gaussian_imply_oce}. The column norm bound follows from definition.
\end{proof}

\ifdefined\isarxiv
\bibliographystyle{alpha}
\bibliography{ref}
\else

\fi




\end{document}